\pgfplotsset{compat=1.18} 
\def\paragraph{\@startsection{paragraph}{4}%
  \z@\z@{-\fontdimen2\font}%
  {\normalfont\bfseries}}
\newcommand*\diff{\mathop{}\!\mathrm{d}}
\def\MU{\mathrm{MU}}
\renewcommand{\subset}{\subseteq}
\newlength{\twinwd}
\def\one{\mathbf{1}}
\def\LCS{L}
\def\UCS{U}
\def\Re{\mathbf{R}} \def\R{\mathbf{R}}
\def\E{\mathbf{E}}
\def\ep{\varepsilon}
\def\la{\lambda}
\def\da{\delta}
\def\phi{\varphi}
\def\sa{\sigma}
\def\one{\mathbf{1}}
\def\os{\emptyset}
\newcommand{\df}[1]{\textit{#1}}
\newcommand\abss[1]{\|#1 \|}
\DeclarePairedDelimiter{\norm}{\lVert}{\rVert}
\def\smallskip{\vspace{.3cm}}
\def\medskip{\vspace{.8cm}}
\def\bigskip{\vspace{1.5cm}}
\newtheorem{theorem}{Theorem}
\newtheorem{proposition}{Proposition}
\newtheorem{lemma}{Lemma}
\theoremstyle{remark}
\newtheorem{remark}{Remark}
\newtheorem{example}{Example}
\theoremstyle{definition}
\newtheorem{definition}{Definition}
\def\UPS{\mathrm{UPS}}
\renewcommand\paragraph{\@startsection{section}{1}{\z@}{-2ex \@plus -0.5ex \@minus -.2ex}{-1em}{\bfseries}} \makeatother
\title[Multimatum]{Compromise by ``multimatum''}\thanks{ We are grateful to seminar audiences in UCLA, the Coalition Theory Network Workshop, the Parisian Behavioral Economics symposium, the Liverpool Economic Theory workshop, the France-Spain Meeting on Microeconomic Theory, the Paris-London Economic Theory Workshop and the University of Padova. We thank Carlos Al\'os-Ferrer, Simon Board, Chris Chambers, Julien Combe, Ritesh Jain, Hervé Moulin, Ludvig Sinander, Arunava Sen, Yuki Tamura, Olivier Tercieux, and William Thomson for useful remarks and comments. We are also grateful to the France-Berkeley fund for financial support. Mat\'ias N\'u\~nez  is supported by a grant of the French National Research Agency (ANR), "Investissements d'Avenir" (LabEx Ecodec/ANR-11-LABX-0047). }
\author{Federico Echenique}\address{UC Berkeley \\ \href{mailto:fede@econ.berkeley.edu}{\textup{\texttt{fede@econ.berkeley.edu}}}}
\author{Mat\'ias N\'u\~nez}
\address{CREST, CNRS, Institut Polytechnique de Paris  \\ \href{matias.nunez@polytechnique.edu}{\textup{\texttt{matias.nunez@polytechnique.edu}}}}
\begin{document}

\begin{abstract}
We propose a solution and a mechanism for two-agent social choice problems with large (infinite) policy spaces. Our solution is an efficient compromise rule between the two agents, built on a common cardinalization of their preferences. Our mechanism, the \emph{multimatum} has the two players alternate in proposing sets of alternatives from which the other must choose. Our main result shows that the multimatum fully implements our compromise solution in subgame perfect Nash equilibrium.  \\
We demonstrate the power and versatility of this approach through applications to political economy, other-regarding preferences, and facility location.
  \\ \vspace{.2cm}
\noindent\textbf{JEL Codes}: D71, D72. \\ \vspace{.2cm}
\noindent\textbf{Keywords:} Compromise, Subgame-perfect implementation, Mechanism.
\end{abstract}

\maketitle

\clearpage \thispagestyle{empty}
\begin{spacing}{1.05}
\tableofcontents    
\end{spacing}
\clearpage

\setcounter{page}{1}
\section{Introduction}
\subsection*{Motivation: \textsl{Hoc iudicium iustum est.}}

It is Rome, 59 BC. Gaius Julius Caesar and Marcus Calpurnius Bibulus are co-consuls in bitter disagreement over public policy. Historically, the dispute ended badly. Our paper proposes a way out for two-party conflicts like theirs. Our solution has two components: a social welfare criterion that defines a fair compromise and an institutional mechanism ensuring that the compromise is reached.

Any theory of compromise requires interpersonal utility comparisons. Suppose policies are represented as points $x \in [0,1]$. If Caesar prefers policy $x$ over $x'$, while Bibulus has the opposite preference (as was often the case), how do we weigh Caesar's gain against Bibulus's loss? Ordinal preferences alone cannot solve this. This fact has been shown formally by \cite{shapley1969utility}, and emphasized by \cite{myerson1977two}. Economists often assume quasilinearity over money to establish a one-for-one tradeoff between the agents' utilities. But when money is absent, utilities are conceptually and philosophically difficult to compare.  

If we map outcomes into utility values (utils), we need to make sense of choices among points in the utility possibility set. For a specific example, consider the utility possibility set depicted in Figure~\ref{fig:intro}. Imagine that we want to evaluate outcomes where Caesar obtains a utility of $0.8$ or more. An increase in Caesar's utility from $0.8$ to $1$, for instance, corresponds to a decrease for Bibulus from (about) $0.4$ to $0$: the size of $\Delta u_C$ is different from the size of $\Delta u_B$. Despite adopting the same range of values, the two utilities are measured on different scales. We cannot simply trade them off one-for-one; they are apples and oranges.

To make an apples-to-apples comparison, we need a common yardstick. We propose a specific \df{cardinalization} of the consuls' preferences. For any policy $x$, consider the lower contour set $L_i(x)$, the set of policies that consul $i \in \{C,B\}$ deems worse than or equal to $x$. We represent preferences using the uniform probability measure of these sets: $\la(L_i(x))$. The intuition is straightforward: we measure a policy's worth to an agent by asking what fraction of all possible policies are worse than it. The resulting utility possibility set is depicted on the right in Figure~\ref{fig:intro}. Note the $-1$ slope of the Pareto frontier, ensuring that the size of the sets of utilities being evaluated in any tradeoff is the same. The slope $=-1$ frontier looks like what we would get in a model with money and quasi-linear preferences.

We call $\la(L_i(x))$ a \df{canonical} utility representation. Because it maps both agents' preferences into the same shared probability space, these utilities become interpersonally comparable. We formalize the ideas in Theorem~\ref{thm:commonnu}, showing that a pair of utility representations is comparable in our sense if and only if they can be written as $\nu(L_i(x))$ for some probability measure $\nu$. Figure~\ref{fig:intro} on the right depicts the ``straightened'' utility possibility set, which corresponds to our common cardinalization (Appendix~\ref{sec:appexamples} has additional examples of this transformation).

\begin{figure}[t]
    \centering
    \begin{tikzpicture}[scale=0.5]
        \begin{axis}[
            width=0.6\textwidth,
            height=0.6\textwidth,
            title={\textbf{Utility Possibility Set}},
            xlabel={$u_C(x)$},
            ylabel={$u_B(x)$},
            xmin=0, xmax=1.1,
            ymin=0, ymax=1.1,
            axis lines=left,
            enlargelimits=false
        ]
        
                        \addplot [
            domain=0:1, 
            samples=200, 
            color=blue,
            very thick,
        ]
        {sqrt(1 - x)};

        \end{axis}
    \end{tikzpicture}
    \begin{tikzpicture}[scale=0.5]
        \begin{axis}[
            width=0.6\textwidth,
            height=0.6\textwidth,
            title={\textbf{Utility Possibility Set}},
            xlabel={$u_C(x)$},
            ylabel={$u_B(x)$},
            xmin=0, xmax=1.1,
            ymin=0, ymax=1.1,
            axis lines=left,
            enlargelimits=false
        ]
        
                \draw[thick,blue] (0,1) -- (1,0);
\draw[dotted,thin] (0,0) -- (.8,.8);
\draw[thin,purple] (0.5,.9) -- (.5,.5) -- (.9,.5);
\addplot[mark=*, black] coordinates {(1/2,1/2)};
        \end{axis}
    \end{tikzpicture}
    \caption{The utility possibility set when  $u_C(x)=1-x$ and $u_B(x)=\sqrt{x}$ (left), and under common canonical utilities (right).}
\label{fig:intro}
  \end{figure}

Armed with a common cardinalization, we can now define a fair social welfare criterion. A \df{compromise} between Caesar and Bibulus is a policy that maximizes the minimum canonical utility:
\[ \max_{x\in X} \min \{\la(L_C(x)),\la(L_B(x)) \}.\]
The compromise is shown in utility space in Figure~\ref{fig:intro}.\footnote{While we focus on the Lebesgue measure for utils on the interval $[0,1]$, we shall see that the measure on outcomes, or policies, $X$ can be quite arbitrary. If the measure is $\nu$ then the compromise is a solution to $\max_{x\in X} \min \{\nu(L_C(x)),\nu(L_B(x)) \}$.}

Defining a compromise is only half the problem; we must also ensure the consuls actually reach it. To this end, we introduce the \df{multimatum}: a mechanism, or institution, that an uninformed third party can implement to ensure that a compromise is reached. We imagine a framer setting up a constitutional mechanism, or an arbitrator choosing a bargaining protocol. The multimatum is based on a proposal followed by a single round of counterproposals. 

First, Caesar will offer a closed set $A_C$ of policies. Bibulus can choose $x\in A_C$ to be implemented or counter-propose with a closed set $A_B$. The set $A_B$ is final; take-it-or-leave-it. Caesar must choose a policy from $A_B$. This would, however,  give Bibulus all the bargaining power. It would not be a compromise at all.  To restore the balance of power, we impose one crucial volumetric rule: the measure of Bibulus's counterproposal cannot be less than the measure of Caesar's initial offer ($\la(A_B) \geq \la(A_C)$). This constraint incentivizes Caesar to offer a relatively large, generous initial $A_C$ to preempt Bibulus from replying with a small, highly targeted $A_B$. 

Our \emph{main result} (Theorem~\ref{th:implementation}) establishes that the institutional balance inherent in the multimatum works: the subgame-perfect Nash equilibrium outcomes of the multimatum coincide exactly with the set of compromise solutions.

\subsection*{Our contribution.} Our paper examines a broad class of problems, of which the example of the Roman co-consuls is a particular instance. These problems feature two agents bargaining over a common policy or outcome; they are widespread in the political economy and public goods literature. A (very incomplete) list of such models is: \cite{rubinstein1982perfect}, \cite{AlesinaRosenthal1995}, \cite{AlesinaTabellini1990}, \cite{McCartyPooleRosenthal2006}, \cite{woonanderson2012}, \cite{bowen2014mandatory}, \cite{EraslanEvdokimovZapal2020}, and \cite{alikartikkleiner23}.

There are two sides to our paper's \textit{sestertius}. One side is the rule we have described. It reflects a compromise between two parties, Caesar and Bibulus, in the Roman example. Crucially, the rule features a common cardinalization of the two parties' ordinal utilities, which makes cardinal comparisons more palatable. Outside of highly structured environments with quasi-linear preferences over money or lotteries and expected utility, the literature is silent on how to tackle the comparability of utilities. We characterize (Theorem~\ref{thm:commonnu}) the utility representations that allow for interpersonal comparisons of utility, in a sense that generalizes the slope $=-1$ property from our previous discussion. Our proposal offers a pragmatic way of comparing utilities; and the compromise rule itself is, we believe, normatively attractive and intuitively captures the notion of a fair middle ground. 

The other side of our proposal deals with the implementation of the rule. How would an impartial framer or arbitrator (perhaps the senate in the Roman example) ensure that the rule is followed and the negotiation not devolve into barbarism? Our answer is the simple multimatum mechanism. Importantly, the framer does not need to know or elicit the consuls' preferences. The senate can simply enforce the protocol while remaining ignorant of the final objectives of the two consuls. The end result is guaranteed to be a compromise solution. The senate is also reassured that any equilibrium outcome corresponds to the desired rule; there is no need to impose an equilibrium selection criterion, as is so often the case in mechanism design.

We should emphasize the dual role played by the measure $\nu$. The measure is a crucial component of the definition of the rule, as it provides a common cardinal scale on which to make interpersonal utility comparisons. It is also the key idea behind the countervailing forces built into the negotiation protocol. The presence of $\nu$ on both sides of our proposal is what makes it work.

In the paper, we work out two detailed applications. The first is to the political economy model of \cite{bowen2014mandatory}, which features two political parties bargaining over the level of a public good and individual private consumption. We calculate the compromise solution for this model and carry out comparative statics when we vary the parties preferences. The second application is to a model of pure private consumption with other-regarding preferences as in \cite{fehr1999theory}. These preferences effectively render the model one of public goods, as each agent cares about the final allocation to both parties. Again, we can perform comparative statics to understand how the compromise solution depends on the agents' preferences.

In addition to the two detailed applications, we illustrate our results with simple examples of single-peaked preferences, facility-location problems, and random social choice with expected-utility preferences. For each of these widely-studied environments, our proposal provides an appealing solution.

Finally, we suggest a positive interpretation of the multimatum. Our bargaining protocol seems quite natural, and as ``realistic'' as other well known non-cooperative models of bargaining. In practice, real-world bargaining typically unfolds over a limited number of stages, making it reasonable to assume some fixed rule governing when negotiations end. Such rules are common, for example, in the interactions between bicameral legislative bodies. Taken together, our model may therefore provide a plausible representation of how at least some political bargaining actually occurs. 

The rest of the paper is structured as follows. Section~\ref{sec:preferences} introduces the preferences that we consider, derives the basic properties used throughout, and lays down the model. Section~\ref{sec:compromise} characterizes the canonical utility representations as being those that allow for interpersonal comparisons, in the sense that we define in the paper. Section~\ref{sec:result} presents the mechanism, discusses the main result dealing with implementation, and explores several applications. Detailed applications are in Section~\ref{sec:applications}. Section~\ref{sec:discussion} relates our findings to the insights from the implementation literature and the bargaining literature. Section~\ref{sec:conclusion} concludes.

\section{Model\label{sec:preferences}}

\subsection{Preferences in metric spaces}

Let $(X,\rho)$ be a compact and connected metric space, and $\Sigma$ its Borel $\sa$-algebra.  Given is a probability measure $\nu$ on $(X,\Sigma)$. Suppose that $\nu$ has \df{full support} (meaning there is no proper closed subset of $X$ that has probability one) and is non-atomic. The probability space $(X,\Sigma,\nu)$ is fixed throughout the paper. 

A binary relation $\succeq$ on $X$ is a \df{preference relation} if it is complete and transitive.

Consider the following definitions.
The \df{lower contour set} of a preference $\succeq$ at $x\in X$ is the set $L_{\succeq}(x)=\{x'\in X:x\succeq x'\}$; its
   \df{upper contour set} at $x\in X$ is $U_{\succeq}(x)=\{x'\in X:x'\succeq x\}$; and its 
   \df{indifference set} at $x\in X$ is $I_{\succeq}(x)=L_{\succeq}(x)\cap U_{\succeq}(x)$.
  A preference $\succeq$ is \df{continuous} if, for any $x\in X$, its lower and upper contour sets at $x$ are closed. A preference relation $\succeq$ on $X$ is \df{locally strict} if whenever $x\succeq y$, $N_x$ is a neighborhood of $x$ and $N_y$ is a neighborhood of $y$, there exist $x'\in N_x$ and $y'\in N_y$ with $x'\succ y'$. A preference relation $\succeq$ is \df{locally non satiated} if, for any $x\in X$ and any neighborhood $N_x$ of $x$ in $X$ there exists $x'\in N_x$ with $x'\succ x$.

A continuous preference relation has \df{thin indifference curves} if its indifference curves $I_\succeq(x)$ have measure zero and $I^c_\succeq(x)$ is dense in $X$, for all $x\in X$. The denseness of $I^c_\succeq(x)$ is equivalent to $I_\succeq(x)$ having an empty interior.

These definitions are, for the most part, standard. The property of being locally strict was introduced by \cite{border1994dynamic}. The property of having thin indifference curves should be intuitive: it requires that indifferences are ``small,'' both topologically and according to the measure $\nu$. You could say that thin indifferences are both small and rare.

\subsection{Canonical utility representation}

A function $u : X \to \Re$ is a \df{utility representation} of $\succeq$ if for all $x, y \in X$, 
\[ x \succeq y \iff u(x) \ge u(y). \]

A utility representation $u$ is a \df{canonical representation} of $\succeq$ if there exists a (Borel) probability measure $\mu$ on $X$ with $u(x) = \mu(L_{\succeq}(x))$. 

By Debreu's theorem (\cite{debreu1964continuity}), our assumptions guarantee that every continuous preference $\succeq$ admits a continuous utility representation. Normalizing, we can, without loss of generality, take this utility to be surjective onto $[0,1]$. Under the same assumptions, any such utility representation can be cast in the canonical form (see Theorem~\ref{thm:commonnu}, Part 1). Furthermore, Debreu’s original construction is, in essence, constructed by measuring lower contour sets; precisely the perspective underlying our canonical form. This is why we regard the term canonical as particularly fitting.

\subsection{Convex preferences}\label{sec:convex}

When $X$ is a convex subset of $\Re^d$, we may use the added structure to introduce and exploit additional properties of a preference relation. Consequently, as we shall see, many economic applications are special cases of our model. When $X\subseteq \Re^d$ is convex, we say that $\succeq$ is \df{convex} if its upper contour sets $U_\succeq(x)$ are convex sets for all $x\in X$. We say that $\succeq$ is \df{explicitly convex} if it is convex and, moreover, for any $x,y\in X$ and $\theta \in (0,1)$, if $y\succ x$, then $\theta y + (1-\theta)x \succ x$. Finally, a preference $\succeq$ is \df{linear} if there exists $v\in\Re^d$ with the property that $x\succeq y$ iff $v\cdot x\geq v\cdot y$. 

The property of being explicitly convex is a minor strengthening of convexity. It is compatible with indifference curves that have ``flat'' segments. For example, linear preferences are explicitly convex.\footnote{The assumption of strict convexity is common in economics, but it is significantly stronger than explicit convexity and is not satisfied by linear preferences. Explicit convexity is known as the ordinal property that ensures that local optima are global optima, for which convexity alone does not suffice (see, for example, Theorem 192 in \cite{BorderMax2015}).}

Our first result establishes that many economic environments satisfy the properties that we shall impose. Indeed, many economic applications involve a convex set of outcomes and convex preferences. Proposition~\ref{prop:thinindiff} provides rather weak conditions under which our results are applicable to such environments.
  
\begin{proposition}\label{prop:thinindiff}Suppose that $X$ is a full-dimensional, convex, and compact subset of $\Re^d$ endowed with the usual Euclidean norm, and that the probability measure $\nu$ is absolutely continuous with respect to the Lebesgue measure on $\Re^d$. Let $\succeq$ be an explicitly convex and locally strict preference on $X$. Then $\succeq$ has thin indifference curves.  
\end{proposition}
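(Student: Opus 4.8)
The statement asks for two things about each indifference set $I_\succeq(x_0)=L_\succeq(x_0)\cap U_\succeq(x_0)$: that it is $\nu$-null and that it has empty interior (the latter equivalent, as the paper notes, to density of its complement). The plan is to dispatch the empty-interior claim immediately and spend the real effort on the measure-zero claim, which I would reduce to a fact about the boundary of a convex set. For empty interior: if some open ball $B$ were contained in $I_\succeq(x_0)$, I would apply local strictness to the reflexive relation $z\succeq z$ for the ball's center $z$, taking both neighborhoods equal to $B$, obtaining $a,b\in B$ with $a\succ b$; but $a,b\in I_\succeq(x_0)$ forces $a\sim x_0\sim b$, contradicting $a\succ b$. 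So $I_\succeq(x_0)$ has empty interior.

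For the measure-zero claim I would exploit that $U:=U_\succeq(x_0)$ is convex. The classical fact I intend to invoke is that the topological boundary $\partial U$ of a convex subset of $\Re^d$ has Lebesgue measure zero; since $\nu$ is absolutely continuous with respect to Lebesgue measure, it then suffices to prove the inclusion $I_\succeq(x_0)\subseteq \partial U$. Because $U$ is convex, $U\setminus \operatorname{int}(U)\subseteq\partial U$, so the whole task collapses to the single key claim that no point indifferent to $x_0$ lies in the interior of $U$; equivalently, $\operatorname{int}(U)\subseteq U^{>}$, where $U^{>}:=\{y:y\succ x_0\}$ is the strict upper contour set.

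To prove this key claim I would first record, from convexity of the weak upper contour sets together with completeness and transitivity, that $U^{>}$ is itself convex: given $y_1,y_2\succ x_0$ with, say, $y_1\succeq y_2$, every convex combination lies in the convex set $U_\succeq(y_2)$ and hence is $\succeq y_2\succ x_0$. Now take $z\in\operatorname{int}(U)$ and suppose toward a contradiction that $z\sim x_0$. Choose a ball $B=B(z,r)\subseteq U$. Local strictness applied to $z\succeq z$ on $B$ yields $a,b\in B$ with $a\succ b$; since $b\in U$ gives $b\succeq x_0\sim z$, we have $a\succ b\succeq z$, hence $a\succ z$. Let $a'=2z-a$ be the reflection of $a$ through $z$; then $a'\in B\subseteq U$, so $a'\succeq z$. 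If $a'\succ z$, convexity of $U^{>}$ makes the midpoint $z=\tfrac12 a+\tfrac12 a'$ strictly preferred to $x_0$, i.e.\ $z\succ z$, which is impossible. If instead $a'\sim z$, then $a\succ a'$, and explicit convexity gives $z=\tfrac12 a+\tfrac12 a'\succ a'\sim z$, again the impossibility $z\succ z$. This contradiction proves the claim, whence $I_\succeq(x_0)\subseteq U\setminus\operatorname{int}(U)\subseteq\partial U$ is Lebesgue-null and therefore $\nu$-null.

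I expect the key claim to be the main obstacle, and within it the delicate case is $a'\sim z$: here convexity alone is powerless, since $a'$ sits on the indifference surface rather than strictly inside $U^{>}$, and it is precisely explicit convexity that rules out an interior ``flat'' indifferent point---consistent with the paper's remark that explicit convexity, unlike bare convexity, forces local optima to be global. Full-dimensionality and compactness of $X$ serve only to keep the setting nondegenerate, so that interiors and reflections make sense; the substance is entirely in the contour-set geometry. Assembling the two parts gives that $I_\succeq(x_0)$ is $\nu$-null with empty interior for every $x_0$, i.e.\ $\succeq$ has thin indifference curves.
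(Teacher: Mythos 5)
Your proof is correct, and although it shares the paper's overall skeleton---trap $I_\succeq(x_0)$ inside the boundary of the convex upper contour set $U_\succeq(x_0)$ and conclude via absolute continuity---both key steps are executed differently. For the empty-interior claim you apply local strictness to the reflexive instance $z\succeq z$ with both neighborhoods equal to a ball inside the putative interior of $I_\succeq(x_0)$, which needs no convexity at all; the paper instead runs a two-case argument in which explicit convexity is used to push a nearby strictly-better point back toward $y$. For the measure-zero claim the paper upgrades its density statement to the identity $I_\succeq(x_0)=\partial U_\succeq(x_0)$ and then proves by hand, via the dilation of $U^o_\succeq(x_0)$ by $\tfrac{1}{1-\ep}$, that this boundary is Lebesgue-null (treating the empty-interior case separately through the affine hull); you prove only the inclusion $I_\succeq(x_0)\subseteq \partial U_\succeq(x_0)$---which is all that is needed---by the reflection argument $a'=2z-a$, whose delicate subcase $a'\sim z$ is precisely where explicit convexity enters, and you then outsource the nullity of convex boundaries to a classical fact that also absorbs the degenerate empty-interior case automatically. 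The trade-off is that your route isolates more sharply where each hypothesis (local strictness, convexity of $U_\succeq$, explicit convexity) is indispensable on the order-theoretic side, while being less self-contained on the measure-theoretic side than the paper's explicit scaling computation. Both arguments are complete and correct.
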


The proof of Proposition~\ref{prop:thinindiff} is in Section~\ref{sec:pfpropTIC}. All the proofs in the paper are relegated to Section~\ref{sec:proofs}.
 
\subsection{The model}\label{sec:model}

 A \df{two-agent social choice problem} is a tuple ${((X,\rho,\nu),\succeq_1,\succeq_2)}$ in which $(X,\rho)$ is a compact metric space, $\nu$ is a full-support, non-atomic Borel probability measure on $(X,\rho)$; and for $i=1,2$, $\succeq_i$  is a continuous preference relation on $X$ with thin indifference curves. In our model, $X$ represents the set of possible outcomes or policies that may be chosen collectively. 
 
We often refer simply to $X$, and take $\rho$ and $\nu$ to be understood from context. For example, when $X$ is Euclidean we assume the usual norm and Lebesgue measure.
 
Let $\succeq_1$ and $\succeq_2$ be two locally strict preference relations. For each $\succeq_i$ with $i=1,2$ and each $x\in X$, we write $L_i(x)$ rather than $L_{\succeq_i}(x)$ to denote the lower contour set at $x$.

\subsection{The compromise solution}\label{sec:compromise}

Our notion of compromise relies on a canonical utility representation for each agent. Crucially, the canonical representation uses the same measure for both agents: it is a \emph{common cardinalization} of the agents' preferences.

\begin{definition}An outcome $x^*\in X$ is a \df{compromise solution} if it solves the problem:
\[
\max_{x\in X} \min\{\nu (L_1(x)), \nu (L_2(x)) \}.
\]
\end{definition}

\begin{remark} By Lemma~\ref{lem:cont} and the compactness of $X$, there exists at least one compromise solution.
\end{remark}

Note that we can always choose, as our utilities, the canonical representations defined by $\nu$. The question is what the property of a common cardinalization means. When is a pair of utility representations $(u_1,u_2)$ a pair of canonical representations with a common measure?  We argue that this property underlies a basic notion of interpersonal comparability, a generalization of the one-for-one Pareto frontier in Figure~\ref{fig:intro} that we described in the introduction. 

Consider the \df{utility possibility set}
\[
\UPS = \{(u_1(x),u_2(x)) \colon x\in X \},
\] given a pair of utility representations $u_1$ and $u_2$ for, respectively, $\succeq_1$ and $\succeq_2$.

\begin{theorem}\label{thm:commonnu}
Let $((X,\rho),\succeq_1,\succeq_2)$  be a two-agent social choice problem. Let $u_i$ be a utility representation of $\succeq_i$, $i=1,2$.
\begin{enumerate}
    \item For each $i \in \{1, 2\}$, $\succeq_i$ admits a canonical utility representation by means of a probability measure $\nu_i$. 
    \item Under the measure $\nu_i$, the push-forward measure $(u_i)_*\nu_i$ on $[0,1]$ is the Lebesgue measure.
    \item There exists a common Borel probability measure $\nu$ on $X$ (i.e., we may take $\nu_1 = \nu_2 = \nu$) satisfying properties (1) and (2) simultaneously if and only if the following condition holds: 
    for every Borel set $E \subseteq [0,1]$,
    \[ \la(E) \le \la(\UPS(E)), \]
    where $\la$ is the Lebesgue measure on $[0,1]$ and $\UPS(E)$ is given by
    \[ \UPS(E) = \{ u_2(z) \in [0,1] : z \in X \text{ and } u_1(z) \in E \}. \]
\end{enumerate}
\end{theorem}

The condition in Statement 3 of Theorem~\ref{thm:commonnu} is a direct generalization of the simple comparison we used for Figure~\ref{fig:intro} in the introduction. There, we focused on the utilities in $[0.8,1]$ for one agent and compared them with the corresponding utilities for the other agent, observing that they formed a set of measure $0.4$. This example exposed the absence of a common utility scale: the sense that the density of underlying utility values was different for each of the two agents. Theorem~\ref{thm:commonnu} formalizes this idea via a structural condition on the projections of $\UPS$ onto each player's utility coordinates. See Section~\ref{sec:properties} for a deeper discussion of this property.

\subsection{Regularity}

Our main result applies to situations where the compromise solution is, in some general sense, ``interior.''

\begin{definition}
  A problem ${((X,\rho,\nu),\succeq_1,\succeq_2)}$ is \df{regular} if, for any compromise solution $x^*$, both preferences satisfy local non-satiation at $x^*$.
\end{definition}

Regularity is discussed further in Section~\ref{sec:properties}.

\section{Main Result}\label{sec:result}

We proceed with our main result by first describing the multimatum mechanism and showing that it achieves full implementation of the compromise outcomes in subgame-perfect Nash equilibrium.

The multimatum mechanism is very simple. It asks Agent 1 to make an initial proposal that consists of a set $A_1$ of possible outcomes. Agent 2 receives the proposal and may either choose an outcome from the proposed set, or reject the set chosen by Agent 1. In the latter case, Agent 2 chooses a set of outcomes $A_2$ that must be at least as large as Agent 1's proposal (as measured by $\nu$). The game ends with Agent 1 choosing an outcome from $A_2$, the set proposed by Agent 2. The constraint that $A_2$ must have a measure at least as large as $A_1$ can be interpreted as follows: if $\nu(A_1)$ is a measure of the generosity of Agent 1's proposal, Agent 2 must be at least as generous as Agent 1 should they choose to decline $A_1$.

We lay out a formal description of the mechanism.

\begin{definition}
The \df{Multimatum Mechanism} is given by the following game form:
\begin{enumerate}
    \item Agent 1 chooses a non-empty closed set $A_1 \subseteq X$.
    \item Agent 2 observes $A_1$ and either chooses an alternative $x \in A_1$ (ending the game with outcome $x$) or rejects $A_1$.
    \item If Agent 2 rejects $A_1$, Agent 2 must propose a non-empty closed set $A_2 \subseteq X$ satisfying $\nu(A_2) \geq \nu(A_1)$.
    \item Agent 1 observes $A_2$ and chooses an alternative $x \in A_2$ (ending the game with outcome $x$).
\end{enumerate}
When coupled with two preference relations, the multimatum mechanism defines an extensive-form game.
\end{definition}

\subsection{Subgame-perfect implementation of a compromise}

We write $\mathcal{F}(X)=\{F\subseteq X\mid F\neq \emptyset, F \text{ closed in } X \}$ to denote the set of non-empty closed subsets of the outcome set $X$. Importantly, $\mathcal{F}(X)$ is compact under the Hausdorff metric topology.

A strategy profile in the game induced by the Multimatum mechanism is a pair $\sigma=(\sigma_1,\sigma_2)$.
The strategy of the first mover is $\sigma_1=(\sigma^1_1,\sigma^2_1)$, where:
\begin{itemize}
    \item $\sigma^1_1 \in \mathcal{F}(X)$ is the initial proposal.
    \item $\sigma^2_1$ is a function that maps any history $(A_1, A_2)$ (where Player 2 rejected and proposed $A_2$) to a choice $x \in A_2$.
\end{itemize}
The strategy of the second mover is a function $\sigma_2$ that maps any proposal $A_1$ to an action, which is either an acceptance $x \in A_1$ or a counter-proposal $A_2 \in \mathcal{F}(X)$ satisfying $\nu(A_2) \ge \nu(A_1)$.

The outcome of a strategy profile $\sigma$ is denoted by $\MU(\sigma)$. The game induced by the mechanism is not finite, and standard equilibrium existence results do not apply.

We say that the multimatum mechanism \df{implements the compromise options in subgame-perfect equilibrium} if:
\begin{enumerate}
    \item For any subgame-perfect Nash equilibrium $\sigma^*$, the outcome $\MU(\sigma^*)$ is a compromise alternative.
    \item Conversely, for any compromise outcome $x \in X$, there is a subgame-perfect Nash equilibrium $\sigma^*$ such that $\text{MU}(\sigma^*) = x$.
\end{enumerate}

\begin{theorem}\label{th:implementation}
In a regular problem $(X,\succeq_1,\succeq_2)$, the multimatum mechanism implements the compromise solutions in subgame-perfect Nash equilibrium.
\end{theorem}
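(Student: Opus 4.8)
The plan is to reduce the extensive game to a problem about the two continuous representations $g_i(x):=\nu(L_i(x))$, $i=1,2$, which genuinely represent $\succeq_i$ by Lemma~\ref{lem:cont} (continuity) and Lemma~\ref{lem:largeset} (strict monotonicity $y\succ_i z\Rightarrow g_i(y)>g_i(z)$). Two elementary facts drive everything. First, each $g_i$ is measure-preserving onto $([0,1],\mathrm{Leb})$: since $\{g_i\le s\}$ is the lower contour set at any point of $g_i$-level $s$, non-atomicity gives $\nu(\{g_i\le s\})=s$; consequently any measurable $A$ satisfies $\max_{A}g_i\ge\nu(A)$ (because $A\subseteq\{g_i\le\max_A g_i\}$). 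Second, writing $c^*:=\max_x\min\{g_1(x),g_2(x)\}$, Lemma~\ref{lem:PO} gives that every compromise $x^*$ has $g_1(x^*)=g_2(x^*)=c^*\ge\tfrac12$, and that $g_1(y)\ge c^*\Rightarrow g_2(y)\le c^*$ (otherwise $y$ would be a compromise with $g_2(y)>c^*$, contradicting Lemma~\ref{lem:PO}).

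Next I would solve the game by backward induction in terms of the generosity $t=\nu(A_1)$. At the last node Agent~$1$ maximizes $g_1$ over $A_2$, so its payoff from \emph{any} rejection is $\max_{A_2}g_1\ge\nu(A_2)\ge t$. For Agent~$2$'s rejection node set $\phi(t):=\max\{g_2(y):g_1(y)\ge t\}$, which is non-increasing, continuous, and satisfies $\phi(c^*)=c^*$ by the dichotomy above. I claim $\phi(t)$ is exactly Agent~$2$'s optimal rejection payoff. The upper bound is immediate: whatever $A_2$ Agent~$2$ proposes, the point Agent~$1$ returns has $g_1\ge t$, hence $g_2\le\phi(t)$. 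Attainment is where \emph{thin indifference curves} are essential: Agent~$2$ proposes $A_2=(L_1(y^*)\cap U_2(y^*))\cup K$, where $y^*$ attains $\phi(t)$ and $K$ is a closed subset of $\{g_1<g_1(y^*)\}$ padding the measure up to $t$ (possible by inner regularity since $\nu(\{g_1<g_1(y^*)\})=g_1(y^*)\ge t$). Every $\succeq_1$-maximal point of $A_2$ lies in $I_1(y^*)\cap U_2(y^*)$, where $g_2\ge g_2(y^*)=\phi(t)$ while $g_1\ge t$ forces $g_2\le\phi(t)$; so Agent~$2$ collects exactly $\phi(t)$ regardless of how Agent~$1$ breaks ties. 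Compactness of $\mathcal{F}(X)$ in the Hausdorff metric guarantees the optimizer exists.

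For \textbf{soundness} (every subgame-perfect outcome $x$ is a compromise) I would first pin $g_1(x)=c^*$. The lower bound holds because Agent~$1$ can deviate to $A_1^\circ=\{g_1\ge c^*\}\cup K$ with $K$ a closed subset of $\{g_1<c^*,\,g_2<c^*\}$; since $\nu(\{g_1\ge c^*\})=1-c^*$ and $\nu(\{g_1<c^*,g_2<c^*\})=2c^*-1$ (the sets $\{g_1\ge c^*\}$ and $\{g_2\ge c^*\}$ meet in a null set by Lemma~\ref{lem:PO}), such $A_1^\circ$ has measure $>c^*-\varepsilon$, its only $\succeq_2$-maximal points are compromises (value $c^*$, hence $g_1=c^*$), and a rejection returns $g_1\ge\nu(A_1^\circ)$; so Agent~$1$ secures $c^*-\varepsilon$ for every $\varepsilon>0$, whence $g_1(x)\ge c^*$. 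For the upper bound, split on Agent~$2$'s move. If $x$ is accepted then $A_1\subseteq L_2(x)$ gives $\nu(A_1)\le g_2(x)$, and supposing $g_1(x)>c^*$ gives $g_2(x)\le c^*$, whence $\phi(\nu(A_1))\ge\phi(c^*)=c^*\ge g_2(x)\ge\phi(\nu(A_1))$ by optimality of acceptance, forcing $g_2(x)=c^*$ and making $x$ a compromise — contradicting Lemma~\ref{lem:PO}. If $x$ is the outcome of a rejection then $g_2(x)=\phi(t)$; optimality of rejection gives $\phi(t)\ge\nu(A_1)=t$, while $g_1(x)>c^*$ gives $\phi(t)\le c^*$, and since $\phi$ is non-increasing with $\phi(c^*)=c^*$ these force $\phi(t)=c^*=g_2(x)$, again making $x$ a compromise and contradicting Lemma~\ref{lem:PO}. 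Thus $g_1(x)=c^*$. Finally, $g_2(x)=c^*$ follows from Agent~$2$'s optimality: $g_2(x)>c^*$ would make $x$ a compromise (contradicting Lemma~\ref{lem:PO}), whereas $g_2(x)<c^*$ with $A_1\subseteq L_2(x)$ gives $\nu(A_1)<c^*$, so the rejection payoff $\phi(\nu(A_1))\ge c^*>g_2(x)$ and Agent~$2$ would strictly prefer to reject. Hence $x$ is a compromise.

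For \textbf{completeness} I would fix a compromise $x^*$ and exhibit the profile: Agent~$1$ opens with $A_1=L_2(x^*)$ (measure $c^*$) and always takes a $\succeq_1$-maximizer at terminal nodes; Agent~$2$ best-responds at rejection nodes via the carved sets above and, facing $L_2(x^*)$, accepts $x^*$. This is subgame perfect: Agent~$1$'s last move is optimal by construction; Agent~$2$ is indifferent between accepting (value $c^*$) and rejecting ($\phi(c^*)=c^*$), and $x^*$ is $\succeq_2$-maximal in $L_2(x^*)$, so accepting $x^*$ is optimal; and Agent~$1$ cannot beat $g_1(x^*)=c^*$ by the soundness upper bound, so its opening is optimal. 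The outcome is $x^*$. The main obstacle throughout is the interplay between the continuous action space and tie-breaking on indifference sets: I expect the delicate points to be the exact attainment of the continuation value $\phi(t)$ irrespective of Agent~$1$'s final tie-break, and the boundary cases where a measure constraint binds with no slack. Both are resolved by the thinness (null, empty interior) of indifference curves, by inner regularity of $\nu$, by compactness of $\mathcal{F}(X)$ under the Hausdorff metric, and by an $\varepsilon$-approximation using continuity of $\phi$.
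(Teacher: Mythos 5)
Your architecture matches the paper's (work with the representations $g_i=\nu(L_i(\cdot))$, exploit that any closed menu $A$ hands its recipient value at least $\nu(A)$ because $A\subseteq\{g_i\le\max_A g_i\}$ and $\nu(\{g_i\le s\})=s$, open with $A_1=L_2(x^*)$ for completeness, and use near-full-measure deviations for soundness), and most steps check out. But there is one genuine gap: the claimed \emph{exact attainment} of the continuation value $\phi(t)=\max\{g_2(y):g_1(y)\ge t\}$ by the carved menu $A_2=(L_1(y^*)\cap U_2(y^*))\cup K$ with $K$ a closed subset of the open set $\{g_1<g_1(y^*)\}$. The case $g_1(y^*)=t$ is not a rare boundary case but the \emph{generic} one, since the constraint defining $\phi$ binds at the optimum (at $t=c^*$ every maximizer is a compromise and has $g_1=c^*=t$). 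There $A_2\subseteq L_1(y^*)$ with $\nu(L_1(y^*))=t$; any compact $K\subseteq\{g_1<t\}$ has $\max_K g_1<t$ and hence $\nu(K)\le\max_K g_1<t$, and the piece $L_1(y^*)\cap U_2(y^*)$ need not make up the deficit (in the facility-location example it covers only a fraction of the annulus $\{t-\delta<g_1\le t\}$). So no feasible carved menu exists; inner regularity does not rescue this. The consequence is that your completeness profile is not subgame perfect: at off-path nodes where Agent~2 should ``best-respond via the carved sets,'' her rejection supremum is $\phi(t)$ but is not attained given Agent~1's tie-breaking, so the prescribed action does not exist. The same attainment claim underlies the steps ``$g_2(x)\ge\phi(\nu(A_1))$ by optimality of acceptance'' and ``$g_2(x)=\phi(t)$'' in your soundness argument.

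The paper circumvents this by not requiring that \emph{every} $\succeq_1$-optimal element of the counteroffer be good for Agent~2: in Lemma~\ref{lem:2opt} the continuation value is a supremum over pairs consisting of a menu and an element of $\beta_1(\text{menu})$, attained by compactness of $\mathcal{F}(X)$ and upper hemicontinuity of $\beta_1$, and Agent~1's \emph{constructed} strategy is then defined to break the tie at the attaining menu $\hat A_2$ in Agent~2's favor (taking $A_2=L_1(y^*)$ shows this supremum equals your $\phi(t)$). Your soundness steps are repairable without attainment because there one always has strict slack: whenever $g_1(x)>c^*$ one gets $\nu(A_1)\le\nu(L_2(x))<c^*$, so Agent~2 can reject and propose $L^\varepsilon_1(x^*)$ from Lemma~\ref{lem: ubs} with $\varepsilon<c^*-\nu(A_1)$, whose unique $\succeq_1$-best element is $x^*$; securing $c^*$ rather than $\phi(t)$ already yields the contradiction. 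Your lower-bound deviation $A_1^\circ=\{g_1\ge c^*\}\cup K$ is correct and a nice alternative to the paper's $L_2(x^*)\cup\{x'\}$. Two smaller points: continuity of $\phi$ is asserted but neither needed nor true in general (the feasible set $\{g_1\ge t\}$ need not vary lower hemicontinuously), and the optimality of Agent~1's opening should be argued directly against Agent~2's constructed strategy rather than by appeal to ``the soundness upper bound,'' as in Lemma~\ref{lemma:ex}.
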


Theorem~\ref{th:implementation} is the main result of the paper. The proof is in Section~\ref{sec:proofs}. It proceeds by first establishing that player 1 has a strategy that guarantees that, in all subsequent subgames, both players have well-defined best responses. Using this strategy for player 1, we then construct an equilibrium for any given compromise alternative. The compactness of the space of all menus (the space of non-empty closed sets) that the players may choose plays a key role in this step. Finally, we demonstrate that the outcome of every equilibrium must coincide with a compromise solution.

\subsection{Properties of the compromise}\label{sec:properties}

We proceed to discuss the ideas behind the compromise solution in some detail. First, we explain the common cardinalization further. Then, we consider properties of the compromise and look at the regularity assumption.

\paragraph*{On the common cardinalization.}

We have not solved the ethical dilemmas involved in making interpersonal comparisons. What we have is a simple and pragmatic proposal generalizing a procedure that is already widely used in practice. In market design, when $X$ is finite (a situation that is outside of our model), it is common to compare evaluate outcomes according to how they are ranked by different agents. Effectively, to compare one agent's first choice to another agent's third. This amounts to a canonical utility defined by counting the size of the lower contour sets.\footnote{In market design, match quality is often assessed by ranks as a summary welfare statistic. For example, mechanisms are evaluated by reporting how many agents receive their top-1, top-2, top-3, etc.\ options. Mechanisms such as Deferred Acceptance use only ordinal information, but their outcomes are evaluated via a specific cardinalization of these ordinal preferences. Applications include school choice \cite{abdulkadirouglu2005new}, teacher assignment \cite{combe2022design}, and course assignment at HBS \cite{budish2012multi}. On the theoretical side, see \cite{featherstone2020rank} for rank-efficiency and \cite{nikzad2022rank} for rank-optimal assignments in uniform markets.
} Our proposal extends this idea to models with infinitely many alternatives.

Consider the notion that Agent 1 obtains a utility of $v_1=0.9$ while Agent 2 obtains $v_2=0.8$. In a finite model, there might be $100$ possible outcomes, and Agent 1 is getting their top $10$ most preferred outcome while Agent 2 is getting their $20$th most preferred outcome. In an infinite model, we want $0.9$ and $0.8$ to retain their usual meaning as real numbers in the interval $[0,1]$: the mass of numbers below $0.9$ is $90\%$ of $[0,1]$, and the fraction below $0.8$ is $80\%$. This requires using the same (Lebesgue) measure to evaluate Agent 1's and Agent 2's utils.

By Parts 1 and 2 of Theorem~\ref{thm:commonnu}, such a measure is available for each agent, no matter which (continuous) utility representation we are using. But \emph{unless we are using a common cardinalization}, the meaning will be essentially different for each of the agents. They will not be compatible and therefore not comparable: for example, when we say that $v_1$ is in the top $10\%$ of utility values for Agent 1, the underlying outcomes in $X$ should also correspond to $10\%$ of the possible utility values for Agent 2. So when we compare welfare in terms of utils, utility values as real numbers on $[0,1]$, both agents are mapping these in the same terms to the underlying physical outcomes.

Under a common cardinalization, there exists a transformation from Agent 1's utility values to Agent 2's that satisfies a generalization of the slope $=-1$ property in the introduction.  In the introduction, the $\UPS$ equals the Pareto frontier, so we may take the transformation to be $T:[0,1]\to [0,1]$. In general, though, to each utility value of Agent 1, there may correspond multiple values for Agent 2. In particular, we obtain an essentially unique random correspondence of utils $v_1$ to utils $v_2$, a Markov kernel $T:[0,1]\to \Delta([0,1])$. The slope $=-1$ property corresponds to a preservation of measure, so that for any set $E$,
\[
  \la(E) = \int_{[0,1]} \int_E T(v_1, \diff v_2) \diff \la(v_1).
\]
In words, the measure of the set of $v_1$ that corresponds (under $T$) to assigning Agent 2 utils $v_2\in E$ equals the measure of $E$.

Finally, we should discuss two points. The first is trivial but worth emphasizing. We have assumed that both agents' utils are in $[0,1]$. This is an irrelevant normalization. We could, for example, have 2's utility take values in any interval and would just need to keep track of the relative sizes of the two ranges of utilities. Slope $=-1$ would no longer be the right criterion, but there is an obvious way to capture the same idea. 

The second point highlights the degrees of freedom in choosing a measure that defines a common cardinalization. By choosing such a measure, we can span all the Pareto optimal outcomes in which neither agent achieves their globally best outcome (neither agent is satiated). 

\begin{proposition}\label{prop:POmu} Let $((X,\rho,\nu),\succeq_1,\succeq_2)$ be a two-person social choice problem. 
    Let $x \in X$ be Pareto optimal such that neither agent is satiated at $x$. There exists a full-support Borel probability measure $\mu$ on $X$ that is absolutely continuous with respect to $\nu$, such that $x$ solves the compromise solution problem:
\[ x \in \arg\max_{x' \in X} \min \{ \mu(L_1(x')), \mu(L_2(x')) \}. \]
\end{proposition}

Proposition~\ref{prop:POmu} follows simply from equalizing the measure of the lower contour sets and ensuring that this can be done in a way that makes the measure absolutely continuous with respect to $\nu$. The hypothesis that neither agent is satiated is unavoidable. If Agent 1 is satiated and Agent 2 achieves a utility $\mu(L_2(x))<1$, then there exists a local improvement on the egalitarian objective in the definition of the compromise.\footnote{Let $M = \mu(L_2(x))$. By local non-satiation, there exists $x^n\to x$ with $\mu(L_2(x^n))>M$. For $x$ to be a compromise, we would need that $\mu(L_1(x^n))\leq M<1$. This is not possible for all $n$ as $L_1(x)=X$.} (Of course, if both agents are satiated at $x$, then it is a compromise for any choice of $\mu$.)

\paragraph*{Compromise and regularity.}

\begin{lemma}\label{lem:PO} If the problem is regular and $x^*$ is a compromise solution, then $x^*$ is Pareto optimal, and 
\[\nu(L_1(x^*)) = \nu (L_2(x^*)) \geq \frac{1}{2}.\] Moreover, each agent is indifferent over all compromise solutions.
  \end{lemma}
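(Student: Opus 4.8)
The plan is to prove the three claims in sequence, using the optimality of $x^*$ together with local non-satiation. Let me denote $v(x) = \min\{\nu(L_1(x)),\nu(L_2(x))\}$ for the compromise objective.

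First, I would establish the equalization claim $\nu(L_1(x^*)) = \nu(L_2(x^*))$. Suppose not, say $\nu(L_1(x^*)) > \nu(L_2(x^*))$, so that $v(x^*) = \nu(L_2(x^*))$ and there is slack in agent $1$'s value. The idea is to perturb $x^*$ slightly toward something agent $2$ prefers. By local non-satiation of $\succeq_2$ at $x^*$, every neighborhood of $x^*$ contains a point $x'$ with $x' \succ_2 x^*$, which forces $L_2(x^*) \subsetneq L_2(x')$ with $x^* \in L_2(x')$ but (by local strictness / thin indifference) the gain has positive $\nu$-measure, so $\nu(L_2(x')) > \nu(L_2(x^*))$. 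The subtlety is that moving to $x'$ may decrease $\nu(L_1(x'))$; but since $\nu(L_1(x^*))$ exceeds $\nu(L_2(x^*))$ strictly and $L_1$ varies, I need that $x'$ can be chosen close enough to $x^*$ that $\nu(L_1(x'))$ stays above $\nu(L_2(x'))$. This is where continuity of $x \mapsto \nu(L_i(x))$ — presumably the content of the cited Lemma~\ref{lem:cont} — does the work: by choosing $x'$ in a small enough neighborhood, $\nu(L_1(x'))$ remains strictly larger than $\nu(L_2(x^*))$, while $\nu(L_2(x'))$ has strictly increased, so $v(x') > v(x^*)$, contradicting optimality. I expect this step to be the main obstacle, precisely because controlling \emph{both} contour measures simultaneously under the perturbation requires the continuity lemma and a careful choice of neighborhood.

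Second, I would show $v(x^*) \geq \tfrac{1}{2}$. The clean way is to consider both agents' contour sets and use the fact that for a continuous preference, $\nu(L_i(x)) + \nu(U_i(x)) = 1 + \nu(I_i(x)) = 1$ since indifference curves have $\nu$-measure zero (thin indifference curves). I would argue there exists a point where both agents are ``simultaneously above the median,'' or alternatively invoke a direct argument: since $X$ is connected and each $\nu(L_i(\cdot))$ ranges continuously, there is some $x$ with $\nu(L_1(x)) = \nu(L_2(x)) = \tfrac{1}{2}$ attainable, or at least a point where the minimum is at least $\tfrac{1}{2}$. A cleaner route: take any $x$; if $\nu(L_1(x)) < \tfrac12$ then $\nu(U_1(x)) > \tfrac12$, and similarly for agent $2$; a connectedness/intermediate-value argument on the path-connected space $X$ yields a point where the smaller contour measure reaches $\tfrac12$. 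Since $x^*$ is a maximizer, $v(x^*)$ is at least this value, giving $v(x^*) \geq \tfrac12$.

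Third, Pareto optimality follows almost immediately from the equalization and local non-satiation. If $x^*$ were Pareto dominated by some $y$, then $y \succeq_i x^*$ for both $i$ with strict preference for at least one; by local non-satiation I can further perturb to get $y' \succ_i x^*$ for both $i$ simultaneously (using that strict upper contour sets are open under continuity). Then $L_i(x^*) \subseteq L_i(y')$ for both $i$, with strict measure gain for both by thin indifference curves, so $v(y') > v(x^*)$, contradicting optimality. Finally, the indifference claim — each agent is indifferent across all compromise solutions — follows because all compromise solutions share the common value $v(x^*) = \nu(L_1(x^*)) = \nu(L_2(x^*))$; if $x^{**}$ is another compromise solution, then $\nu(L_i(x^{**})) = \nu(L_i(x^*))$ for each $i$, and combined with Pareto optimality and local non-satiation this pins down $x^* \sim_i x^{**}$ for both agents (a strict preference one way would, via non-satiation, let one agent improve their contour measure while keeping the other's value high, breaking the tie in $v$).
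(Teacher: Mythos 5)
Your first step (equalization) and your last step (indifference across compromises) match the paper's argument in substance, and your Pareto-optimality argument is also essentially the paper's. But your treatment of the bound $\nu(L_1(x^*))=\nu(L_2(x^*))\geq \tfrac12$ has a genuine gap. You try to prove it \emph{before} and \emph{independently of} Pareto optimality, by a connectedness/intermediate-value argument producing ``a point where the smaller contour measure reaches $\tfrac12$.'' That is precisely the nontrivial claim, and IVT does not deliver it: connectedness gives you a point $x_0$ where $\nu(L_1(x_0))=\nu(L_2(x_0))$ (since the difference changes sign between the two agents' maximizers), but nothing forces that common value to be at least $\tfrac12$; nor is it clear that the closed sets $\{x:\nu(L_i(x))\geq \tfrac12\}$ must intersect. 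The paper's route is different and is the one that works: \emph{first} establish Pareto optimality of $x^*$, \emph{then} note that if both measures were below $\tfrac12$, the union bound gives $\nu\bigl(L_1(x^*)^c\cap L_2(x^*)^c\bigr)\geq 1-\nu(L_1(x^*))-\nu(L_2(x^*))>0$, so some point is strictly preferred to $x^*$ by both agents, contradicting the Pareto optimality just proved. You should reorder your steps and replace the IVT sketch with this measure-theoretic argument.

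A secondary, smaller issue: in your Pareto-optimality step you invoke local non-satiation to perturb the dominating point $y$. Regularity only guarantees local non-satiation \emph{at compromise solutions}, so you must first observe that $y$ is itself a compromise solution (its min-value is at least that of $x^*$, which is maximal), as the paper does; otherwise the non-satiation you use at $y$ is not licensed by the hypotheses.
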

  \begin{proof}
Suppose, towards a contradiction, that $\nu(L_1(x^*)) \neq \nu (L_2(x^*))$. We may assume, without loss of generality, that $\nu(L_1(x^*)) < \nu (L_2(x^*))$. Since the problem is regular, $\succeq_1$ is locally non-satiated at $x^*$, and $x\mapsto \nu(L_i(x))$ is continuous, there exists $x'$ (in a neighborhood of $x^*$) with $\nu(L_1(x^*)) < \nu (L_1(x'))$ and $\nu(L_1(x')) < \nu (L_2(x'))$, which is absurd.

The same argument establishes that $x^*$ is Pareto optimal. For suppose (without loss of generality) that there exists $x\in X$ with $x\succeq_1 x^*$ and $x\succ_2 x^*$. That would mean that $x$ is also a compromise solution. Since the problem is regular, $\succeq_1$ must be locally non-satiated at $x$. Since $X$ is connected and $\succeq_2$ continuous, there is a neighborhood $N_x$ of $x$ such that $N_x\cap X\neq \{x\}$ and for which $x'\succ_2 x^*$ holds for all $x'\in N_x$. By local non-satiation, there is $x'\in N_x$ with $x'\succ_1 x^*$, which would contradict the definition of compromise.

Suppose, again towards a contradiction, that $\nu(L_1(x^*)) = \nu (L_2(x^*)) <  \frac{1}{2}$. Since $\nu(X)=1$, this means that: 
\[\begin{split}
\nu (L_{1}(x^*)^c\cap L_2(x^*)^c)
&= \nu ((L_1(x^*)\cup L_2(x^*))^c) \\
&\geq 1- \nu(L_1(x^*)) - \nu (L_2(x^*))> 0.    
\end{split}
\] Hence, $L_1(x^*)^c\cap L_2(x^*)^c\neq \emptyset$ and $x^*$ is not Pareto optimal; this contradicts the definition of a compromise.

Finally, remark that each agent is indifferent over all compromise solutions. Suppose not, so that there are two compromise solutions $y^*$ and $z^*$ with $y^*\succ_i z^*$ for some $i$. By Lemma \ref{lem:largeset}, $\nu(L_i(y^*))>\nu(L_i(z^*))$. But $\nu(L_1(y^*)) = \nu (L_2(y^*))$, so the Pareto optimality of $z^*$ leads to a contradiction.
\end{proof}

The following examples illustrate a regular problem, a non-regular problem (both with a single compromise), and a regular problem with two compromise solutions. Note that in a non-regular problem, the conclusion of Lemma \ref{lem:PO} according to which the measures of both lower contour sets are identical at a compromise option may not remain valid.

\begin{example}\label{ex:regular} Let $X=[0,1]$ be the policy space and assume that the two agents have single-peaked preferences over $X$, with $a_i$ denoting the ideal point of agent $i$ as illustrated by Figure \ref{fig:singlepeaked}. Figure \ref{fig:singlepeakedleft} represents the utility functions $u_i$ and Figure \ref{fig:singlepeakedright} the measure of their respective lower-contour sets, denoted $\nu(L_i(x))$ with $\nu$ being the Lebesgue measure. For each $x\in X$, Agent $1$'s utility function is $u_1(x)=1-|x - 1/10|$ with ideal point $a_1=1/10$ whereas
Agent 2 has ideal point $a_2=1/2$ and a utility function:
$$u_2(x) = \begin{cases} 1-2|1/2 - x| & \text{if } x \leq 0.5, \\ 1-|1/2 - x| & \text{if } x > 0.5. \end{cases}$$
Notice the difference between the utility function of each agent and the common cardinalization we have adopted through the measure of the agents' lower contour sets.

As preferences are not symmetric, the compromise $x^*$ is not the midpoint of $a_1$ and $a_2$. Indeed, the compromise $x^*= 0.375$ coincides with the maximum of the minimum of the measures of the lower-contour sets with $x^*\neq \frac{a_1+a_2}{2}=0.3$.

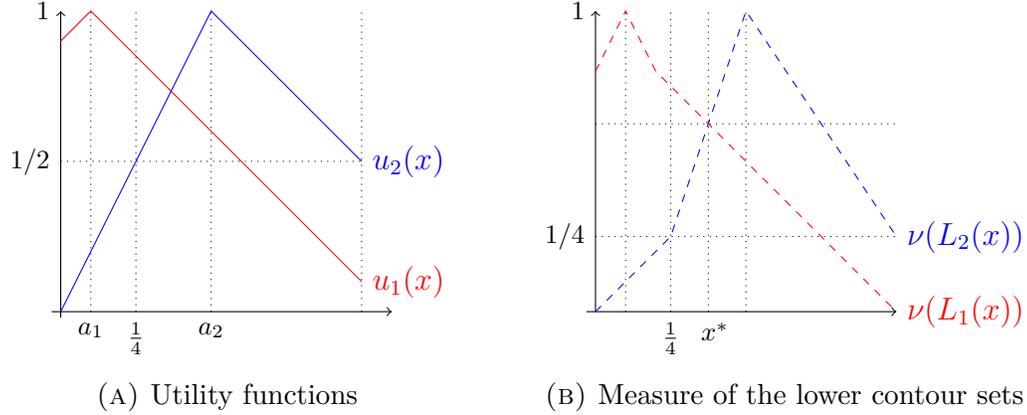
\begin{figure}
    \centering

        \begin{subfigure}[b]{0.45\textwidth}
     \centering
    \begin{tikzpicture}[scale=4]

        \coordinate (A1) at (0.1, 1); 
    \coordinate (A2) at (0.5, 1); 
    \coordinate (U1_0) at (0, 0.9); 
    \coordinate (U1_1) at (1, 0.1); 
    
    \draw[->] (-0.03,0) -- (1.1,0);
\draw[->] (0,-0.02) -- (0,1);

    \draw[red, thin] (U1_0) -- (A1) -- (U1_1) node[anchor= west] {$u_1(x)$};

                  \draw[thin,dotted] (0.5,0) -- (0.5,1) ;
                  \draw[thin,dotted] (0.1,0) -- (0.1,1) ;
   \draw[thin,dotted] (0,0.5) -- (1,0.5) ;
      \draw[thin,dotted] (1/4,0) -- (1/4,1) ;
      \draw[thin,dotted] (1,0) -- (1,1) ;

    \node[left] at (0,0.5) {\footnotesize $1/2$}; 
    \node[left] at (0,1) {\footnotesize $1$}; 
       \node[below] at (1/4,0) {\footnotesize $\frac{1}{4}$}; 
    \node[below] at (0.1,0) {\footnotesize $a_1$};
      \node[below] at (0.5,0) {\footnotesize $a_2$};

    \coordinate (U2_0) at (0,0);     \coordinate (U2_1) at (1, 0.5); 
    \draw[blue, thin] (U2_0) -- (A2) -- (U2_1) node[anchor= west] {$u_2(x)$};
    
\end{tikzpicture}
\caption{Utility functions} 
 \label{fig:singlepeakedleft}
\end{subfigure}
 \quad
    \begin{subfigure}[b]{0.45\textwidth}
    \centering
    \begin{tikzpicture}[scale=4]

\draw[->] (-0.03,0) -- (1,0);
\draw[->] (0,0) -- (0,1);
        \coordinate (A1) at (0.1, 1.0);     \coordinate (A2) at (0.5, 1.0);     \coordinate (X0) at (0, 0);        \coordinate (X1) at (1, 0);        \coordinate (Y1) at (0, 1);    
        \coordinate (I) at (0.375, 0.625);

            \coordinate (F1_0) at (0, 0.8);
    \coordinate (F1_4) at (0.2, 0.8);
    \coordinate (F1_1) at (1, 0);
    
    \draw[red, thin, dashed] (F1_0) -- (A1) -- (F1_4) -- (F1_1) node[anchor= west] { $\nu(L_1(x))$};

    \coordinate (F2_25_kink) at (0.25, 0.25);
    \coordinate (F2_1) at (1, 0.25);
    
                      \node[below] at (0.375,0) {\footnotesize $\mbox{ }\,x^*$};
                \node[left] at (0,1/4) {\footnotesize $1/4$};        \node[below] at (1/4,0) {\footnotesize $\frac{1}{4}$}; 
        \node[left] at (0,1) {\footnotesize $1$};
             
    \draw[blue, thin, dashed] (X0) -- (F2_25_kink) -- (A2) -- (F2_1) node[anchor=west] { $\nu(L_2(x))$};

        \draw[thin,dotted] (0.375,0) -- (0.375,1) ;
        \draw[thin,dotted] (0.5,0) -- (0.5,1) ;
        \draw[thin,dotted] (0.25,0) -- (0.25,1) ;

        \draw[thin,dotted] (0.1,0) -- (0.1,1) ;
       \draw[thin,dotted] (0,0.625) -- (1,0.625) ;
        \draw[thin,dotted] (0,1/4) -- (1,1/4) ;

\end{tikzpicture}
\caption{Measure of the lower contour sets}  
 \label{fig:singlepeakedright}
\end{subfigure}
    \caption{Compromise in a regular problem .}
   \label{fig:singlepeaked}
\end{figure}
\end{example}

\begin{example}
As in Example \ref{ex:regular}, let $X=[0,1]$ be the policy space. Both agents have piecewise linear utility functions, as illustrated in Figure~\ref{fig:non-regularexample}. Figure~\ref{fig:non-regularexampleleft} depicts the utility functions of agents 1 and 2, and Figure~\ref{fig:non-regularexampleright} depicts the measures of the lower contour sets for each agent. The utility of agent 1 (drawn in red) is piecewise linear with two segments: one that is strictly decreasing from $0$, where it takes the value $1/3$, to $1/2$, where it is $0$; and then strictly increasing from $1/2$ to $1$. The utility of agent 2 (drawn in blue) is also piecewise linear with two segments: one from $0$ to $2/3$ and the other from $2/3$ to $1$. In the case of agent 2, the utility is everywhere strictly decreasing. At $x=0$, 1's utility is $1/3$ and 2's is $1$. At $1$, 1's utility is $1$ and 2's is $0$. This fully describes the agents' utilities, which are, of course, continuous and feature thin indifference curves (each indifference curve is finite).

\begin{figure}
    \centering
        \begin{subfigure}[b]{0.45\textwidth}
        \centering
                \begin{tikzpicture}[scale=4]
            \draw[->] (-0.03,0) -- (1,0);
            \draw[->] (0,-0.03) -- (0,1);
            \node[below] at (0,0) {\footnotesize $0$};
            \node[below] at (2/3,0) {\footnotesize $2/3$};
            \node[below] at (1/2,0) {\footnotesize $1/2$};
            \node[below] at (1,0) {\footnotesize $1$};
            \draw[thin,dotted] (0,1/3)  node[left] {\footnotesize $1/3$} -- (1,1/3);
            \draw[thin,dotted] (1/2,0) -- (1/2,1);
            \draw[thin,dotted] (1,0) -- (1,1);
            \draw[thin,dotted] (0,1) node[left] {\footnotesize $1$} -- (1,1);
            \draw[thin,dotted] (2/3,0) -- (2/3,1);
            \draw[thin,red] (0,1/3) -- (1/2,0) -- (1,1) node[anchor=west] { $u_1(x)$};
            \draw[thin,blue] (0,1) -- (4/6,1/6) -- (1,0) node[anchor= west]  { $u_2(x)$};
        \end{tikzpicture}
        \caption{Utility functions}
        \label{fig:non-regularexampleleft}
    \end{subfigure}
    \quad         \begin{subfigure}[b]{0.45\textwidth}
        \centering
                \begin{tikzpicture}[scale=4]
            \draw[->] (-0.03,0) -- (1,0);
            \draw[->] (0,-0.03) -- (0,1);
            \node[below] at (0,0) {\footnotesize $x^*$};
            \node[below] at (2/3,0) {\footnotesize $2/3$};
            \node[below] at (1/2,0) {\footnotesize $1/2$};
            \node[below] at (1,0) {\footnotesize $1$};
                        \node[left] at (0,4/6) {\footnotesize $2/3$};
            \node[left] at (0,3/6) {\footnotesize $1/2$};
            \node[left] at (0,6/6) {\footnotesize $1$};
                        \draw[blue,thin,dashed] (0,1) -- (1,0) node[anchor=west] {$\nu(L_2(x))$};
                                    \draw[red,  thin, dashed] (0,4/6) -- (3/6,0);
                        \draw[red, thin, dashed] (3/6,0) -- (4/6,4/6);
                        \draw[red, thin,dashed] (4/6,4/6) -- (6/6,6/6) node[anchor=west] { $\nu(L_1(x))$};
            
                                    \draw[thin,dotted] (3/6,0) -- (3/6,6/6);
                        \draw[thin,dotted] (4/6,0) -- (4/6,6/6);
                        \draw[thin,dotted] (0,3/6) -- (6/6,3/6);
                        \draw[thin,dotted] (0,4/6) -- (6/6,4/6);
            \draw[thin,dotted] (6/6,0) -- (6/6,6/6);
            \draw[thin,dotted] (0,6/6) -- (6/6,6/6);
        \end{tikzpicture}
        \caption{Measure of the lower contour sets }
        \label{fig:non-regularexampleright}
    \end{subfigure}
    
    \caption{Compromise in a non-regular problem}
    \label{fig:non-regularexample}
\end{figure}
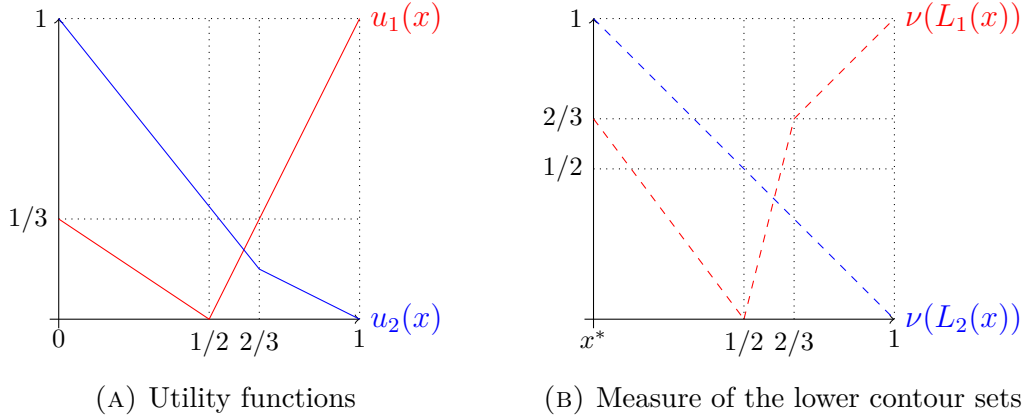
 The unique compromise point is $x^*=0$, which coincides with the maximum of the minimum of both curves. Remark that Lemma \ref{lem:PO} does not apply in non-regular problems, as at $x^*=0$, the measure of the lower contour sets is $2/3$ for agent 1 and $1$ for agent 2. Indeed, any other alternative $x\in [0,2/3)$ has a strictly smaller $\nu(L_1(x))$; and for any $x\in [2/3,1]$, we have $\nu(L_2(x))\leq 1/3<\nu(L_1(x^*))$. The problem in the example is non-regular because agent 1's utility is maximal at $x^*$ in the neighborhood $[0,2/3)$.
\end{example}

\begin{example} The third example features multiple compromise solutions.
Two agents are respectively endowed with the utility functions  $u_1(x) = |2x - 1|$ and $u_2(x) = 1 - |2x - 1|$ over $X=[0,1]$ as illustrated by Figure \ref{fig:twocompromisesexample}.
Here, Agent 1 prefers the extremes and dislikes the center, whereas Agent 2 prefers the center and dislikes the extremes, with both utility functions being piece-wise linear, as shown in Figure \ref{fig:twocompromisesexampleleft}. In this example, the Lebesgue measure of the lower contour set for each utility function coincides with the utility function itself.  Observe that there are two compromises: $x^*_1=1/4$ and $x^*_2=3/4$. They coincide with the maximum of the minimum of both $\nu(L_1(x))$ and $\nu(L_2(x))$, as shown in Figure \ref{fig:twocompromisesexampleright}. It is therefore a regular problem, as the agents' preferences are locally non-satiated at either of these two compromises. In accordance with Lemma \ref{lem:PO}, both agents are indifferent over both compromises with $\nu(L_1(x^*))=\nu(L_2(x^*))=1/2$ for $x^*=x^*_1,x^*_2$.

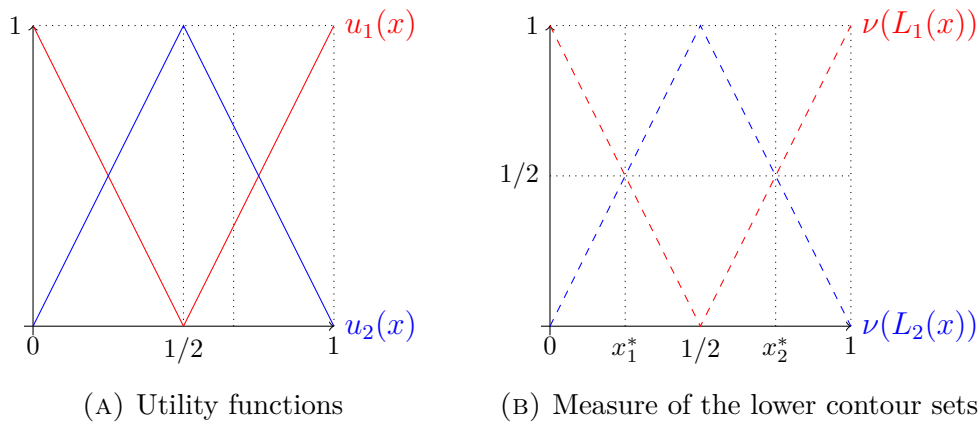
\begin{figure}
\centering
             \begin{subfigure}[b]{0.45\textwidth}
        \centering
                \begin{tikzpicture}[scale=4]
            \draw[->] (-0.03,0) -- (1,0);
            \draw[->] (0,-0.03) -- (0,1);
            \node[below] at (0,0) {\footnotesize $0$};
            \node[below] at (1/2,0) {\footnotesize $1/2$};
            \node[below] at (1,0) {\footnotesize $1$};
                       \draw[thin,dotted] (1/2,0) -- (1/2,1);
            \draw[thin,dotted] (1,0) -- (1,1);
            \draw[thin,dotted] (0,1) node[left] {\footnotesize $1$} -- (1,1);
            \draw[thin,dotted] (2/3,0) -- (2/3,1);
            \draw[thin,red] (0,1) -- (1/2,0) -- (1,1) node[anchor=west] { $u_1(x)$};
            \draw[thin,blue] (0,0) -- (1/2,1) -- (1,0) node[anchor= west]  { $u_2(x)$};
        \end{tikzpicture}
    \caption{Utility functions}\label{fig:twocompromisesexampleleft}

\end{subfigure}
        \begin{subfigure}[b]{0.45\textwidth}
        \centering
                \begin{tikzpicture}[scale=4]
            \draw[->] (-0.03,0) -- (1,0);
            \draw[->] (0,-0.03) -- (0,1);
            \node[below] at (0,0) {\footnotesize $0$};
            \node[below] at (1/2,0) {\footnotesize $1/2$};
            \node[below] at (1,0) {\footnotesize $1$};
               \node[below] at (1/4,0) {\footnotesize $x^*_1$};
                  \node[below] at (3/4,0) {\footnotesize $x^*_2$};
            \draw[thin,dotted] (0,1/2)  node[left] {\footnotesize $1/2$} -- (1,1/2);
            \draw[thin,dotted] (1/4,0) -- (1/4,1);
             \draw[thin,dotted] (3/4,0) -- (3/4,1);
            \draw[thin,dotted] (1,0) -- (1,1);
            \draw[thin,dotted] (0,1) node[left] {\footnotesize $1$} -- (1,1);
                        \draw[dashed,thin,red] (0,1) -- (1/2,0) -- (1,1) node[anchor=west] { $\nu(L_1(x))$};
            \draw[dashed, thin,blue] (0,0) -- (1/2,1) -- (1,0) node[anchor= west]  { $\nu(L_2(x))$};
        \end{tikzpicture}
         \caption{Measure of the lower contour sets}
   
    \label{fig:twocompromisesexampleright}  

\end{subfigure}    
    \caption{Two Compromises in a regular problem}
    \label{fig:twocompromisesexample}
\end{figure}
\end{example}

\section{Applications}\label{sec:applications}

This section reviews some applications of the compromise solution in different economic settings. We draw from models in political economy and behavioral economics to argue that our result is widely applicable to settings that receive a lot of attention in the literature.

\subsection{Political economy}\label{sec:PE}
We first apply our framework to the model of mandatory versus discretionary spending proposed by \cite{bowen2014mandatory}. There are two parties bargaining over the allocation space $(x_1, x_2, g) \in \mathbb{R}^3_+$, subject to the budget constraint $x_1 + x_2 + g \leq 1$, where $g$ denotes the level of public good provision and $x_i$ is the private good consumed by party $i$ ($i=1,2$). Thus $X=\{(x_1,x_2,g)\geq 0 : x_1+x_2+g = 1 \}$.

The utility for party $i$ is given by:
\begin{equation}\label{eq:publicgood}
    u_i(x_1, x_2, g) = x_i + \theta_i \log(g),
\end{equation}
where $\theta_i$ denotes party $i$'s preference parameter for the public good. We assume that $\theta_1 \geq \theta_2 > 0$ and $\theta_1 + \theta_2 < 1$. The latter condition ensures that the Pareto efficient level of public good provision ($g^* = \theta_1 + \theta_2$) lies within the interior of the budget set. Figure \ref{figure:utilityprivate} depicts an indifference curve for each party  and its lower contour set for the case with $1/2=\theta_1>\theta_2=1/3$.

The set of Pareto optimal allocations is defined by: \begin{equation}\label{eq:xx}
\{(x_1,x_2,g)\geq 0: x_1+x_2=1-(\theta_1+\theta_2) \text{ and } g^*=\theta_1+\theta_2\}.   
\end{equation}

To see why, remark that, for any arbitrary utility level $\bar{u}_2$, the optimization problem of agent 1 is equal to $\max_{x_1, x_2, g}  x_1 + \theta_1 \ln(g)$ subject to $ x_2 + \theta_2 \ln(g) = \bar{u}_2$, $x_1 + x_2 + g \leq 1$ and $x_1, x_2, g \geq 0$.
The first constraint refers to the utility of party 2 whereas the second one is as previously described. At the optimum, the second constraint is binding and thus $x_1+x_2+g=1$. This equality combined with the first one leads us to redefine the optimization problem as:
  $  \max_{g} \left[1 -g-  \bar{u}_2 + (\theta_1 +\theta_2)\ln(g) \right]
    \textrm{ subject to } g \geq 0. $ The FOCs prove that $\theta_1+\theta_2=g^*$, as wanted. This level of public good is the Samuelson level of the public good (it requires that the sum of the marginal benefits of the public good equals its marginal cost).

\begin{figure}
    \centering

\begin{tikzpicture}

                        \begin{axis}[
        name=plot1,
        width=7cm,
        height=7cm,
        axis lines=left,
        xmin=0, xmax=1.1,
        ymin=0, ymax=1.1,
        xlabel={$x_1$},
        ylabel={$x_2$},
                      restrict y to domain=0:1,
        restrict x to domain=0:1
    ]

                \fill[cyan!30] (axis cs:0,0) -- (axis cs:1,0) -- (axis cs:0,1) -- cycle;

                                \addplot[
            fill=white,
            draw=none,
            domain=0:1,
            samples=100
        ]
        {max(0, 1 - x - exp(-2*x))} \closedcycle;

                \addplot[thick, blue, domain=0:1, samples=100] {1 - x - exp(-2*x)};

                \addplot[black, thick, dashed] coordinates {(0,1) (1,0)};

              \node[blue!50!black] at (axis cs:0.3, 0.3) {$L_1$};

    \end{axis}

                        \begin{axis}[
        at={(plot1.outer east)}, anchor=outer west, xshift=1cm,         width=7cm,
        height=7cm,
        axis lines=left,
        xmin=0, xmax=1.1,
        ymin=0, ymax=1.1,
        xlabel={$x_1$},
        ylabel={$x_2$},
        restrict y to domain=0:1,
        restrict x to domain=0:1
    ]

                \fill[red!40] (axis cs:0,0) -- (axis cs:1,0) -- (axis cs:0,1) -- cycle;

        \addplot[
            fill=white,
            draw=none,
            variable=\t,
            domain=0:1,
            samples=100
        ]
        ({max(0, 1 - t - exp(-3*t))}, t) \closedcycle;

                \addplot[thick, red, variable=\t, domain=0:1, samples=100] ({1 - t - exp(-3*t)}, t);

                \addplot[black, thick, dashed] coordinates {(0,1) (1,0)};

        \node[red!50!black] at (axis cs:0.5, 0.3) {$L_2$};

    \end{axis}

\end{tikzpicture}
\caption{Utility of Party 1 and 2 over private consumption with $\theta_1=1/2$ and $\theta_2=1/3$. Lower contour sets are colored.}
\label{figure:utilityprivate}
\end{figure}
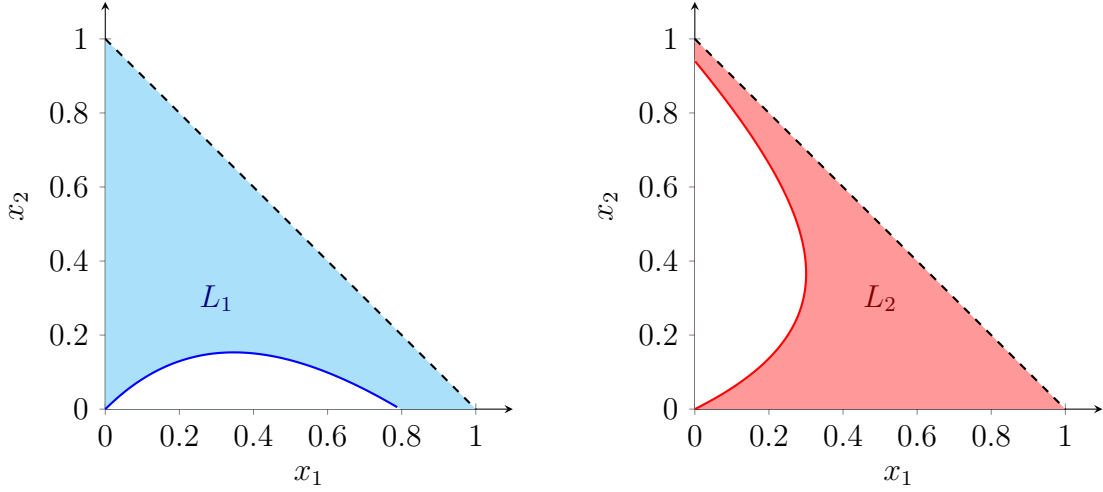

It turns out that there is a unique compromise allocation in this model, which can be explicitly expressed as a function of $\theta_1$ and $\theta_2$,  allowing to do some simple comparative statics.

\begin{proposition}\label{prop:PE}
    In the model of mandatory versus discretionary spending, for any $(\theta_1,\theta_2)$ there exists a unique compromise allocation $x^*=(x_1^*,x_2^*,g^*)$  in which $g^*=\theta_1+\theta_2$ while
    \begin{align*}
x^*_1 &= \frac{1}{g^*}\left( \theta_1(1 - g^*)+\theta_1\theta_2\log\left(\frac{\theta_2}{\theta_1}\right) \right) \\
x^*_2 &= \frac{1}{g^*}\left( \theta_2(1-g^*) +\theta_1\theta_2 \log\left(\frac{\theta_1}{\theta_2}\right) \right)
    \end{align*}
\end{proposition}

In the symmetric case in which $\theta_1 = \theta_2 = \theta$, the compromise allocation divides the available private budget equally:
$x_1^* = x_2^* = \frac{1 - 2\theta}{2}$. More generally, Proposition~\ref{prop:PE} implies that \[
\log(\frac{\theta_1}{\theta_2}) = \frac{x^*_2}{\theta_2}-\frac{x^*_1}{\theta_1}.
\] So when $\theta_1>\theta_2$ we must have $\frac{x^*_2}{\theta_2}>\frac{x^*_1}{\theta_1}$. When Party 1 cares more about the public good, its  consumption of the private good, normalized by his taste parameter, is lower than Party 2's.  In other words, at the compromise, Party 1 ``pays'' for higher public good provision through relatively lower private consumption (after adjusting for preferences).

\subsection{Private consumption with other-regarding preferences}\label{sec:fehr-schmidt}

As an application of the compromise solution, we consider a model of private consumption with other-regarding preferences. In particular, we consider one of the most widely applied models of social preferences and inequality aversion: \cite{fehr1999theory}. This model has been used to explain deviations from rational behavior in several settings, such as the ultimatum game. Indeed, in controlled lab experiments, subjects tend to reject unfair offers and be more generous than one would expect from a selfish and rational agent. Fairness is a central notion in the theory of \cite{fehr1999theory}, and thus arguably helps motivate the implementation of a compromise solution. 

The parametric nature of the model is particularly useful, as it allows us to derive closed-form solutions and obtain natural comparative statics. 

Within this section, we let $X=\{(x_1,x_2)\mid x_1,x_2\geq 0 \text{ and } x_1+x_2\leq 1\}$ be the outcome space where $x_1$ and $x_2$ respectively denote player 1 and player 2's monetary payoffs. This outcome space represents the division of a dollar between two players where ``waste'' (in the form of $x_1+x_2<1$) is allowed. Each player $i$, $i=1,2$, has a Fehr-Schmidt utility function with:
\begin{align*}
    U_i(x_i, x_j) &= x_i - \alpha_i \max\{x_j-x_i, 0\} - \beta_i \max\{x_i-x_j, 0\},
\end{align*}
where $i\neq j$, $ \beta_i\leq \alpha_i$, and $0< \beta_i <1$. Recall that $\alpha_i$ measures how much utility player $i$ loses when he obtains less than player $j$  (i.e., envy). In contrast,  $\beta_i$ measures how much utility player $i$ loses when he obtains more than player $j\neq i$ (i.e., guilt). Several works have attempted to estimate these parameters: one of the most recent works in this area, the meta-analysis by \cite{Nunnaripozzi2025}, finds that the average $\alpha$ is  0.533 and the average $\beta$ equals 0.326 relying on  structural estimation.

\subsubsection{Pareto Efficient allocations}
Remark first that any Pareto efficient allocation is such that $x_1+x_2=1$; that is, it lies on the feasibility frontier.\footnote{To see why, observe that for any allocation $(x_1,x_2)$ with $x_1+x_2<1$ we can construct an allocation $(x'_1,x'_2)=(x_1+\varepsilon,x_2+\varepsilon)$, with $\varepsilon>0$ small enough, where $x'_2-x'_1=x_2-x_1$. Thus, both players' utilities increase when shifting from $(x_1,x_2)$ to $(x'_1,x'_2)$, contradicting that $(x_1,x_2)$ is Pareto efficient. }

For simplicity, we parametrize any allocation in the budget line by Player 1's share $x$, where $x \in [0, 1]$ so that Player 2 receives $1-x$. This parametrization allows rewriting the utility functions as follows. 

If $x<1/2$, then:
\begin{align*}
    U_1(x) &= x - \alpha_1(1-2x)=(1+2\alpha_1)x-\alpha_1,\\
    U_2(x) &= (1-x) - \beta_2(1-2x)=(2\beta_2-1)x+(1-\beta_2),
\end{align*}
whereas when $x>1/2$, we obtain that
\begin{align*}
    U_1(x) &= x - \beta_1(2x-1)=(1-2\beta_1)x+\beta_1,\\
    U_2(x) &= 1-x - \alpha_2(2x-1)=-(1+2\alpha_2)x+(1+\alpha_2).
\end{align*}

As $\alpha_1>0$, the marginal utility of Player 1 w.r.t. $x$ is then positive if $x<1/2$. When $x>1/2$, its sign depends on the value of $\beta_1$: it is positive if $\beta_1<1/2$ and negative otherwise. Likewise, the marginal utility of Player 2 w.r.t. $x$ is negative if $x>1/2$ as $\alpha_2>0$. If $x<1/2$, its sign depends on the value of $\beta_2$: it is positive if $\beta_2>1/2$ and negative otherwise. We do not consider the case where some agent has $\beta_i=1/2$ as it generates a large set of indifferences.

Observing that an allocation is Pareto optimal if and only if the marginal utilities of the players have opposite signs (i.e., $\text{sgn}(U'_1(x)) \neq \text{sgn}(U'_2(x))$) allows us to derive the set of Pareto optimal allocations, as summarized in Table \ref{table:pofs}.

Note that the envy parameters ($\alpha_1, \alpha_2$) do not affect the set of Pareto allocations. This is because standard Fehr-Schmidt assumptions require $\alpha_i \geq 0$, ensuring that the marginal utility in the domain of disadvantageous inequality ($1+2\alpha_i$) is always strictly positive. Therefore, only the guilt parameters ($\beta_1, \beta_2$) determine the sign changes in marginal utility and, consequently, the Pareto optimal regions.

\begin{table}[ht]
    \centering
    \renewcommand{\arraystretch}{1.5}
    \setlength{\tabcolsep}{12pt}
    \begin{tabular}{cc|cc}
        \toprule
         & & \multicolumn{2}{c}{\textbf{Player 2's guilt}} \\
         & & $\beta_2 < 1/2$ & $\beta_2 > 1/2$ \\
        \midrule
        \multirow{2}{*}{\textbf{Player 1's guilt}}
        & $\beta_1 < 1/2$
        & $x \in [0, 1]$
        & $x \in [1/2, 1]$ \\
        \cmidrule{2-4}
        & $\beta_1 > 1/2$
        & $x \in [0, 1/2]$
        & $x = 1/2$ \\
        \bottomrule
    \end{tabular}
    \caption{Pareto optimal allocations in the Fehr-Schmidt model. \footnotesize All allocations lie on the feasibility frontier where $x_1+x_2=1$. Each allocation is parametrized by $x$, Player 1's monetary payoff.}
    \label{table:pofs}
\end{table}

\subsubsection{Indifference curves}

To determine the compromise solution $x^*$, we first determine the indifference curves for each player and then compute the measure of the upper-contour sets (where we rely on the fact that the measure of any upper-contour set is equal to the measure of $X$ minus the measure of the lower-contour set).

\medskip

\paragraph*{Players' indifference curves.}

Consider the utility function of Player 1. The derivation of 2's indifference curve is analogous. To simplify notation, we let $\alpha = \alpha_1$ and $\beta = \beta_1$. For any allocation $(x,y) \in X$, the utility of Player 1 is:
\[
    U(x,y) = x - \alpha\max\{y-x,0\} - \beta\max\{x-y,0\}.
\]
This simplifies to:
\begin{align*}
    U(x,y) &= (1+\alpha)x - \alpha y, &\quad \text{if } x < y \text{ (Envy)},\\
    U(x,y) &= (1-\beta)x + \beta y,  &\quad \text{if } x \ge y \text{ (Guilt)}.
\end{align*}

It follows that each indifference curve is piecewise linear. We denote the first part as the ``envy segment'' (dependent only on $\alpha$) and the second part as the ``guilt segment'' (dependent only on $\beta$). The curve is increasing with slope $\frac{1+\alpha}{\alpha}$ when $x < y$ (since $\alpha > 0$) and decreasing with slope $-\frac{1-\beta}{\beta}$ when $x \ge y$ (provided $\beta > 0$; it is vertical if $\beta=0$).

The Envy and Guilt segments intersect at the 45-degree line where $x=y$, creating a kink in the indifference curve. Specifically, for any utility level $\bar{u} \in [0,1/2]$, the allocation $(\bar{u}, \bar{u})$ lies on the indifference curve of level $\bar{u}$.

\subsubsection{Compromise and Inequality aversion}

We consider the outcome space $X = \{ (x_1, x_2) \in \mathbb{R}^2_+ \mid x_1 + x_2 \le 1 \}$. The Lebesgue measure of $X$, denoted $\nu(X)$, is $1/2$. 

\begin{theorem}\label{theo:fs} Assume that Players 1 and 2 have Fehr-Schmidt preferences over $X$ with parameters $(\alpha_1,\beta_1,\alpha_2,\beta_2)$. Suppose that $\beta_1,\beta_2<1/2$.  There exists a unique compromise alternative $x^*$:
\begin{enumerate}
    \item If $\beta_1=\beta_2$, then $x^*=1/2$.
\item If $\beta_1>\beta_2$, then $x^*<1/2$.
\item If $\beta_1< \beta_2$, then $x^*>1/2$.
\end{enumerate}
    \end{theorem}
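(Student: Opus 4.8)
The plan is to reduce the problem to a one-dimensional comparison along the Pareto frontier and then to carry out a single explicit area computation at the symmetric allocation.

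First, I would argue that every compromise lies on the feasibility frontier $x_1+x_2=1$. Since $\beta_1,\beta_2<1/2$, Table~\ref{table:pofs} shows that the entire frontier is Pareto optimal; conversely, I would check that the problem is regular—both players are locally non-satiated at every frontier point other than the corners $(1,0)$ and $(0,1)$, because $U_i$ is strictly monotone in one's own share along the frontier, and the corners are never compromises since there the smaller lower-contour measure is essentially zero—so Lemma~\ref{lem:PO} applies and forces the compromise onto the frontier with $\nu(L_1)=\nu(L_2)$. Parametrizing the frontier by Player 1's share $x\in[0,1]$, I would then show that $x\mapsto\nu(L_1(x,1-x))$ is strictly increasing and $x\mapsto\nu(L_2(x,1-x))$ strictly decreasing: $U_1$ is strictly increasing and $U_2$ strictly decreasing in $x$ on the frontier, so Lemma~\ref{lem:largeset} (together with the continuity from Lemma~\ref{lem:cont}) gives strict monotonicity of the two measures. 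Consequently the two curves cross exactly once, and that crossing is the unique compromise $x^*$; locating $x^*$ relative to $1/2$ reduces to evaluating the sign of $\nu(L_1)-\nu(L_2)$ at $x=1/2$.

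The heart of the argument is the computation of $\nu(L_i)$ at the symmetric allocation $(1/2,1/2)$, where $U_i=1/2$. I would work with the upper-contour set, since $\nu(L_i)=\tfrac12-\nu(U_i)$ by thin indifference curves. The indifference curve of Player 1 at level $1/2$ is the V kinked at $(1/2,1/2)$, with an envy segment of slope $(1+\alpha_1)/\alpha_1$ and a guilt segment of slope $-(1-\beta_1)/\beta_1$. The key observation is that the envy segment immediately leaves $X$: on it $x_1<x_2$ forces $x_1>1/2$, hence $x_1+x_2>1$, so only the guilt segment meets the triangle. Because $\beta_1<1/2$ the guilt segment is steeper than the frontier and runs from $(1/2,1/2)$ down to the point $\bigl(1/2+\tfrac{\beta_1}{2(1-\beta_1)},0\bigr)$ on the horizontal axis. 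The upper-contour set is then the triangle with vertices $(1/2,1/2)$, $\bigl(1/2+\tfrac{\beta_1}{2(1-\beta_1)},0\bigr)$ and $(1,0)$, whose area is
\[
\nu(U_1)=\frac{1-2\beta_1}{8(1-\beta_1)},
\]
independent of $\alpha_1$. The analogous (mirror-image) computation for Player 2 gives $\nu(U_2)=\dfrac{1-2\beta_2}{8(1-\beta_2)}$, so that $\nu(L_i(1/2,1/2))=\tfrac12-\dfrac{1-2\beta_i}{8(1-\beta_i)}$.

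Finally I would compare. Writing $h(\beta)=\frac{1-2\beta}{1-\beta}$, a direct derivative computation gives $h'(\beta)=-(1-\beta)^{-2}<0$, so $h$ is strictly decreasing on $[0,1/2)$. Since $\nu(L_1(1/2))-\nu(L_2(1/2))=\tfrac18\bigl(h(\beta_2)-h(\beta_1)\bigr)$, equality holds at $x=1/2$ exactly when $\beta_1=\beta_2$, which with the single-crossing property yields $x^*=1/2$ and proves part (1). If $\beta_1>\beta_2$ then $h(\beta_1)<h(\beta_2)$, so $\nu(L_1(1/2))>\nu(L_2(1/2))$; because $\nu(L_1)$ increases and $\nu(L_2)$ decreases along the frontier, the crossing point lies strictly to the left of $1/2$, i.e. $x^*<1/2$, giving part (2), and the symmetric argument gives part (3). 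The step I expect to be the main obstacle is the contour-set computation at $(1/2,1/2)$—in particular, noticing that the envy segment exits $X$ so that $\alpha_i$ drops out entirely—together with the bookkeeping needed to confirm regularity and the strict monotonicity of the two measures along the frontier.
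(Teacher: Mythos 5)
Your proposal is correct and follows essentially the same route as the paper: reduce to the Pareto frontier, establish that $\nu(L_1)$ is strictly increasing and $\nu(L_2)$ strictly decreasing in Player 1's share, and then locate the unique crossing relative to $x=1/2$ by computing the contour-set measures at the equal split (your $\tfrac12-\tfrac{1-2\beta_i}{8(1-\beta_i)}$ agrees with the paper's $\tfrac{3-2\beta_i}{8(1-\beta_i)}$, and your observation that the envy segment exits $X$ so that $\alpha_i$ drops out is exactly why the paper's formula at $x=1/2$ depends only on $\beta_i$). The only difference is economy: the paper derives full closed-form expressions for $\nu(L_i(x,1-x))$ over three regions before specializing to $x=1/2$, whereas you compute only the single value needed, and you are somewhat more explicit about regularity and the reduction to the frontier.
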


The player with the larger guilt parameter obtains a lower share. Conditional on having a lower share, the more envious a player is, the harder it is to satisfy her.

\subsection{Spatial voting and facility location.} A popular model in mechanism design is the ``facility location'' problem in which a location in Euclidean space has to be chosen \citep{OWEN1998423}. The same model is very common in political science, where a ``spatial'' structure is given to the set of possible policies (see, for example, \cite{otto1970} and \cite{border1983}). We discuss a simple example, employing the facility location terminology.

Two agents are to share a common facility. All they care about is the facility's location, $x\in\Re^2$. Suppose that the set of feasible locations forms a rectangle $X$, endowed with the Lebesgue measure. Each agent has Euclidean preferences, with ideal points $i_1$ and $i_2$. An example is illustrated in Figure~\ref{fig:location}.

\begin{figure}
\begin{tikzpicture}[scale=4]

    \coordinate (A) at (0,0);
  \coordinate (B) at (2,0);
  \coordinate (C) at (2,1);
  \coordinate (D) at (0,1);

\coordinate (E) at (5/4,3/4);
  
    \coordinate (i1) at (1,1);
  \coordinate (i2) at ({1 + (sqrt(2)/4 + 1/4)/sqrt(2)}, {1 - (sqrt(2)/4 + 1/4)/sqrt(2)});

    \draw[line width=0.1mm] (A) -- (B) -- (C) -- (D) -- cycle;

    \begin{scope}
    \clip (A) rectangle (C);
    \fill[cyan!30] (i1) circle ({sqrt(2)/4});
    \draw[blue, thick] (i1) circle ({sqrt(2)/4});
           \draw[blue, thick,dashed] (i1) circle (.1);
                      \draw[blue, thick,dashed] (i1) circle (.175);
                     \draw[blue, thick,dashed] (i1) circle (.25);
  \end{scope}

    \fill[red!30] (i2) circle (1/4);
  \draw[red, thick] (i2) circle (1/4);

          \draw[red!60, thick,dashed] (i2) circle (.1);
            \draw[red!60, thick,dashed] (i2) circle (.15);
                  \draw[red!60, thick,dashed] (i2) circle (.2);

    \filldraw[black] (i1) circle (0.01) node[above left] { $i_1$};
  \filldraw[black] (i2) circle (0.01) node[below right] { $i_2$};

  \filldraw[black] (E) circle (0.01) node[anchor= west] { $x^*$};
  
\end{tikzpicture}
\caption{Facility location example.}\label{fig:location}
\end{figure}
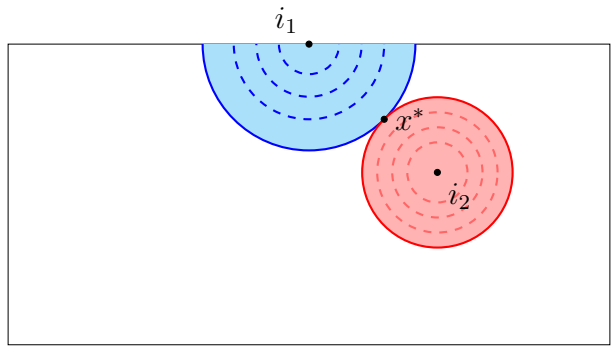

In the example, the measures of the lower and upper contour sets (at any point) add up to one. So the compromise point is achieved at a point that equalizes the measures of the upper contour sets: for graphical simplicity, we depict some dashed indifference curves and the upper contour sets at $x^*$ instead of the lower contour sets. The cyan half-disk centered at $i_1$, which is 1's upper contour set at $x^*$, has the same area as 2's upper contour set at $x^*$, represented by the red disk centered in $i_2$. Note that the radius of the larger circle is $\sqrt{2}$ times the radius of the smaller circle, and hence $x^*$ is closer to 2's ideal point than to 1's. The reason is that for a given radius, 1's upper contour set has $1/2$ the area of 2's. In equalizing the areas of the two upper contour sets, the compromise solution settles on a point closer to $i_2$ than to $i_1$.

\subsection{Agreeing on a lottery.} Here we revisit the setting from our story of the two Roman consuls in the introduction. Two agents have agreed to use a lottery to select one of the possible deterministic outcomes in $O=\{a,b,c\}$ but need to find an agreement on which lottery to use.\footnote{While using lotteries in allocation problems might be controversial, they are used in many important applications such as school choice, visas, etc. See \cite{bouacida2025rituals} for recent work on the social acceptability of lotteries and \cite{agranov2017stochastic} for experimental evidence on how individuals might want to rely on a randomization device to make their choice when indifferent. } We denote by $X=\Delta(O)$ the probability simplex over $O$. Assuming that players have expected utilities (i.e., linear utilities over $\Delta(O)$) implies that this example satisfies the conditions of Proposition \ref{prop:thinindiff} and thus that the problem is regular, as depicted by Figure \ref{fig:simplex}.

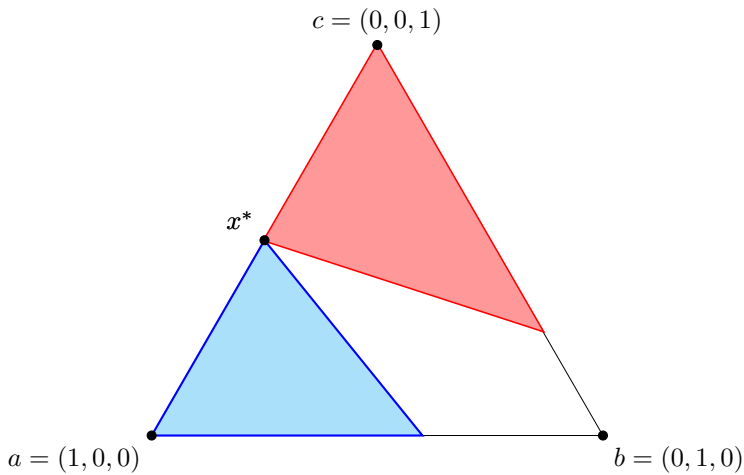
\begin{figure}
\centering
\begin{tikzpicture}[scale=6]

\coordinate (A) at (0,0);                   \coordinate (B) at (1,0);                   \coordinate (C) at (0.5,{sqrt(3)/2});       \coordinate (D) at (0.25,0.4330); 
    \coordinate (E) at (0.25,0.2330); 
  
\draw[line width=0.1mm] (A) -- (B) -- (C) -- cycle;

\node[below left] at (A) {\footnotesize $a=(1,0,0)$};
\node[below right] at (B) {\footnotesize $b=(0,1,0)$};
\node[above] at (C) {\footnotesize $c=(0,0,1)$};
\node[above left] at (D) {\footnotesize $x^*$};

\coordinate (P1) at (0, 0);                        \coordinate (P2) at (0.6, 0);                   \coordinate (P3) at (0.25, 0.4330);              
\fill[cyan!30] (P1) -- (P2) -- (P3) -- cycle;

\coordinate (Q1) at (0.5, 0.8660);                 \coordinate (Q2) at (0.25, 0.4330);                \coordinate (Q3) at (0.865, 0.2330);              \coordinate (S2) at (0.3, 0.5196);                \coordinate (S3) at (1.038, 0.2796);              \coordinate (T2) at (0.25, 0.4330);                \coordinate (T3) at (0.865, 0.2330);               
  \draw[red,very thick] (Q1) -- (Q2) -- (Q3) -- cycle;
    \draw[blue,thick] (P1) -- (P2) -- (P3) -- cycle;
  
\fill[red!40] (Q1) -- (Q2) -- (Q3) -- cycle;
 
  \filldraw[black] (D) circle (0.01) node[above left] {\footnotesize $x^*$};
    \filldraw[black] (A) circle (0.01) node[above left] {};
      \filldraw[black] (B) circle (0.01) node[above left] {};
        \filldraw[black] (C) circle (0.01) node[above left] {};
\end{tikzpicture}
\caption{Compromise in the simplex.}\label{fig:simplex}
\end{figure}

In the example, agent 1 and 2 have linear preferences over $\Delta(X)$ (that is, they are endowed with vN-M utilities). This means that their indifference curves in $\Delta(X)$ are straight parallel lines. Moreover, we consider that $u_1=(1,m,0)$ and $u_2=(0,n,1)$ with $0<m,n<1/2$. This implies that any Pareto efficient lottery assigns zero weight to $b$ and thus is located in the edge $ac$.\footnote{Indeed any lottery $p=(p_a,p_b,p_c)$ with $p_b>0$ is Pareto dominated by $p^\epsilon=(p_a+\epsilon/2,p_b-\epsilon,p_c+\epsilon/2)$ which is defined for $\epsilon>0$ small enough.} The measures of the lower and upper contour sets (at any point) add up to one. So, as in the facility location problem, the compromise point is achieved at a point that equalizes the measures of the upper contour sets: for graphical simplicity, we depict the upper contour sets instead of the lower contour sets. The cyan triangle represents 1's upper contour set at $x^*$ (whose ideal point is $a$), has the same area as 2's upper contour set at $x^*$ (whose ideal point is $c$), represented by the red triangle.

\section{Discussion \label{sec:discussion}}

Two-person implementation is hard. \cite{maskin99} and \cite{hurwicz1978construction} prove impossibility theorems for two-person Nash implementability of any Pareto optimal social choice function. The impossibility results of \cite{hurwicz1978construction} and \cite{maskin99} are very general. 
\cite{moorerepulloNash} and \cite{dutta1991necessary} provide necessary and sufficient conditions for an environment to ensure that implementation is feasible. These conditions can be difficult to check. \cite{sjostrom1991necessary} provides equivalent conditions along with an algorithm for checking them. 

In the case when $A$ is a finite-dimensional Euclidean space, \cite{moore1988subgame} show that, if their ``condition $E$'' holds, any social choice function that satisfies Maskin monotonicity and nonempty lower intersection can be Nash implemented. Our convex environments of Section~\ref{sec:convex}  satisfy their condition $E$, but (as we shall see) our solution may fail to be monotonic. Likewise, \cite{dutta1991necessary} provides a domain restriction where implementation is feasible via condition $\beta^*$ when $A$ is a compact finite-dimensional Euclidean space. Like us, they also consider an application where both agents choose a lottery in the simplex, but they impose several restrictions on  preferences over  lotteries (such as lack of unanimity) that we do not require. \cite{vartiainen2007nash} considers Nash implementation of bargaining solutions and proves that the exercise is not easy, as no Pareto-optimal and symmetric bargaining solutions can be Nash implemented.

\subsection{Nash implementation} We first comment on how our contribution relates to the literature on Nash implementation. First, we present an example to show that the compromise rule fails Maskin monotonicity. Since Maskin monotonicity is  a necessary condition for Nash implementation, the compromise rule is not Nash implementable. Then we generalize Maskin's negative result to the environments we have considered in our paper.

\begin{example}[Maskin monotonicity]
Let $X=[0,1]$ and suppose that preferences $\succeq_i$, $i=1,2$ are Euclidean with ideal points $x_1=0$ and $x_2=1$. The compromise is then $x^*=1/2$. Consider a preference profile $(\succeq'_1,\succeq'_2)$ in which $\succeq_1=\succeq'_1$ while $\succeq'_2$ is Euclidean with ideal point equal to $1/2$. Then it is clear that the lower contour set of agent 1 at $1/2$ is the same, while the lower contour set of agent 2 at $1/2$ is $X$, so it includes the lower contour set when 2's ideal point was $x_2=1$. Now, however, it is easy to see that $1/2$ is no longer a compromise. It is $1/3$.\footnote{Remark that $\nu(L_1(x))=1-x$ and $\nu(L_2(x))=1-2\vert x-\frac{1}{2}\vert$. The compromise option $x^*=1/3$ with $\nu(L_1(x^*))=\nu(L_2(x))=2/3$.} So Maskin monotonicity is violated.
\end{example}

Our second remark on Nash implementation is  much more general. There are well-known impossibility results regarding the Nash implementation of Pareto optimal social choice functions. These impossibility results impose certain conditions on the environments of interest. So it is natural to ask whether the impossibility results hold for the class of environments we have assumed (in particular, in environments in which condition $E$ of \cite{moore1988subgame} holds). The answer is negative if one considers continuous mechanisms, as we proceed to show in Proposition~\ref{prop:maskin}.

Let $R_i$ be the set of preference profiles of agent $i$. A social choice function $f:R_1\times R_2 \rightrightarrows X$ associates the set $f(\succeq_1,\succeq_2)\subseteq X$ to any pair of preference orderings. We say that $f$ is \df{Pareto optimal} if any $x\in f(\succeq_1,\succeq_2)$
is not Pareto dominated according to the preference profile $(\succeq_1,\succeq_2)$. A mechanism $(g,S_1,S_2)$ is \df{continuous} if each $S_i$ is a compact metric space and $g:S_1\times S_2\to X$ is continuous. 

\begin{proposition}\label{prop:maskin} Let $X=[0,1]$ and 
   $f$ be a Pareto optimal social choice function. If $(g,S_1,S_2)$ a continuous mechanism that implements $f$ in Nash equilibrium, then $f$ is dictatorial.
\end{proposition}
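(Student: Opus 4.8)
The statement is a two-person Nash-implementation impossibility of the classical type (in the spirit of Hurwicz-Schmeidler and Maskin), but restricted to \emph{continuous} mechanisms on the one-dimensional policy space $X=[0,1]$. The plan is to argue by contradiction: suppose $(g,S_1,S_2)$ is a continuous mechanism Nash-implementing the Pareto optimal correspondence $f$, and show that one agent must be a dictator. The engine of the argument is that continuity of $g$ together with compactness of $S_1\times S_2$ forces the image $g(S_1\times S_2)$ and the equilibrium outcomes to vary continuously as preferences vary, and this rigidity clashes with Pareto optimality unless one agent's preference alone pins down the outcome.

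\textbf{Key steps.} First I would fix notation for Nash equilibria: given a profile $(\succeq_1,\succeq_2)$, a profile $(s_1,s_2)$ is an equilibrium if neither agent can deviate to a strategy yielding a strictly preferred outcome under $g$, and implementation means the set of equilibrium outcomes equals $f(\succeq_1,\succeq_2)$. Second, I would exploit the structure of $X=[0,1]$ and Pareto optimality: on the interval, the Pareto set for a profile of (say) single-peaked preferences with peaks $p_1\le p_2$ is exactly the interval $[p_1,p_2]$, so $f$ must select from this interval; as we slide the peaks, the admissible outcomes sweep the whole interval. Third — and this is where continuity does the work — I would show that the equilibrium outcome correspondence cannot jump: because $g$ is continuous and $S_1,S_2$ are compact, small perturbations of one agent's preference that leave the other agent's best-response structure essentially unchanged must leave the set of equilibrium outcomes ``close,'' yet Pareto optimality demands the outcome track the moving peak. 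The contradiction is resolved only if the outcome is a continuous function of a \emph{single} agent's preference, i.e.\ that agent dictates.

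\textbf{Main obstacle.} The hard part will be converting the topological rigidity into a clean dictatorship conclusion rather than a vaguer ``discontinuity'' statement. Specifically, I expect the delicate step to be ruling out mixed or non-dictatorial selections: I need to show that whenever one agent can, by varying only her own preference, force the outcome to move throughout $[0,1]$ while the other's preference is held fixed, Pareto optimality plus continuity of $g$ forbids the second agent from having any influence at all. I would handle this by a connectedness argument on the equilibrium strategy set — using that $g^{-1}$ of a point is closed and that the equilibrium correspondence is upper hemicontinuous on the compact space $S_1\times S_2$ — to argue that the ``pivotal'' agent is globally pivotal, hence a dictator. Verifying that the perturbations preserve the equilibrium conditions (so that one may genuinely compare equilibrium outcomes across nearby profiles) is the most technical point and is where the continuity hypothesis on the mechanism is indispensable.
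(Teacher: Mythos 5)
There is a genuine gap: your proposal describes an intended strategy (continuity of the equilibrium-outcome correspondence in preferences, plus a connectedness argument) but never supplies the construction that actually produces the contradiction, and the route you sketch does not confront the real difficulty. Pareto optimality constrains only the \emph{equilibrium outcomes}, not the off-path behavior of $g$; for single-peaked profiles with peaks $p_1\le p_2$ it merely requires the outcome to lie somewhere in $[p_1,p_2]$, so it is false that ``the outcome must track the moving peak,'' and upper hemicontinuity of the equilibrium correspondence by itself cannot rule out non-dictatorial selections whose outcomes happen to vary continuously. You acknowledge that converting ``topological rigidity'' into dictatorship is the hard part, but that hard part is the entire theorem, and no concrete mechanism for it is given.

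The paper's proof works with the attainable (option) sets $C_i(s_{-i})=\{g(\tilde s_i,s_{-i}):\tilde s_i\in S_i\}$, which are compact precisely because $g$ is continuous and $S_i$ is compact --- this is the only place continuity enters, and it is indispensable. The first key claim is that $C_1(s_2)\cup C_2(s_1)=X$ for \emph{every} strategy profile $s$: if some $x$ lay outside this closed union, one constructs continuous preferences for both agents with global peak at $x$ and a secondary peak at $y=g(s)$ outside a small ball around $x$; then $s$ is a Nash equilibrium with a Pareto-dominated outcome, contradicting Pareto optimality of $f$. The second claim is that any $x\in\bigcap_{s_2}C_1(s_2)$ can be \emph{forced} by agent $1$ via a single strategy $s_1^*$ with $g(s_1^*,s_2)=x$ for all $s_2$, proved by taking diametrically opposed preferences $u_1(z)=-|x-z|$, $u_2(z)=|x-z|$ and examining an equilibrium. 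Combining the two claims shows $X$ equals $\bigcap_{s_2}C_1(s_2)$ or $\bigcap_{s_1}C_2(s_1)$, which is dictatorship. Your proposal contains neither the covering claim nor the forcing claim, and without constructions of specific preference profiles that exploit the freedom in the domain, the argument cannot be completed along the lines you describe.
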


\begin{proof}Note that implementation requires the existence of a Nash equilibrium for any utility in the domain. We shall make use of the existence property. We mimic the steps in Maskin's proof.

Let $C_i(s_{-i})=\{g(\tilde s_i,s_{-i}) : \tilde s_i\in S_i \}$ be all the outcomes that $i$ can reach when $-i$ chooses $s_{-i}$. Then $C_i(s_{-i})$ is compact because $S_i$ is compact and $g(\cdot,s_{-i})$ is continuous.

\medskip

\textbf{Claim 1:} For all $s=(s_1,s_2)$, $C_1(s_{2})\cup C_2(s_{1})=X$.

Suppose, towards a contradiction, that there exist $s=(s_1,s_2)$ and $x\in X$ with $x\notin C_1(s_{2})\cup C_2(s_{1})$. Let $y=g(s)$ so that $y\in C_1(s_{2})\cup C_2(s_{1})$. Since  $C_1(s_{2})\cup C_2(s_{1})$ is closed, there is a small enough ball $B_x$ centered at $x$ so that 
$$B_x\subseteq [ C_1(s_{2})\cup C_2(s_{1}) ]^c.$$

So $y\in B_x^c$. We may now find preferences such that $x$ is the top in $X$ for both players and $y$ is the top in $B_x^c$:
Define for any $z\in X$, the function  \[v(z)=\begin{cases}
-|x-z| + \da & \text{ if } |x-z|\leq \da \\
0 & \text{ if } |x-z| > \da, \\
\end{cases}\]
and $w(z) = -|y-z|$ . Consider a utility $u_i(z)=v(z)+\ep w(z)$, choosing $\ep>0$ small enough so that $x$ is the global maximum of $u_i$. Then we have $y$ being a local maximum: a maximum of $u_i$ in $B_x^c$. Moreover, these utilities are in our domain because $v$ and $w$ are continuous and  $u_i$ is not constant in any open subset of $X$.

This means that $s$ is a Nash equilibrium, but its outcome is Pareto dominated, a contradiction. 

\medskip

\textbf{Claim 2:} If $x\in \bigcap_{s_2\in S_2} C_1(s_2)$ then there is $s^*_1\in S_1$ with $g(s^*_1,s_2)=x$ for all $s_2\in S_2$. Similarly for $x\in \bigcap_{s_1\in S_1} C_2(s_1)$. 

We may choose two utility functions $u_1(z)=-|x-z|$ and $u_2(z)=|x-z|$. Let $s^*$ be a Nash equilibrium (whose existence is assured by the hypothesis that $g$ Nash implements the social choice function). Then since $x\in \bigcap_{s_2\in S_2} C_1(s_2)$ we must have that $g(s^*)=x$. Now, if $g(s^*_1,s_2)\neq x$ for some $s_2$, then 2 would deviate to $s_2$, which would not be compatible with $s^*$ being a Nash equilibrium. So we conclude that $g(s^*_1,s_2)= x$ for all $s_2\in S_2$, as wanted.

\medskip

Now to finish the proof, let $x\in X$ and suppose that $x\notin \bigcap_{s_2\in S_2} C_1(s_2)$. Then       
there exists $\bar s_2\in S_2$ with $x\notin C_1(\bar s_2)$. But for any       
$s_1\in S_1$, $x\in C_1(\bar s_2)\cup C_2(s_1)$, by Claim 1. Thus  $x\in \bigcap_{s_1\in      
S_1} C_2(s_1)$. We conclude that                                               
$$X = \left(\bigcap_{s_2\in S_2} C_1(s_2)\right)\cup \left(\bigcap_{s_1\in S_1}      
C_2(s_1)\right).$$                                                             
                                                                               
Now we prove that, actually,                                                   
$X= \bigcap_{s_2\in S_2} C_1(s_2)$,                                               
or                                                                             
$X=\bigcap_{s_1\in S_1} C_2(s_1)$. To this end,                                   
suppose that there is $x,y\in X$, $x\neq y$, with                              
$x\in \bigcap_{s_2\in S_2} C_1(s_2)$ while                                        
$y\in \bigcap_{s_1\in S_1} C_2(s_1)$.                                             
                                                                              
By Claim 2 there exists $s^*_1$ with $x=g(s^*_1,s_2)$ for all $s_2$;           
and  $s^*_2$ with $y=g(s_1,s^*_2)$ for all $s_1$. But then we have             
$x=g(s^*_1,s^*_2)$ and $y=g(s^*_1,s^*_2)$, which is absurd.                    
                                                                               
Say that $X= \bigcap_{s_2\in S_2} C_1(s_2)$. Then by Claim 2, 1 is a dictator. A similar argument applies when $X= \bigcap_{s_1\in S_1} C_2(s_1)$, concluding the proof. \end{proof}

\subsection{Subgame perfect implementation }\label{sec:spnediscussion}
Our main result is on subgame-perfect implementation and is therefore closely related to the literature trying to understand which social choice functions are implementable in subgame-perfect equilibrium. One important difference with that literature is that our result provides a simple and (arguably) natural mechanism that implements the social choice function that we have focused on, the compromise rule. In line with the Jackson critique, one might say that the mechanism matters; thus, proposing a tractable and realistic mechanism is inherently valuable. Moreover, we work in an environment with two agents: the problem of subgame-perfect implementation with two agents remains open (\cite{vartiainen2007subgame} provides a complete characterization for three or more players). 

The first papers characterizing subgame perfect implementation are \cite{moore1988subgame} and \cite{abreu1990subgame}. \cite{moore1988subgame} provide a sufficient condition for a social choice function to be implementable. Their condition is not satisfied by our compromise function, as it requires 1) that the maximal elements of both agents are disjoint, and 2) the existence of a bad outcome.\footnote{Moreover, they prove the remarkable result that almost any social choice function can be implemented as the unique subgame perfect equilibrium of a carefully designed dynamic mechanism, provided that transfers are allowed. 
Our environment does not feature transfers, and their result is therefore not applicable. Subgame perfect implementation has been criticized by \cite{aghion2012subgame} and \cite{aghion2018role} for its reliance on complete information, but their results do not strictly speaking apply to our model with continuous strategy spaces.}
 \cite{abreu1990subgame} introduce a condition they call $\alpha$, which is necessary and almost sufficient for this sort of implementation. This condition bears the same relation to subgame perfect implementation as Maskin monotonicity to Nash implementation. Theorem \ref{th:implementation} shows that the compromise function is implementable; as a corollary, we conclude that this social choice function satisfies condition $\alpha$ in \cite{abreu1990subgame}. 

\subsection{Comparison with the finite setting}
For models with finitely many alternatives, the literature has defined compromise alternatives in analogous ways to our definition. An agent's utility from alternative $x$ may be taken to be the cardinality of that agent's lower contour set at $x$. The compromise set is then defined as the alternatives that maximize the welfare of the worst-off party; it contains either one or two alternatives (see \cite{hurwicz1999designing} and \cite{brams2001fallback} for further reference). In this context, \cite{anbarci1993noncooperative} proposed the VAOV mechanism that subgame perfectly implements a compromise alternative with potentially many stages. \cite{barbera2022compromising} propose three games with the same property and up to four stages, all requiring players to make a shortlist of alternatives.

\subsection{On bargaining solutions}
The literature on bargaining is vast. Its two main branches are axiomatic (see \cite{nash1950bargaining} and \cite{kalai1975other} for classic solutions and \cite{moulin1984implementing} for a game implementing the latter solution) and non-cooperative, or strategic, (as in \cite{rubinstein1982perfect}'s
 alternating offers model). We do not provide an axiomatization of the compromise rule; it's obviously important, and we plan to tackle an axiomatization in future work. 
 
 An important difference from the literature on non-cooperative bargaining is that we focus on full implementation, while the literature often elaborates on the equilibria of a given bargaining protocol. 
 
 Our model works directly with the set of outcomes.  In contrast, many contributions in bargaining are defined in utility space (under some specifications, the two spaces coincide). The compromise solution that we define is related to the Equal Area solution, which equates the measures of players' dissatisfaction in a setting with a disagreement point. \cite{anbarci1993noncooperative} proves that the equilibrium outcome of VAOV converges to this solution if the alternatives are uniformly distributed over the bargaining set as the number of
alternatives grows large. \cite{anbarci1994area} characterizes it. \cite{dasgupta2007bargaining} constructs a dynamic model where players have the power to destroy the feasible set, which gives rise to an equal area solution. Finally, \cite{li2023bargaining} shows how to characterize this solution by a suitable weakening of IIA jointly with Nash's classical axioms.

\section{Conclusion \label{sec:conclusion}}
We have considered a general class of social choice problems with two agents, a ``continuum'' set of outcomes, and thin indifference curves. The model encompasses many relevant applications in political economy, bargaining, and facility location, among others. We propose a natural notion of compromise that treats both agents symmetrically. 

Our model does not allow for unlimited monetary transfers. In fact, it encompasses models without ``money'' or quasilinear utility. We propose a no-transfers mechanism, the \emph{multimatum}. Our main result is that the multimatum fully implements the compromise solution that we have proposed. The literature has emphasized the ample possibilities for full implementation whenever transfers are available; yet, without transfers, and in spatial models of policy choice, our mechanism seems to be the first to reach this goal.

The main shortcoming of our approach is that it assumes two agents: our mechanism is necessarily a bilateral procedure that relies on offers and counter-offers. Extending the current techniques to many agents seems difficult. On the other hand, implementation with two agents is well-known to be challenging. Most characterizations of fully implementable social choice functions assume three or more agents. Our contribution with two agents is relevant for many applications in economics and political economy that naturally feature two parties.

An important component of our model is the measure $\nu$. The measure should be a natural way of assessing the size of a set of options and an accepted way of cardinalizing agents' utilities to make them comparable. It would be interesting to derive the measure from first principles. Perhaps from axiomatic principles that embody the idea of size and capture what makes some cardinalizations more acceptable than others. We have not pursued an axiomatic characterization in this paper, but it remains an interesting question for future research. 

A promising direction for future research is to investigate the mechanism's performance in an experimental setting. Social preferences may influence the second mover's response to the initial offer, a critical aspect of the current mechanism.

Another direction is to consider a version of this mechanism with an outside option, that plays the role of a status quo. The outside option is always among the possible choices for each player, in addition to the opponent's offers. This allows both players to stop the bargaining whenever no received offer is better than the initial conditions. Understanding the effect of the outside option on the Multimatum equilibria seems particularly relevant in several applications.

\section{Proofs}\label{sec:proofs}

\subsection{Preliminary results}

\begin{lemma}\label{lem:openpos} If $\emptyset \neq O\subseteq X$ is open, then $\nu(O)>0$.
  \end{lemma}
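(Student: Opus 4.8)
The plan is to argue by contraposition directly from the definition of full support. Recall that $\nu$ is assumed to have full support in the sense that \emph{no proper closed subset of $X$ has probability one}. The statement to prove is essentially the standard reformulation of this property in terms of open sets, so the whole argument amounts to translating the hypothesis across complements.

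First I would suppose, towards a contradiction, that there is a non-empty open set $O \subseteq X$ with $\nu(O) = 0$. I then pass to the complement $F = X \setminus O$. Since $O$ is open, $F$ is closed; since $O \neq \emptyset$, we have $F \neq X$, so $F$ is a \emph{proper} closed subset of $X$. Finally, using $\nu(X) = 1$ and finite additivity, $\nu(F) = \nu(X) - \nu(O) = 1 - 0 = 1$. Thus $F$ is a proper closed subset of $X$ with $\nu(F) = 1$, directly contradicting the full-support hypothesis. This contradiction shows that every non-empty open set must have $\nu(O) > 0$.

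There is no real obstacle here: the only points requiring a moment's care are checking that the complement of a non-empty open set is a \emph{proper} closed set (proper because $O$ is non-empty, closed because $O$ is open) and that the measures add to one because $\nu$ is a probability measure. Neither non-atomicity nor connectedness nor the metric structure is needed for this particular lemma; the result is purely a restatement of what full support means.

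\begin{proof}
Suppose, towards a contradiction, that $\nu(O)=0$. Let $F=X\setminus O$. Since $O$ is open, $F$ is closed, and since $O\neq\emptyset$, $F$ is a proper subset of $X$. Moreover, $\nu(F)=\nu(X)-\nu(O)=1-0=1$. Hence $F$ is a proper closed subset of $X$ with probability one, contradicting the assumption that $\nu$ has full support. Therefore $\nu(O)>0$.
\end{proof}
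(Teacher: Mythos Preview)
Your proof is correct and takes essentially the same approach as the paper: the paper's proof simply states that ``this property is equivalent to the full support of $\nu$,'' and your argument spells out that equivalence via the obvious contrapositive. You have just written out the one-line observation in full detail.
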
 \begin{proof}This property is equivalent to the full support of $\nu$.\end{proof}

\begin{lemma}\label{lem:largeset}If $x'\succ x$ then $\nu(L_{\succeq}(x'))>\nu(L_{\succeq}(x))$.
\end{lemma}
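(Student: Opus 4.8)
The plan is to prove the strict inequality by exhibiting a nonempty \emph{open} set of alternatives lying strictly between $x$ and $x'$ in preference, and then invoking the full support of $\nu$. First I would record the easy monotonicity half: since $x' \succ x$ gives $x' \succeq x$, transitivity yields $L_{\succeq}(x) \subseteq L_{\succeq}(x')$, so $\nu(L_{\succeq}(x)) \le \nu(L_{\succeq}(x'))$. The whole content of the lemma is therefore to show that the gap $L_{\succeq}(x') \setminus L_{\succeq}(x)$ carries strictly positive mass.

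Next I would isolate the relevant set. By completeness, $L_{\succeq}(x') \setminus L_{\succeq}(x) = \{z : x' \succeq z \succ x\}$, and the object to study is the strictly-between set
\[
O := \{z \in X : x' \succ z \succ x\}.
\]
I claim $O$ is open, contained in $L_{\succeq}(x') \setminus L_{\succeq}(x)$, and nonempty. Openness is immediate from continuity, because $O = X \setminus \big(U_{\succeq}(x') \cup L_{\succeq}(x)\big)$ is the complement of a union of two closed contour sets. The containment is also direct: if $x' \succ z \succ x$ then $x' \succeq z$, so $z \in L_{\succeq}(x')$, while $z \succ x$ forces $x \not\succeq z$, so $z \notin L_{\succeq}(x)$.

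The main obstacle, and the only place a hypothesis beyond continuity is used, is showing $O \neq \emptyset$; that is, ruling out that $x$ and $x'$ are ``adjacent'' with nothing strictly in between. Here I would appeal to connectedness of $X$. Set $A := U_{\succeq}(x')$ and $B := L_{\succeq}(x)$. Both are closed by continuity and nonempty, since $x' \in A$ and $x \in B$. They are disjoint: if $z \in A \cap B$ then $z \succeq x'$ and $x \succeq z$, so $x \succeq x'$ by transitivity, contradicting $x' \succ x$. If $O$ were empty we would have $X = A \sqcup B$, a partition of $X$ into two disjoint nonempty closed sets, contradicting connectedness. Hence $O \neq \emptyset$.

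Finally I would assemble the pieces. Since $O$ is nonempty and open, Lemma~\ref{lem:openpos} gives $\nu(O) > 0$. As $L_{\succeq}(x)$ and $O$ are disjoint Borel sets, both contained in $L_{\succeq}(x')$, additivity yields
\[
\nu(L_{\succeq}(x')) \ge \nu(L_{\succeq}(x)) + \nu(O) > \nu(L_{\succeq}(x)),
\]
as required. I would note in closing that neither thin indifference curves nor local strictness are needed for this argument; it rests only on completeness, transitivity, continuity, connectedness of $X$, and the full support of $\nu$ (via Lemma~\ref{lem:openpos}).
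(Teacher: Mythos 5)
Your proof is correct and follows essentially the same route as the paper's: both identify the open set of alternatives strictly between $x$ and $x'$ as the complement of $L_{\succeq}(x)\cup U_{\succeq}(x')$, use connectedness of $X$ to show it is nonempty, and invoke Lemma~\ref{lem:openpos} to give it positive measure. Your write-up merely makes explicit the disjointness and final additivity steps that the paper leaves implicit.
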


\begin{proof} By continuity of $\succeq$, $F =\{\tilde x\in X: x\succeq \tilde x \}$  and  $F'=\{\tilde x\in X: \tilde x\succeq x' \}$ are closed sets. Since $X$ is connected, $O=F^c\cap (F')^c = (F\cup F')^c\neq \emptyset$. Since $O$ is open, $\nu(O)>0$ by Lemma~\ref{lem:openpos}. But $O\subseteq L_{\succeq}(x')\setminus L_{\succeq}(x)$.
  \end{proof}

\begin{lemma}\label{lem:cont}
    The function $x\mapsto \nu(L_{\succeq}(x))$ is continuous.
\end{lemma}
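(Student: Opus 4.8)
The plan is to show that the indicator functions $\one_{L_\succeq(x_n)}$ converge to $\one_{L_\succeq(x)}$ $\nu$-almost everywhere whenever $x_n \to x$, and then pass to the limit under the integral via dominated convergence. Since $(X,\rho)$ is metric, establishing this sequential continuity at an arbitrary $x$ yields continuity of the map $x \mapsto \nu(L_\succeq(x)) = \int_X \one_{L_\succeq(x)}\, d\nu$.

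Fix $x \in X$ and a sequence $x_n \to x$. By completeness, $X$ is partitioned into the strict upper contour set $S(x) = \{y : y \succ x\}$, the indifference set $I_\succeq(x)$, and the strict lower contour set $\{y : x \succ y\}$, and I would treat each piece separately. For $y$ with $y \succ x$: note that $y \in L_\succeq(x_n)$ iff $x_n \succeq y$ iff $x_n \in U_\succeq(y)$, while $y \succ x$ is exactly the statement $x \notin U_\succeq(y)$. Since $U_\succeq(y)$ is closed by continuity, its complement is an open neighborhood of $x$, so $x_n \in U_\succeq(y)^c$ for all large $n$; hence $\one_{L_\succeq(x_n)}(y) = 0$ eventually, matching $\one_{L_\succeq(x)}(y) = 0$. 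Symmetrically, for $y$ with $x \succ y$: here $x \succ y$ is exactly $x \notin L_\succeq(y)$, and closedness of $L_\succeq(y)$ gives $x_n \in L_\succeq(y)^c$ eventually, i.e. $x_n \succ y$, so $y \in L_\succeq(x_n)$ eventually and $\one_{L_\succeq(x_n)}(y) = 1 = \one_{L_\succeq(x)}(y)$. Thus pointwise convergence of the indicators holds everywhere except possibly on $I_\succeq(x)$.

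The thin-indifference hypothesis supplies exactly what is needed to discard the remaining set: $\nu(I_\succeq(x)) = 0$, so the indicators converge $\nu$-a.e. Dominating them by the constant function $\one_X$, which is $\nu$-integrable because $\nu(X) = 1$, the dominated convergence theorem yields $\nu(L_\succeq(x_n)) \to \nu(L_\succeq(x))$, as desired.

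The only genuinely delicate point is the indifference set $I_\succeq(x)$, where the indicators need not converge and which is precisely the locus of potential jumps of $\nu(L_\succeq(\cdot))$; the thin-indifference assumption neutralizes it. A secondary point to get right is to phrase each strict-preference statement as membership in the complement of a \emph{closed} contour set, so that mere closedness of contour sets (rather than any joint continuity of $\succeq$ on $X \times X$) delivers the ``eventually'' conclusions. Alternatively, one could invoke a continuous utility representation $u$ (Debreu) and write $\nu(L_\succeq(x)) = F(u(x))$, where $F(t) = \nu(\{y : u(y) \le t\})$; the nondecreasing $F$ has discontinuities only at atoms, which thin indifference rules out along the range of $u$, giving the same conclusion.
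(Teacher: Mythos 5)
Your proof is correct, but it takes a genuinely different route from the paper's. The paper invokes Debreu's theorem to get a continuous utility representation $u$, splits a convergent sequence $x^n\to x$ into the terms with $u(x^n)\le u(x)$ and those with $u(x^n)\ge u(x)$, extracts monotone subsequences, and applies monotone continuity of the measure to the nested sets $L_\succeq(x^{k_n})$. You instead work pointwise with indicator functions and conclude by dominated (indeed bounded) convergence, using only closedness of the contour sets rather than a utility representation. The comparison is instructive at one specific point: the paper asserts the set identity $L_\succeq(x)=\bigcup_n L_\succeq(x^{k_n})$, which in general holds only up to the indifference set $I_\succeq(x)$ (if $u(x^{k_n})<u(x)$ strictly for all $n$, the union is $\{y:u(y)<u(x)\}$), so the paper's argument is implicitly leaning on $\nu(I_\succeq(x))=0$ without saying so. Your write-up isolates exactly this as ``the only genuinely delicate point'' and discharges it explicitly via the thin-indifference hypothesis, which is where the lemma genuinely needs that assumption. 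Your closing remark about $F(t)=\nu(\{y:u(y)\le t\})$ is essentially the paper's approach restated; either version is fine, and yours has the advantage of not requiring Debreu's representation theorem at all.
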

\begin{proof}
 By the continuity of $\succeq$ and Debreu's theorem, $\succeq$ admits a continuous utility representation $u: X\to\Re$.
    
    Let $\{x^n\}$ be a sequence in $X$ and $x=\lim x^n$. Suppose first that $u(x^n)\leq u(x)$ for all $n$. Define a subsequence $x^{k_n}$ by: \[
    u(x^{k_n}) = \inf \{u(x^k):k\geq n \} \uparrow u(x)
    \] as $n\to\infty$. Then: \[
    L_{\succeq}(x^{k_n})\subseteq L_{\succeq}(x^k)\subseteq L(x),
    \] for all $k\geq n$. Then $L_{\succeq}(x) = I_{\succeq}(x)\cup \left( \bigcup_{n}^{\infty} L_{\succeq}(x^{k_n})\right)$. So, by the monotone continuity of probability measures, and the assumption of thin indifference curves,  $\nu(L_{\succeq}(x^{k_n}))\uparrow \nu(L_{\succeq}(x))$ as $n\to\infty$. Thus $\nu(L_{\succeq}(x^{k}))\to \nu(L_{\succeq}(x))$ as $k\to\infty$.
    
A similar squeezing argument from above yields that $\nu(L_{\succeq}(x^{k}))\to \nu(L_{\succeq}(x))$ when $u(x^k)\geq u(x)$ for all $k$. Now, in the general case, we can partition the sequence $x^n$ into two, one with  $u(x^n)\geq u(x)$ and one with $u(x^n)\leq u(x)$. Since $\nu(L_{\succeq}(x^{n}))\to \nu(L_{\succeq}(x))$ for both, the conclusion of the lemma holds. 
\end{proof}

\begin{lemma} \label{lem: ubs}
 For every $x\in X$ and $\ep>0$, there is a closed set $L_{\succeq}^\ep(x)\subseteq L_{\succeq}(x)$ from which $x$ is the uniquely best outcome and has the property that $\nu( L_{\succeq}^\ep(x))\geq \nu(L_{\succeq}(x)) - \ep$. \end{lemma}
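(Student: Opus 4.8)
The plan is to build $L_{\succeq}^\ep(x)$ out of the \emph{strict} lower contour set of $x$, discarding only the (measure-zero) indifference curve and then retreating to a closed set from inside. Concretely, let $S=\{y\in X: x\succ y\}$ denote the strict lower contour set. The first step is to observe that $S$ is open: by completeness of $\succeq$ one has $S=(U_{\succeq}(x))^c$, and continuity of $\succeq$ makes $U_{\succeq}(x)$ closed, so $S$ is open. Every point of $S$ is by definition strictly worse than $x$, which is exactly the ``$x$ is uniquely best'' property we will want — provided we only keep points of $S$ together with $x$ itself.

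The second step controls the measure. Because $\succeq$ has thin indifference curves we have $\nu(I_{\succeq}(x))=0$, and since $L_{\succeq}(x)=S\cup I_{\succeq}(x)$ this gives $\nu(S)=\nu(L_{\succeq}(x))$. The third step is to approximate the open set $S$ from inside by a closed set. Here I would invoke the inner regularity of the finite Borel measure $\nu$ on the metric space $X$ (a standard fact for finite Borel measures on metric spaces; as $X$ is compact, such closed inner-approximants are automatically compact): there exists a closed set $K\subseteq S$ with $\nu(K)\geq \nu(S)-\ep$. Finally, set $L_{\succeq}^\ep(x):=K\cup\{x\}$.

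It remains to check the three required properties, which are now immediate. The set $L_{\succeq}^\ep(x)$ is closed, being the union of the closed set $K$ and the singleton $\{x\}$; it is contained in $L_{\succeq}(x)$ since $K\subseteq S\subseteq L_{\succeq}(x)$ and $x\in L_{\succeq}(x)$; the point $x$ is uniquely best because any $y\in L_{\succeq}^\ep(x)$ with $y\neq x$ lies in $K\subseteq S$, whence $x\succ y$; and for the measure, $\nu(L_{\succeq}^\ep(x))\geq \nu(K)\geq \nu(S)-\ep=\nu(L_{\succeq}(x))-\ep$. The only non-elementary ingredient is the inner regularity used in the third step, so that — together with the correct use of the thin-indifference-curves hypothesis to equate $\nu(S)$ with $\nu(L_{\succeq}(x))$ — is the part I would state most carefully; everything else is routine.
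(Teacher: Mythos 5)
Your proof is correct and is essentially the paper's own argument in dual form: the paper covers the measure-zero set $I_{\succeq}(x)\cap L_{\succeq}(x)$ by an open set $O$ with $\nu(O)<\ep$ and takes $L_{\succeq}^\ep(x)=\{x\}\cup(L_{\succeq}(x)\setminus O)$, whereas you inner-approximate the open strict lower contour set by a closed set $K$ and take $\{x\}\cup K$. Both constructions rest on exactly the same two ingredients --- thin indifference curves making the indifference set $\nu$-negligible, and regularity of the finite Borel measure $\nu$ on the metric space $X$ --- so there is nothing to flag.
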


\begin{proof}Under the assumption of thin indifference curves, $I_\succeq(x)\cap L_{\succeq}(x)$ has measure zero. So there is an open set $O$ with $\nu(O)<\ep$ and $I_\succeq(x)\cap L_{\succeq}(x)\subseteq O$. Letting \[L_{\succeq}^\ep(x) = \{x\}\cup (L_{\succeq}(x)\setminus O),
  \]
  concludes the proof.\end{proof}

We shall make use of the following lemma, which utilizes standard ideas in optimal transport.

\begin{lemma}\label{lem:villani}
  There exists a probability measure $\mu$ on $[0,1]^2$ with uniform marginals {$\la$} such that $\mu(\UPS)=1$ if and only if, for all Borel sets $E\subseteq [0,1]$,
  \[
\la(E)\leq \la(\UPS(E)),
  \]
where 
\[
\UPS(E) = \{v_2\in [0,1] \colon \exists v_1\in E \text{ s.t.\ } (v_1,v_2)\in \UPS \}.
\]
\end{lemma}

\begin{remark}\label{rmk:projection}
A classical problem in descriptive set theory is that a projection like $\UPS(E)$ may not be a Borel set (despite $(E\times [0,1])\cap \UPS$ being one). The projection is, however, analytic and therefore universally measurable (which makes $\la(\UPS(E))$ meaningful). 
\end{remark}

\begin{proof}
First, we prove necessity. Let $\mu$ be a probability measure on $[0,1]^2$ with marginals $\la$ such that $\mu(\UPS)=1$. Let $E\subseteq [0,1]$ be a Borel set. Since $\mu(\UPS)=1$ we have that
\[
\la(E) = \mu(E\times [0,1]) = \mu((E\times [0,1])\cap \UPS) \leq \mu([0,1]\times \UPS(E))
\]
because, if $(v_1,v_2)\in (E\times [0,1])\cap \UPS$ then $(v_1,v_2)\in \UPS$ while $v_1\in E$, and therefore $v_2\in \UPS(E)$.
Necessity follows because $\mu([0,1]\times \UPS(E)) = \la(\UPS(E))$ (recall Remark~\ref{rmk:projection}). 

Next, we turn to sufficiency. The proof follows from optimal transport duality when we consider a cost function that takes the value 1 outside of $\UPS$ and $0$ inside. Working with the dual, we show that there exists a zero-cost Kantorovich transport plan. This means that it must be supported on $\UPS$. 

Let $O = [0,1]^2 \setminus \UPS$, which is open as $\UPS$ is compact. Let $c(v_1,v_2) = \one_{O}(v_1,v_2)$. Observe that $c$ is lower-semicontinuous, as the complement of $\UPS$ is open. This puts us in the hypothesis of the main duality theorem of optimal transport (Theorem 1.3 in \cite{villani2003topics}).\footnote{In fact, our proof borrows heavily from Villani's proof of Strassen's theorem. See Appendix 1.4 in \cite{villani2003topics}.}

Consider the optimal transport problem:
\[
\min_{\mu \in \Pi(\la,\la)} \int_{[0,1]^2} c(v_1,v_2) \diff \mu(v_1,v_2),
\]
where $\Pi(\la,\la)$ is the set of all Borel probability measures on $[0,1]^2$ with marginals $\la$. 

The dual problem is 
\[
\sup_{(\phi, \psi) \in \Phi_c} \left( \int_{[0,1]} \phi(v_1) \diff \la(v_1) + \int_{[0,1]} \psi(v_2) \diff \la(v_2) \right),
\]
where
\[
\Phi_c = \{ (\phi, \psi) \in L^1([0,1]) \times L^1([0,1]) : \phi(v_1) + \psi(v_2) \le c(v_1,v_2)\,\, \la\times\la-\text{a.e.}\}.
\]
By Theorem 1.3 in \cite{villani2003topics}, we may without loss restrict attention to bounded and continuous elements of $\Phi_c$ on which the inequality $\phi+\psi\leq c$ holds pointwise.

Moreover, by standard arguments in optimal transport theory, given that $c(v_1,v_2)\in [0,1]$,  we may restrict attention in the dual to  continuous functions $(\phi,\psi)\in \Phi_c$ with $\phi(v_1)\in [0,1]$ and $\psi(v_2)\in [-1,0]$ for all $(v_1,v_2)\in [0,1]^2$. For completeness, here we spell out these arguments for the special case needed to prove our result.

First, following Remark 1.12 in \cite{villani2003topics}, in studying the dual, we may restrict attention to $c$-concave functions. In particular, consider the $c$-conjugates of any continuous $(\phi,\psi)\in\Phi_c$: 
\begin{align*}
    \psi^c(v_2) & = \inf \{ c(v_1,v_2) - \phi(v_1) \colon v_1\in [0,1] \} \\
    \phi^{cc}(v_1) & = \inf \{ c(v_1,v_2) - \psi^c(v_2) \colon v_2\in [0,1]\}.
\end{align*}
These are well-defined (they take finite values) given that we have taken $\phi$ and $\psi$ to be bounded. Note that $\psi^{c}$ is bounded, and since $\phi^{cc}(v_1) \leq c(v_1,v_2) - \psi^c(v_2)$, so is $\phi^{cc}$; therefore, $(\phi^{cc}, \psi^c)\in \Phi_c$. Observe that $\psi(v_2)\leq c(v_1,v_2) - \phi(v_1)$ for all $v_2$ implies that $\psi(v_2)\leq \psi^c(v_2)$. Similarly,   $\phi(v_1) \leq c(v_1,v_2) - \psi^c(v_1)$ implies that $\phi(v_1)\leq \phi^{cc}(v_1)$. Any time we entertain a pair $(\phi,\psi)\in \Phi_c$ in the solution to the dual, we can do no worse by instead switching to  $(\phi^{cc},\psi^{c})\in \Phi_c$. 

The next step is adapted from Remark 1.13 in \cite{villani2003topics}.  Fix $v_1,v'_1$. For any $\ep>0$ there is $\bar v_2$ such that 
\[
\phi^{cc}(v_1) > c(v_1,\bar v_2) - \psi^c(\bar v_2) -\ep
\]
while 
\[
\phi^{cc}(v'_1) \leq  c(v'_1,\bar v_2) - \psi^c(\bar v_2)
\]
Thus,
\[
\phi^{cc}(v'_1) -\phi^{cc}(v_1) \leq  c(v'_1,\bar v_2) - \psi^c(\bar v_2) - [ c(v_1,\bar v_2) - \psi^c(\bar v_2) -\ep]
< c(v'_1,\bar v_2) - c(v_1,\bar v_2) + \ep.
\]
Since $c$ only takes the values $0$ and $1$, and letting $\ep\to 0$, we obtain that 
\[
\phi^{cc}(v'_1) -\phi^{cc}(v_1) \leq 1
\] for all $v_1,v'_1$. Let $k=1-\sup \phi^{cc}$, so that $\phi^{cc}+k\leq 1$. Since $\phi^{cc}(v'_1) -\phi^{cc}(v_1) \leq 1$ for all $v_1,v'_1$, we conclude that $\phi^{cc}+k\in [0,1]$.

Its conjugate $\hat \psi$ satisfies that
\begin{align*}
  -1 \leq \inf\{0 - (\phi^{cc}(v_1) + k) \colon v_1\in [0,1] \}
& \leq \hat \psi(v_2) = \inf\{c(v_1,v_2) - (\phi^{cc}(v_1) + k) \colon v_1\in [0,1] \} \\
& \leq \inf\{1 - (\phi^{cc}(v_1) + k) \colon v_1\in [0,1] \} \leq 0  
    \end{align*}

    In conclusion, these conjugate transformations are measurable and bounded and can only improve on any original candidate solution. We may focus our attention on the bounded, measurable pair $(\phi,\psi)$ obtained from an initial pair $(\phi',\psi')\in \Phi_c$, with $\phi(v_1)\in [0,1]$ and $\psi(v_2)\in [-1,0]$ for all $(v_1,v_2)\in [0,1]^2$. Let $\Phi^*_c$  denote this set.

For any  $(\phi,\psi)\in \Phi^*_c$ and for any $s \in [0,1]$, define the upper level sets:
\[ A_s = \{ v_1  \colon \phi(v_1) > s \} \quad \text{and} \quad B_s = \{ v_2 \colon -\psi(v_2) > s \} \]

Using the ``layer-cake'' representation for non-negative functions (see \cite{villani2003topics} again), since $\phi$ has range $[0,1]$, we may represent it as
\[
\phi(v_1) = \int_0^1 \one_{\phi(v_1)>s} \diff s.
\]
So Fubini yields
\[
\int_{[0,1]} \phi \diff \la = \int_0^1 \la(A_s) \diff s.
\] Similarly for the function $-\psi$ which takes values in $[0,1]$.

This means that we may write the objective of the dual as:
\[
    \int_{[0,1]} \phi \diff \la + \int_{[0,1]} \psi \diff \la = \int_0^1 \left( \la(A_s) - \la(B_s) \right) \diff s.
\]

Consider $v_1\in A_s$. For any $v_2\in [0,1]$ with $(v_1,v_2)\in \UPS$, we have that $c(v_1,v_2)=0$ and hence that
\[
s<\phi(v_1)\leq c(v_1,v_2) - \psi(v_2) = -\psi(v_2). 
\] Thus $v_2\in B_s$. Consequently, $\UPS(A_s)\subseteq B_s$. Since $A_s$ is a Borel set, the hypothesis that $\la(A_s)\leq \la(\UPS(A_s))$ means that $\la(A_s)\leq \la(B_s)$. By duality, conclude that 
\begin{align*}
  \min_{\mu \in \Pi(\la,\la)} [1-\mu(\UPS)] & = 
  \min_{\mu \in \Pi(\la,\la)} \int_{[0,1]^2} \one_{[0,1]^2\setminus \UPS}(v_1,v_2) \diff \mu(v_1,v_2) \\
  & = \sup_{(\phi,\psi)\in\Phi^*_c}  \int_{[0,1]} \phi \diff \la + \int_{[0,1]} \psi \diff \la\leq 0,    
\end{align*}
and therefore that
\[
\max_{\mu \in \Pi(\la,\la)} \mu(\UPS)\geq 1.
\]
But, of course, for any probability measure $\mu$, $\mu(\UPS)\leq 1$. So the existence of optimal transportation maps gives us $\mu\in \Pi(\la,\la)$ and $\mu(\UPS)=1$.
\end{proof}

\subsection{Proof of Theorem~\ref{thm:commonnu}}

Notation: When $Z$ is a compact metric space, let $C(Z)$ denote the space of continuous functions $f:Z\to\Re$ endowed with the sup norm. This is a Banach space. Let $\la$ denote the Lebesgue measure on $[0,1]$.

First, we shall prove the second statement in the theorem.  Fix $i \in \{1,2\}$. We want to find a Borel probability measure $\nu_i$ on $X$ such that the push-forward $(u_i)_*\nu_i = \la$. Consider the operator $T_i : C([0,1]) \to C(X)$ defined by $T_i(f) = f \circ u_i$. Because $u_i$ is continuous, $T_i$ maps continuous functions to continuous functions. Since $u_i$ is surjective onto $[0,1]$, $\norm{f \circ u_i}_\infty = \norm{ f }_\infty$ for all $f\in C([0,1])$.

Let $M_i = T_i(C([0,1]))$. Recall that $C(X)$ is a Banach space and note that $M_i$ is a (closed) linear subspace of $C(X)$. We define a linear functional $\Lambda_i : M_i \to \Re$ by integrating with respect to Lebesgue measure:
\[ \Lambda_i(f \circ u_i) = \int_0^1 f(t) \diff \la(t). \] The functional is well defined because $u_i$ is surjective (thus $f \circ u_i=g \circ u_i$ implies $f=g$ for $f,g\in C([0,1])$). Given that $\norm{f \circ u_i}_\infty = \norm{ f }_\infty$ for all $f\in C([0,1])$, the norm of the functional $\Lambda$ is exactly 1. Furthermore, since $\one_X = T_i(\one_{[0,1]})$, we have $\Lambda_i(\one_X) = \int_0^1 1 \diff \lambda = 1$.

By the Hahn-Banach Extension Theorem (see, for example, Theorem 3.3 in \cite{rudin1991}), there exists a linear functional $\tilde{\Lambda}_i : C(X) \to \R$ extending $\Lambda_i$ such that $\|\tilde{\Lambda}_i\| = 1$. Strictly speaking, Hahn-Banach says that $\abss{\Lambda_i(f)}\leq \norm{f}_\infty$ for all $f\in M_i$ implies that its extension satisfies  $\abss{\tilde \Lambda_i(f)}\leq \norm{f}_\infty$ for all $f\in C(X)$. Since
\[\begin{split}
  \sup \{\abss{\tilde \Lambda_i(f)} : \norm{f}_{\infty}\leq 1\} & \geq
\sup \{\abss{\tilde \Lambda_i(f)} : f\in M_i \text{ and } \norm{f}_{\infty}\leq 1\} \\  & = 
\sup \{\abss{\Lambda_i(f)} : f\in M_i \text{ and } \norm{f}_{\infty}\leq 1\} = 1,\end{split}\] we conclude that $\norm{\tilde \Lambda_i}=1$ 

Next, we claim that  $\tilde{\Lambda}_i$ is a positive linear functional. We must show that for any $g \in C(X)$ with $g \ge 0$, we have $\tilde{\Lambda}_i(g) \ge 0$. By linearity and scaling, it is sufficient to prove this for any $g \in C(X)$ such that $0 \le g(x) \le 1$ for all $x \in X$. For such a function $g$, we have:
\[ 0 \le \one_X(x) - g(x) \le 1 \quad \text{for all } x \in X. \]
This implies that the  norm of the difference is bounded by 1: $\norm{\one_X - g}_\infty \le 1.$

Applying the functional $\tilde{\Lambda}_i$ to this difference, and using the fact that $\norm{\tilde{\Lambda}_i} = 1$, we obtain:
\[ \tilde{\Lambda}_i(\one_X - g) \le \norm{\tilde{\Lambda}_i} \cdot \norm{\one_X - g}_\infty \le 1 \cdot 1 = 1. \]
By the linearity of $\tilde{\Lambda}_i$, this gives:
\[ \tilde{\Lambda}_i(\one_X) - \tilde{\Lambda}_i(g) \le 1. \]
Since $\tilde{\Lambda}_i$ extends $\Lambda_i$, we know $\tilde{\Lambda}_i(\one_X) = \Lambda_i(\one_X) = 1$. So,
\[ 1 - \tilde{\Lambda}_i(g) \le 1 \implies \tilde{\Lambda}_i(g) \ge 0.
\]
Thus, $\tilde{\Lambda}_i$ is a positive linear functional on $C(X)$.

By the Riesz representation theorem (see \cite{Folland1999} Theorem 7.2 or what \cite{Aliprantis2006} term the Riesz-Markov-Kakutani Representation (Theorem 14.12)), there exists a unique regular Borel probability measure $\nu_i$ on $X$ such that for all $g \in C(X)$, 
\[ \tilde{\Lambda}_i(g) = \int_X g \diff \nu_i. \]
Turn now to the relevant push-forward measure $(u_i)_*\nu_i$. For $f \in C([0,1])$,
\[ \int_{[0,1]} f \diff ((u_i)_*\nu_i) = \int_X (f \circ u_i) \diff \nu_i = \tilde{\Lambda}_i(f \circ u_i) = \int_0^1 f \diff \la, \]
as $f \circ u_i\in M_i$, and $\tilde \Lambda_i$ is an extension of $\Lambda_i$. Since this holds for all continuous $f:[0,1]\to\Re$, it follows that $(u_i)_*\nu_i = \la$, proving the second statement in the theorem.

To prove the first statement in the theorem, we evaluate the measure $\nu_i$ on the lower contour set $L_i(x)$. By definition, $L_i(x) = u_i^{-1}([0, u_i(x)])$. Therefore:
\[ \nu_i(L_i(x)) = \nu_i\Big(u_i^{-1}([0, u_i(x)])\Big) = ((u_i)_*\nu_i)\Big([0, u_i(x)]\Big) = \la\Big([0, u_i(x)]\Big) = u_i(x). \]

This completes the proof of Statements 1 and 2. We turn to Statement 3.

Let $U : X \to [0,1]^2$ be the joint utility mapping defined by $U(x) = (u_1(x), u_2(x))$. Let $\UPS = U(X)$ be the image of $X$. Since $X$ is compact and $U$ is continuous, $\UPS$ is a compact subset of $[0,1]^2$. 

By Lemma~\ref{lem:villani}, there exists a probability measure $\mu$ on $[0,1]^2$ that is supported on $\UPS$ ($\mu(\UPS)=1$) if and only if the condition in the theorem is satisfied.

From $\mu$, we shall now obtain a probability measure $\nu$ on $X$. The idea is analogous to our proof of the second statement in the theorem, but applied to $U$ instead of $u_i$. We obtain $\nu$ by an application of (a version of) the Riesz representation theorem.

Consider the ``pullback'' $U^* : C(\UPS) \to C(X)$ given by $U^*(f) = f \circ U$. Because $U$ is surjective onto $\UPS$, $\norm{U^*(f)}_{\infty}=\norm{f}_{\infty}$ for all $f\in C(\UPS)$. Define the linear subspace $M = U^*(C(\UPS))$ of the Banach space $C(X)$, and define the bounded positive linear functional $\Lambda : M \to \R$ by
\[ \Lambda(f \circ U) = \int_\UPS f \diff \mu. \]
As before, $\norm{\Lambda} = \Lambda(\one_X) = 1$. By the Hahn-Banach Theorem, we extend $\Lambda$ to $\tilde{\Lambda}$ on all of $C(X)$ without increasing its norm. The functional $\tilde{\Lambda}$ is, again, positive. By Riesz-Markov-Kakutani, there exists a regular Borel probability measure $\nu$ on $X$ for which $\tilde \Lambda(f) = \int_X f(x) \diff \nu(x).$ 

For any $f \in C(\UPS)$, we have:
\[ \int_\UPS f \diff(U_*\nu) = \int_X (f \circ U) \diff \nu = \tilde{\Lambda}(f \circ U) = \int_\UPS f \diff \mu. \]
Thus, $U_*\nu = \mu$. Since $\mu$ has uniform marginals on $[0,1]$, it immediately follows that $(u_1)_*\nu = \la$ and $(u_2)_*\nu = \la$. By the same logic as in our proof of Statement 2, $\nu(L_i(x)) = u_i(x)$ for both $i \in \{1,2\}$ simultaneously. 

Finally, we prove the necessity statement in the theorem. Suppose that a common measure $\nu$ exists on $X$. Define $\mu$ as the push-forward measure $\mu = U_*\nu$ on $\UPS$. Since $U(X) = \UPS$, $\mu$ is supported entirely on $\UPS$. Because $(u_1)_*\nu = \la$ and $(u_2)_*\nu = \la$, the measure $\mu$ has uniform marginals. By the converse statement in Lemma~\ref{lem:villani}, the existence of such a $\mu$ implies that the  condition $\la(E) \le \la(\UPS(E))$ must hold for all Borel sets $E$.  \qed

 \subsection{Proof of Proposition~\ref{prop:thinindiff}}\label{sec:pfpropTIC}
Let $y\in I_{\succeq}(x)$, so $x\sim y$. By local strictness, for any neighborhood $N_y$ of $y$ and $N_x$ of $x$ there exists $y'\in N_y$ and $x'\in N_x$ with $x'\succ y'$.

First, if $x\succeq x'$, then $y\succeq x'\succ y'$. Thus $I^c_{\succeq}(x)\cap N_y\neq \os$. Second, if $x'\succ x$ then we may choose $\la\in (0,1)$ such that $\la y + (1-\la) x'\in N_y$. By explicit convexity, since $x'\succ x\sim y$, $\la y + (1-\la) x'\succ y$. Hence, again, $I^c_{\succeq}(x)\cap N_y\neq \os$. Hence, the complement of the indifference set $I^c_{\succeq}(x)$ is dense.

Under our assumptions, we have that $I_{\succeq}(x)=\partial U_{\succeq}(x)\coloneqq U_{\succeq}(x)\setminus U^o_{\succeq}(x)$ where $\partial U$ and $U^o$ respectively denote the boundary and interior of the set $U$. The sets $U_{\succeq}(x)$ and $U^o_{\succeq}(x)$ are convex. Indeed, $U_{\succeq}(x)$ is closed. If its interior is empty, then we must have $I_{\succeq}(x)=U_{\succeq}(x)= \partial U_{\succeq}$ by continuity, as $y\succ x$ would mean that $z\succ x$ for all $z$ in a (relative) neighborhood of $y$. If its interior is non-empty, then for any $y\sim x$ we may let $x'\in U^o_{\succeq}(x)$ satisfy $x'\succ y$ (such $x'$ exists since we proved that $I^c_{\succeq}(x)$ is dense and therefore must meet any neighborhood of an interior point of $U_{\succeq}(x)$) then $\la y +(1-\la) x'\succ y\sim x$ for any $\la\in (0,1)$. Thus $y\in \partial U_{\succeq}$. Either way we see that 
$I_{\succeq}(x)=\partial U_{\succeq}(x)$.

To prove that $I_{\succeq}(x)$ has measure zero, first assume that $U_{\succeq}(x)$ has a non-empty interior.
 For simplicity, suppose that we translate $U_{\succeq}(x)$ so that $0$ is in its interior. Such a translation does not affect the argument that the set $I_{\succeq}(x)$ has Lebesgue measure zero. For each $y\in I_{\succeq}(x)$, $z=(1-\ep) y\in U^o_{\succeq}(x)$ as $U(x)$ is convex and $0$ is an interior point. Thus $y=\frac{1}{1-\ep}z$. We conclude that $I_{\succeq}(x)\subseteq \frac{1}{1-\ep} U^o(x)$. Actually, $I_{\succeq}(x)\subseteq \frac{1}{1-\ep} U^o(x)\setminus U^o(x)$. Thus, if $\la$ denotes the Lebesgue measure, we obtain: \[
\la(I_{\succeq}(x))\leq \la\left( \frac{1}{1-\ep} U^o_{\succeq}(x)\setminus U^o_{\succeq}(x)\right) = [\frac{1}{1-\ep}-1]\la (U^o_{\succeq}(x))\to 0,
    \] as $\ep\to 0$. Here we have used the homogeneity property of scaling for Lebesgue measure. Since $\nu$ is absolutely continuous with respect to $\la$, the desired result follows.

    Finally, consider the case when $U_{\succeq}(x)$ has an empty interior. This case is standard and well-known: the affine hull of  $U_{\succeq}(x)$ must have dimension strictly less than $d$ with $X\subseteq \Re^d$ by assumption. As a consequence, it has Lebesgue measure zero. Since $\nu$ is absolutely continuous with respect to Lebesgue, $I_\succeq(x)\subseteq U_\succeq(x)$ also has $\nu$-measure zero. \qed

\subsection{Proof of Theorem~\ref{th:implementation}}\label{sec:proof}

\begin{lemma}\label{lem:2opt} Let $A_1\subseteq X$ be closed and non-empty. There is a strategy by Player 1 in which $A_1$ is offered in the first stage of the game, and Player 1 chooses optimally from each potential counterproposal by Player 2. Player 2 has a best response to this strategy by Player 1.
\end{lemma}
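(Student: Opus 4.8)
The plan is to exhibit an explicit strategy for Player~1 and then verify that Player~2's decision problem at the root has a solution. For Player~1's strategy I would offer the given set $A_1$ in the first stage, and in the final stage, upon observing any counterproposal $A_2$, have Player~1 pick an alternative maximizing her own utility $u_1$ (continuous, by Debreu's theorem) over $A_2$, breaking ties in favor of Player~2. Concretely, set $M(A_2)=\argmax_{x\in A_2}u_1(x)$ and let Player~1 select a point of $\argmax_{x\in M(A_2)}u_2(x)$. Since $A_2$ is closed in the compact space $X$ it is compact, so $M(A_2)$ is non-empty and compact and the inner maximization of the continuous $u_2$ is attained. This choice is optimal for Player~1 by construction, and it pins down the payoff each counterproposal delivers to Player~2, namely $\Psi(A_2):=\max_{x\in M(A_2)}u_2(x)$.

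Given this strategy, Player~2 either accepts (obtaining $\max_{x\in A_1}u_2$, attained since $A_1$ is compact) or rejects and counters with some $A_2$ in the feasible set $\mathcal{F}_c(X):=\{A\in\mathcal{F}(X):\nu(A)\ge c\}$, where $c=\nu(A_1)$, obtaining $\Psi(A_2)$. A best response therefore exists as soon as the supremum of $\Psi$ over $\mathcal{F}_c(X)$ is attained, since Player~2 can then compare that value with the acceptance payoff and select the larger. Note $\mathcal{F}_c(X)$ is non-empty, as it contains $A_1$.

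Two facts drive the existence of an optimal counterproposal. First I would show $\mathcal{F}_c(X)$ is compact. The excerpt records that $\mathcal{F}(X)$ is compact in the Hausdorff metric, so it suffices to prove $\mathcal{F}_c(X)$ is closed, i.e. that $\nu$ is upper semicontinuous along Hausdorff-convergent sequences. If $A_n\to A$ then for every $\ep>0$ one has $A_n\subseteq A^{\ep}$ (the open $\ep$-neighborhood of $A$) for all large $n$, whence $\limsup_n\nu(A_n)\le\nu(A^{\ep})$; letting $\ep\downarrow 0$ and using continuity of $\nu$ from above together with $\bigcap_{\ep>0}A^{\ep}=A$ yields $\limsup_n\nu(A_n)\le\nu(A)$. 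Consequently $\nu(A_n)\ge c$ for all $n$ forces $\nu(A)\ge c$, so $\mathcal{F}_c(X)$ is closed and hence compact.

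The hard part will be the upper semicontinuity of the composite objective $\Psi$. I would first verify that the correspondence $A\mapsto A$ (viewing a closed set as Player~1's feasible choice set) is continuous and compact-valued for the Hausdorff metric: upper hemicontinuity is the closed-graph property that $x_n\in A_n$, $x_n\to x$, $A_n\to A$ imply $x\in A$, while lower hemicontinuity holds because each point of $A$ is approximated by points of $A_n$ at distance at most $d_H(A_n,A)$. Berge's maximum theorem then gives that $V(A):=\max_{x\in A}u_1(x)$ is continuous and that $M(A)$ is an upper hemicontinuous, compact-valued correspondence. Maximizing the continuous $u_2$ over this correspondence produces an upper semicontinuous $\Psi$: if $A_n\to A$ and $x_n\in M(A_n)$ attains $\Psi(A_n)$, compactness of $X$ extracts $x_{n_k}\to x^{*}$, upper hemicontinuity gives $x^{*}\in M(A)$, and continuity of $u_2$ gives $\Psi(A)\ge u_2(x^{*})\ge\limsup_n\Psi(A_n)$. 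An upper semicontinuous function on the compact set $\mathcal{F}_c(X)$ attains its maximum, furnishing Player~2's optimal counterproposal and completing the best-response argument. The one subtlety demanding care is that the tie-breaking in favor of Player~2 is precisely what makes the inner maximization of $u_2$ over $M(A)$ well behaved; an arbitrary optimal selection for Player~1 need not render $\Psi$ upper semicontinuous.
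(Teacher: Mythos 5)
Your proof is correct and rests on the same machinery as the paper's: compactness of $\{A\in\mathcal{F}(X):\nu(A)\ge\nu(A_1)\}$ in the Hausdorff metric via upper semicontinuity of $\nu$, Berge's maximum theorem for Player 1's best-reply correspondence, and continuity of Player 2's (utility) objective to get attainment of the supremum. The only divergence is in how the selection from Player 1's argmax is handled: you impose a global tie-breaking rule in Player 2's favor so that the induced value function $\Psi$ is upper semicontinuous on the compact constraint set, whereas the paper takes the supremum of $\nu(L_2(\cdot))$ over the (compact) graph of the best-reply correspondence, shows it is attained at some pair $(\hat A_2,\hat x)$, and pins down Player 1's choice only at $\hat A_2$ while leaving it arbitrary elsewhere --- both routes deliver Player 2's best response, and your closing remark that an arbitrary selection need not make $\Psi$ upper semicontinuous is exactly the issue the paper sidesteps by optimizing over the graph rather than over a fixed selection.
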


\begin{proof}
Let $\mathcal{F}(X)$ be the set of all non-empty closed subsets of $X$ endowed with the topology of closed convergence (which here coincides with the topology induced by the Hausdorff metric; see Chapters 3 and 4 in \cite{kleinthompson}). Since $X$ is compact, $\mathcal{F}(X)$ is compact and metrizable.

Consider the subgame triggered if Player 2 rejects $A_1$. Player 2 must choose a set $A_2 \in \mathcal{A}$, where $\mathcal{A} = \{ F \in \mathcal{F}(X) \mid \nu(F) \geq \nu(A_1) \}$. Since the measure $\nu$ is upper semicontinuous with respect to the Hausdorff metric (i.e., $\limsup \nu(A_n) \leq \nu(\lim A_n)$), the set $\mathcal{A}$ is closed, and therefore compact.  Now, since $A_2$ is a closed subset of a compact space, Player 1 has an optimal choice in any menu $A_2\in \mathcal{A}$ offered by Player 2.

Let $\beta_1(A_2)$ denote the set of optimal choices for Player 1 from a menu $A_2$. By the Maximum Theorem, the correspondence $A_2 \mapsto \beta_1(A_2)$ is upper hemicontinuous and compact-valued.
Define the supremum of Player 2's potential utility from a counter-offer as:
\[
    \hat{v} = \sup \{ \nu(L_2(y)) \mid A_2 \in \mathcal{A}, y \in \beta_1(A_2) \}.
\]
Since $\mathcal{A}$ is compact and the graph of $\beta_1$ is closed (by upper hemicontinuity), the set of reachable outcomes $Y = \bigcup_{A_2 \in \mathcal{A}} \beta_1(A_2)$ is compact.  By Lemma \ref{lem:cont}, the function  $y \mapsto \nu(L_2(y))$ is continuous; hence, the supremum $\hat{v}$ is attained at some outcome $\hat{x} \in Y$. There exists therefore a specific counter-offer $\hat{A}_2 \in \mathcal{A}$ such that $\hat{x} \in \beta_1(\hat{A}_2)$.

We specify Player 1's strategy $\sigma_1$ to choose $\hat{x}$ if Player 2 proposes $\hat{A}_2$. For any other menu $A'\in\mathcal{A}$ that Player 2 may choose, Player 1 chooses an arbitrary element of $\beta_1(A')$. Given this strategy, Player 2's best response is defined as follows:
\begin{enumerate}
    \item Identify the optimal choice $x \in A_1$ (maximizing $\nu(L_2(x))$; which exists by Lemma~\ref{lem:cont}).
    \item Compare $\nu(L_2(x))$ with $\hat{v} = \nu(L_2(\hat{x}))$.
    \item If $\nu(L_2(x)) \ge \hat{v}$, accept $A_1$ and choose $x$. Otherwise, reject and propose $\hat{A}_2$.
\end{enumerate}
This constitutes a best response for Player 2.
\end{proof}

\begin{lemma}\label{lemma:ex} For any compromise solution $x^*$, there is a subgame-perfect Nash equilibrium in which the outcome is $x^*$.
\end{lemma}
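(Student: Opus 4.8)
The plan is to exhibit one subgame-perfect equilibrium whose outcome is the given compromise $x^*$, and then verify optimality at every subgame. Write $v^*=\nu(L_1(x^*))=\nu(L_2(x^*))$; by Lemma~\ref{lem:PO} this common value is well defined and $v^*\geq 1/2$. I would let Player~1's strategy offer $A_1^*=L_2(x^*)$ in the first stage (a closed set of measure $v^*$ in which $x^*$ is $\succeq_2$-maximal) and, in every subgame reached after a rejection, play the optimal continuation furnished by Lemma~\ref{lem:2opt}; Player~2's strategy accepts $A_1^*$ by selecting $x^*$ and otherwise best-responds (again via Lemma~\ref{lem:2opt}) to every opening offer. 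By construction all continuation choices are optimal, so the work reduces to two first-stage incentives: that Player~2 weakly prefers accepting $x^*$ to rejecting, and that Player~1 cannot profit from any alternative opening set.

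The technical engine is the following measure claim, which I would prove first: \emph{if $A\in\mathcal{F}(X)$ has $\nu(A)=t$, then $A$ contains a point $z$ with $\nu(L_i(z))\geq t$.} Suppose not, so $A\subseteq O:=\{z:\nu(L_i(z))<t\}$, which is open by Lemma~\ref{lem:cont}. Choosing (by continuity and connectedness) a point $z^*$ with $\nu(L_i(z^*))=t$ and using strict monotonicity of $\nu(L_i(\cdot))$ along $\succ_i$ (Lemma~\ref{lem:largeset}), one gets $O=\{z:z\prec_i z^*\}\subseteq L_i(z^*)$, so $\nu(O)\leq t=\nu(A)$. Then $O\setminus A$ is open and $\nu$-null, hence empty by Lemma~\ref{lem:openpos}, forcing $A=O$ to be clopen and thus (connectedness of $X$, with $0<t<1$) a contradiction.

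Given the claim, Player~2's acceptance of $A_1^*$ is immediate: after a rejection Player~1 chooses optimally from the counteroffer $A_2$, and since $\nu(A_2)\geq v^*$ the claim yields an outcome $y$ with $\nu(L_1(y))\geq v^*$, so $y\succeq_1 x^*$; if one had $\nu(L_2(y))>v^*$, then $y$ would itself be a compromise with unequal lower-contour measures, contradicting Lemma~\ref{lem:PO}. Hence rejecting $A_1^*$ delivers Player~2 at most $v^*$, exactly what accepting $x^*$ delivers, so acceptance is optimal.

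Finally, for Player~1 I would show that against \emph{any} opening offer $A_1'$ with $\nu(A_1')=t$, Player~2 can guarantee at least $v^*$: if $t\geq v^*$, accepting the best point of $A_1'$ already yields $\geq t\geq v^*$ by the claim; if $t<v^*$, Player~2 rejects and counters with the set $L_1^{\ep}(x^*)$ of Lemma~\ref{lem: ubs} (with $\ep<v^*-t$, so $\nu(L_1^{\ep}(x^*))\geq v^*-\ep>t$ and $x^*$ is Player~1's \emph{unique} best choice), forcing outcome $x^*$ and payoff $v^*$. Consequently any equilibrium outcome $o$ after $A_1'$ satisfies $\nu(L_2(o))\geq v^*$, i.e.\ $o\succeq_2 x^*$; Pareto optimality of $x^*$ (Lemma~\ref{lem:PO}) then precludes $o\succ_1 x^*$, giving $\nu(L_1(o))\leq v^*$. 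Thus no deviation beats Player~1's equilibrium payoff $\nu(L_1(x^*))=v^*$, and the profile is a subgame-perfect equilibrium with $\MU(\sigma^*)=x^*$. I expect the main obstacle to be exactly this last step: it rests on the insight that the generosity constraint $\nu(A_2)\geq\nu(A_1)$ lets Player~2 secure the compromise value from \emph{both} sides—by acceptance when Player~1's offer is generous, and by a ``trapping'' counteroffer when it is stingy—after which Pareto optimality caps Player~1.
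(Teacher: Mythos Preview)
Your proof is correct and uses the same equilibrium construction as the paper: Player~1 opens with $L_2(x^*)$, continuations are optimal via Lemma~\ref{lem:2opt}, and non-deviation is checked through Pareto optimality (Lemma~\ref{lem:PO}) together with the trapping counteroffer $L_1^\ep(x^*)$ of Lemma~\ref{lem: ubs}. The only organizational difference is that you bound Player~1's deviation payoff by first showing Player~2 can always guarantee $v^*$ against any opening offer, whereas the paper splits directly on whether Player~2 accepts or rejects the deviant offer; both routes are short and rest on the same ingredients. One simplification worth noting: your ``measure claim'' requires no connectedness detour---just let $z$ be the $\succeq_i$-maximum of the compact set $A$ (which exists since $\succeq_i$ is continuous), so that $A\subseteq L_i(z)$ and hence $\nu(L_i(z))\geq \nu(A)=t$; this is exactly how the paper uses the observation (``$L_1(x')\supseteq A_2$'').
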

 \begin{proof}Consider the strategy for Player 1 defined in the proof of Lemma~\ref{lem:2opt} in which: 
 \begin{enumerate}
     \item she proposes the set $L_2(x^*)$ in stage 1, and
     \item she chooses an optimal outcome from any proposal $A_2$ by Player 2.
 \end{enumerate} 
 Fix the strategy for Player 2 that was defined in Lemma~\ref{lem:2opt}. In particular, we shall see that in the present case (for the particular choice of $A_1=L_2(x^*)$) this strategy is:
 \begin{enumerate}
 \item In the subgame with $A_1=L_2(x^*)$, accept $x^*$.
\item In any subgame in which  $A_1\neq L_2(x^*)$,  Player 2 follows the optimal strategy described in Lemma~\ref{lem:2opt}.
 \end{enumerate} 

We now verify that this pair of strategies constitutes a subgame perfect Nash equilibrium.

If Player 2 does not reject Player 1's on-path proposal $L_2(x^*)$, it is optimal for her to choose $x^*$ from $L_2(x^*)$. If Player 2 were to reject the proposal, then she needs to propose $A_2$, closed, with $\nu(A_2)\geq\nu (L_2(x^*))$. Say that Player 1 chooses $x'\in A_2$ in response to $A_2$. This means that $L_1(x')\supseteq A_2$. Then Lemma~\ref{lem:PO} implies that: $$\nu (L_1(x'))\geq \nu(A_2)\geq  \nu (L_2(x^*))=\nu(L_1(x^*)).$$ By Pareto optimality of $x^*$ this means that $\nu(L_2(x'))\leq \nu (L_2(x^*))$. So the rejection of $L_2(x^*)$ by 2 is not an improvement for 2. We conclude that it is optimal for Player 2 to accept the proposal and choose $x^*$, as stipulated by the strategies we have proposed. That the strategy is sequentially rational for Player 2 was shown in Lemma~\ref{lem:2opt}.

We proceed to verify that Player 1's proposal of $L_2(x^*)$ is optimal.
  
Consider an alternative proposal for Player 1, $A_1\neq L_2(x^*)$, and suppose that Player 2 chooses $x'\in A_1$. For this to be a profitable deviation, we must have that $x'\succ_1 x^*$, which implies that $\nu (L_1(x'))>\nu(L_1(x^*))$. By Pareto optimality of $x^*$, $\nu (L_2(x'))<\nu(L_2(x^*))$.

Since $A_1\subseteq L_2(x')$, Lemma~\ref{lem:PO} implies:
\[
\nu(A_1)\leq \nu(L_2(x'))< \nu(L_2(x^*))=\nu(L_1(x^*)).
\]
Hence there exists $\ep>0$ such that $\nu(L^\ep_1(x^*))\geq \nu(A_1)$. Player 2 is therefore better off rejecting $A_1$ and instead proposing $L^\ep_1(x^*)$, from which player 1 finds it (uniquely) optimal to choose $x^*$. Consequently, proposing $A_1$ is not strictly preferable for player~1.

Finally, consider an alternative proposal for Player 1, $A_1\neq L_2(x^*)$, and suppose that Player 2 rejects $A_1$. Following the rejection, 2 proposes $A_2$, from which Player 1 chooses $x'$. For this to be a profitable deviation by 1 we must have that $x'\succ_1 x^*$. 

By the Pareto optimality of $x^*$, we have $x^*\succ_2 x'$. This means that $\nu(L_2(x^*))>\nu(L_2(x'))$ by Lemma \ref{lem:largeset} . Player 2 could have ensured $x^*$ by proposing $L^\ep_1(x^*)$ for some $\ep>0$. So the set $L^\ep_1(x^*)$ must not have been feasible, meaning that $\nu(L^\ep_1(x^*))<\nu(A_1)$ for all $\ep>0$. Thus $\nu(L_2(x^*))=\nu(L_1(x^*))\leq\nu(A_1)$.  The rejection by 2 of $A_1$ means that $L_2(x')\supseteq A_1$. So we obtain the following inequalities:
$$\nu(L_2(x'))\geq \nu(A_1) \geq \nu(L_1(x^*))= \nu(L_2(x^*)) >\nu(L_2(x')),$$ which is absurd. We conclude that there is no such profitable deviation for Player~1.  
\end{proof}

\begin{lemma}\label{lemma:outcome} If $x$ is the outcome of a subgame-perfect Nash equilibrium, then $x$ is a compromise solution. 
\end{lemma}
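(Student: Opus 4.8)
The plan is to show that the outcome $x$ of an arbitrary subgame-perfect Nash equilibrium $\sigma^*$ achieves the compromise value, i.e. that $\min\{\nu(L_1(x)),\nu(L_2(x))\}$ equals the maximized compromise objective. Let $A_1^*$ be Player~1's on-path proposal, and let $v^* = \max_{z\in X}\min\{\nu(L_1(z)),\nu(L_2(z))\}$ denote the compromise value, attained at some compromise solution $x^*$ with $\nu(L_1(x^*))=\nu(L_2(x^*))=v^*$ by Lemma~\ref{lem:PO}. I would split the argument into showing $\min\{\nu(L_1(x)),\nu(L_2(x))\}\geq v^*$ and $\leq v^*$ separately; equality then forces $x$ itself to be a compromise solution.

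For the lower bound, the key observation is that Player~1 always has the option of deviating to the proposal $L_2(x^*)$ and then following the guaranteeing strategy of Lemma~\ref{lem:2opt}. By the equilibrium construction in Lemma~\ref{lemma:ex}, this deviation secures Player~1 an outcome at least as good as $x^*$ for herself, namely an outcome $z$ with $\nu(L_1(z))\geq v^*$. Since $\sigma^*$ is an equilibrium, Player~1's equilibrium payoff must be at least this guaranteed value, so $\nu(L_1(x))\geq v^*$. The symmetric option for Player~2 is subtler because Player~2 is constrained by the generosity requirement $\nu(A_2)\geq\nu(A_1^*)$; nonetheless, since Player~2 can always reject and counter with a set of the form $L_1^{\ep}(x^*)$ (of measure close to $v^*$) from which Player~1 is uniquely induced to pick $x^*$, Lemma~\ref{lem:2opt} and Lemma~\ref{lem: ubs} guarantee Player~2 a payoff of at least $v^*$ as well, giving $\nu(L_2(x))\geq v^*$. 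Hence $\min\{\nu(L_1(x)),\nu(L_2(x))\}\geq v^*$.

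For the upper bound, I would argue that the equilibrium outcome cannot do strictly better than the compromise value for both players simultaneously, using Pareto optimality. If $\min\{\nu(L_1(x)),\nu(L_2(x))\}>v^*$, then by the definition of $v^*$ as the maximum of the $\min$, no outcome can exceed it, yielding an immediate contradiction. Thus the only real content is verifying the two lower-bound inequalities; once both hold we have $\min\{\nu(L_1(x)),\nu(L_2(x))\}\geq v^*$, and since $v^*$ is by definition the maximum possible value of this minimum, equality holds and $x$ is a compromise solution.

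The main obstacle I anticipate is the lower bound for Player~2's payoff, because Player~2's deviation must respect the feasibility constraint $\nu(A_2)\geq \nu(A_1^*)$, and $A_1^*$ is an arbitrary equilibrium proposal whose measure I do not control a priori. The crux is to show that the counter-offer $L_1^{\ep}(x^*)$ is actually feasible for Player~2, i.e. that $\nu(L_1^{\ep}(x^*))\geq \nu(A_1^*)$ for suitable $\ep$; if $A_1^*$ were very large this might fail, so I expect the argument to hinge on first bounding $\nu(A_1^*)$ from above using the fact that whatever outcome $x$ Player~1 ultimately obtains satisfies $A_1^*\subseteq L_2(x)$ (when Player~2 accepts) together with the already-established bound $\nu(L_1(x))\geq v^*$ and Pareto optimality. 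Handling the two cases---Player~2 accepting versus rejecting on path---carefully, and reconciling the generosity constraint with the $\ep$-approximation of Lemma~\ref{lem: ubs}, is where the delicate bookkeeping lies.
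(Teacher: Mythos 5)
Your overall architecture (two one-sided bounds, each obtained by exhibiting a deviation to a menu built around a compromise $x^*$) is the same as the paper's, and your sketch of the Player~2 bound correctly identifies both the obstacle (feasibility of the counter-offer $L_1^{\ep}(x^*)$ against the generosity constraint) and its resolution (bounding $\nu(A_1^*)\leq\nu(L_2(x))$ via $A_1^*\subseteq L_2(x)$ in both the accept and reject cases, then invoking Lemma~\ref{lem: ubs}); that half matches the paper's Case~2 essentially verbatim.

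The genuine gap is in the Player~1 bound. You cannot appeal to ``the equilibrium construction in Lemma~\ref{lemma:ex}'' to conclude that deviating to $L_2(x^*)$ secures Player~1 at least $v^*$: Lemma~\ref{lemma:ex} constructs \emph{one particular} equilibrium, whereas here Player~2's off-path response to $L_2(x^*)$ is whatever the arbitrary SPNE $\sigma^*$ prescribes, constrained only by sequential rationality. And sequential rationality does not save you: if Player~2 accepts the menu $L_2(x^*)$, her optimal choices are exactly the elements $y\in L_2(x^*)$ with $y\sim_2 x^*$, and such a $y$ can satisfy $y\prec_1 x^*$ (indeed $y\prec_1 x$) without violating Pareto optimality of $x^*$ --- Player~2 may tie-break against Player~1. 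So the naive deviation does not guarantee $\nu(L_1(\cdot))\geq v^*$. The paper's proof repairs exactly this: assuming $x\prec_1 x^*$, it uses regularity (local non-satiation of $\succeq_2$ at $x^*$) and continuity of $\succeq_1$ to pick $x'\succ_2 x^*$ inside a neighborhood $N_{x^*}$ on which every point is $\succ_1 x$, and has Player~1 propose $A_1'=L_2(x^*)\cup\{x'\}$, which has the same measure (non-atomicity) but makes $x'$ Player~2's \emph{unique} optimal acceptance; the rejection branch is then ruled out by showing any counter-offer leads Player~1 to some $x''\succeq_1 x^*$, hence $x^*\succeq_2 x''\prec_2 x'$, so Player~2 strictly prefers to accept. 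Without this augmentation-by-$x'$ step your lower bound for Player~1 does not go through.
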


\begin{proof} Consider a subgame-perfect Nash equilibrium with outcome $x$. Suppose, towards a contradiction, that $x$ is not a compromise solution and let $x^*$ be a compromise. Then $\nu(L_i(x))<\nu(L_i(x^*))$ for either 1 or 2 (or for both). We consider each case in turn.

Suppose first that the inequality holds for $i=1$. 

Then $\nu(L_1(x))<\nu(L_1(x^*))$ implies that $x^*\succ_1 x$. By continuity of $\succeq_1$, there is a neighborhood $N_{x^*}$ of $x^*$ such that $\tilde x\succ_1 x$ for all $\tilde x\in N_{x^*}$. By local non-satiation of $\succeq_2$ at $x^*$, there is $x'\in N_{x^*}$ with $x'\succ_2 x^*$. Let $A'_1=L_2(x^*)\cup\{x'\}$, which is closed and has measure $\nu(A'_1)=\nu(L_2(x^*))$, since $\nu$ is atomless. In $A'_1$, $x'$ is the unique optimal outcome for Player 2. Player 1 may deviate and offer $A'_1$ instead of the on-path offer of $A_1$. 

If Player 2 accepts $A'_1$, then she would choose $x'$. The outcome of $x'$ would make $A'_1$ a profitable deviation for Player 1. A contradiction.

If Player 2 rejects $A'_1$, then she must offer Player 1 a closed set $A'_2$ with $\nu(A'_2)\geq \nu (A'_1)= \nu(L_2(x^*))$. Let $x''$ be 1's choice from $A'_2$ according to the equilibrium strategies: $A'_2\subseteq L_1(x'')$. Then we must have: $$\nu(L_1(x''))\geq \nu(A'_2)\geq \nu(L_2(x^*)) = \nu(L_1(x^*))\implies  x''\succeq_1 x^*.$$ This, however, means that $x^*\succeq_2 x''$, as $x^*$ is Pareto optimal. Thus, 2 would have been better off accepting $A'_1$ and choosing $x'\succ_2 x^*$. Again we conclude that Player 1 would be better off offering $A'_1$ and obtain a contradiction with the definition of equilibrium.

Suppose now that the inequality does not hold for $i=1$. That is, $\nu(L_1(x)) \ge \nu(L_1(x^*))$.
By Pareto optimality (or simply the fact that $x$ is not a compromise while $x^*$ is), it must be that $\nu(L_2(x)) < \nu(L_2(x^*))$, which implies $x^* \succ_2 x$.

In the putative equilibrium, let $A_1$ be the proposal made by Player 1.
\begin{itemize}
    \item If Player 2 accepts $A_1$ and chooses $x$, then $x$ is the $\succeq_2$-maximal element in $A_1$, so $A_1 \subseteq L_2(x)$.
    \item If Player 2 rejects $A_1$, makes a counter-proposal, and Player 1 chooses $x$, then it must be that for all $y \in A_1$, $x \succeq_2 y$ (otherwise Player 2 would have accepted $y$). Thus, $A_1 \subseteq L_2(x)$.
\end{itemize}
In either case, $\nu(A_1) \le \nu(L_2(x))$.

Let $\varepsilon > 0$ be such that $\nu(L_2(x)) < \nu(L_2(x^*)) - \varepsilon$.
Consider a deviation where Player 2 rejects $A_1$ and offers the closed set $L^\varepsilon_1(x^*)$. We check feasibility:
\[
    \nu(A_1) \le \nu(L_2(x)) < \nu(L_2(x^*)) - \varepsilon = \nu(L_1(x^*)) - \varepsilon \le \nu(L^\varepsilon_1(x^*)).
\]
(The last inequality follows from Lemma \ref{lem: ubs}).
Since $\nu(L^\varepsilon_1(x^*)) > \nu(A_1)$, this is a legitimate counter-proposal.
From $L^\varepsilon_1(x^*)$, the unique best choice for Player 1 is $x^*$.
Since $x^* \succ_2 x$, this rejection followed by the counter-proposal $L^\varepsilon_1(x^*)$ yields a strictly better outcome for Player 2 than the equilibrium outcome $x$.
This is a profitable deviation, contradicting the assumption that $x$ is an equilibrium outcome.

\end{proof}

\subsection{Proof of Proposition~\ref{prop:POmu}}
Because $\nu$ has full support on $X$, any Borel set $A$ with a non-empty interior satisfies $\nu(A) > 0$. Let $U^O_i(x')$ denote the strict upper contour set of $\succeq_i$ at $x'\in X$. By continuity and local non-satiation, $U^O_i(x)$ is a non-empty open set.  

Since $x$ is Pareto optimal, for any $y \in U^O_2(x)$ we must have $x \succ_1 y$. This implies $U^O_2(x) \subseteq L_1(x)$. Symmetrically, $U_1(x) \subseteq L_2(x)$. Let $\Sigma$ be the finite algebra of subsets of $X$ generated by $L_1(x)$ and $L_2(x)$. The four atoms of $\Sigma$ are:
\begin{itemize}
    \item $A_{10} = L_1(x) \setminus L_2(x) = L_1(x) \cap U^O_2(x) = U^O_2(x)$. Thus, $\nu(A_{10}) > 0$.
    \item $A_{01} = L_2(x) \setminus L_1(x) = L_2(x) \cap U_1(x) = U_1(x)$. Thus, $\nu(A_{01}) > 0$.
    \item $A_{00} = X \setminus (L_1(x) \cup L_2(x)) = U_1(x) \cap U^O_2(x) = \emptyset$. Thus, $\nu(A_{00}) = 0$.
    \item $A_{11} = L_1(x) \cap L_2(x)$. 
\end{itemize}

We may define a probability measure $\mu_0$ on $(X, \Sigma)$ such that $\mu_0(L_1(x)) = \mu_0(L_2(x))$, which is equivalent to $\mu_0(A_{10}) = \mu_0(A_{01})$. In fact, we may ensure that for every atom $A_k$, if $\nu(A_k) > 0$ then $\mu_0(A_k) > 0$. To ensure absolute continuity, we ensure that $\nu(A_k) = 0 \implies \mu_0(A_k) = 0$: Specifically,  set $\mu_0(A_{00}) = 0$, $\mu_0(A_{10}) = \mu_0(A_{01}) =\alpha$ and $\mu_0(A_{11})=1-2\alpha$ with $1/2\geq \alpha>0$, and $\alpha=1/2$ only if $\nu(A_{11})=0$. 

Since $\mu_0$ is absolutely continuous with respect to the restriction of $\nu$ to $\Sigma$, we may let $f = \frac{d\mu_0}{d\nu}$ be the corresponding Radon-Nikodym derivative; $f$ is a strictly positive step function defined by the constants $c_k = \frac{\mu_0(A_k)}{\nu(A_k)}$ on each atom $A_k$ where $\nu(A_k) > 0$. 

We extend $\mu_0$ to a measure $\mu$ on the full Borel $\sigma$-algebra of $X$ by 
\[ \mu(B) = \int_B f \diff \nu \quad \text{for all Borel sets } B \subseteq X. \]
Because $\nu$ is a regular Borel probability measure with full support, and $f$ is strictly positive $\nu$-almost everywhere, the extended measure $\mu$ is a Borel probability measure that inherits the full support of $\nu$. Furthermore, by construction, $\mu(L_1(x)) = \mu_0(L_1(x)) = \mu_0(L_2(x)) = \mu(L_2(x))$. Let this common value be denoted as $M$.

To show that $x$ is a compromise, we argue by contradiction. Suppose that there exists some alternative  $x' \in X$ with 
\[ \min \{ \mu(L_1(x')), \mu(L_2(x')) \} > \min \{ \mu(L_1(x)), \mu(L_2(x)) \} = M \]

This means that $\mu(L_i(x')) > \mu(L_i(x))$ for $i=1,2$. Now, because $\tilde x\mapsto \mu(L_i(\tilde x))$ is a utility representation of $\succeq_i$, we conclude that $x'$ Pareto dominates $x$, which is absurd.

\subsection{Proof of Proposition~\ref{prop:PE}}
We wish to find the Lebesgue measure $\nu(L(x_0,g_0))$ of the set:
\[ L(x_0,g_0) = \{ (x,g) \in \Re^2 : x \geq 0, g > 0, x + \theta \log(g) \leq U \},
\] where $U=x_0+\theta\log(g_0)$. The condition $x \leq U - \theta \log(g)$ and $x \geq 0$ imply that $U - \theta \log(g) \geq 0$. So $g$ has to satisfy $0 < g \leq e^{U/\theta}$. Using Fubini's Theorem, the Lebesgue measure is given by the iterated integral:
\[
\nu ((L(x_0,g_0)) = \int_{0}^{e^{U/\theta}} \left( \int_{0}^{U - \theta \log(g)} 1 \diff x \right) \diff g
 = \int_{0}^{e^{U/\theta}} (U - \theta \log(g)) \, dg.
\]

Now, this gives us :

\begin{align*}
\int_{0}^{e^{U/\theta}} (U - \theta \log(g)) \, dg  &= \left[ Ug - \theta(g \log(g) - g) \right]_{0}^{e^{U/\theta}} \\
&= \left[ g(U + \theta - \theta \log(g)) \right]_{0}^{e^{U/\theta}} \\
&= e^{U/\theta} \left( U + \theta - \theta \frac{U}{\theta} \right) - 0 = \theta e^{U/\theta}.
\end{align*}

At a compromise solution, we have $g^*=\theta_1+\theta_2,$ and  $\nu(L_1(x^*_1+\theta_1\log (g^*)))=\nu(L_2(x^*_1+\theta_1\log (g^*)))$. This means that:
\[
  \theta_1 e^{\frac{x^*_1+\theta_1\log (g^*)}{\theta_1}}
  =\theta_2 e^{\frac{x^*_2+\theta_2\log (g^*)}{\theta_2}}.
\]

Hence (canceling $\log(g^*)$), 
\[
\log(\theta_1) + \frac{x^*_1}{\theta_1} = \log(\theta_2) + \frac{x^*_2}{\theta_2} 
\]
Since $x^*_1+x^*_2=1-\theta_1-\theta_2,$
\[\log\left(\frac{\theta_1}{\theta_2}\right) = \frac{1 - \theta_1 - \theta_2 - x^*_1}{\theta_2} - \frac{x^*_1}{\theta_1} 
= \frac{1 - \theta_1 - \theta_2}{\theta_2} - x^*_1\left(\frac{\theta_1 + \theta_2}{\theta_1\theta_2}\right).\]

Solving for $x^*_1$:
\begin{align*}
x^*_1\left(\frac{\theta_1 + \theta_2}{\theta_1\theta_2}\right) &= \frac{1 - \theta_1 - \theta_2}{\theta_2} - \log\left(\frac{\theta_1}{\theta_2}\right).
\end{align*}

This leads to:
\begin{align*}
x^*_1 &= \frac{\theta_1(1 - \theta_1 - \theta_2)}{\theta_1 + \theta_2} - \frac{\theta_1\theta_2}{\theta_1 + \theta_2}\log\left(\frac{\theta_1}{\theta_2}\right)\\
x^*_2 &= (1 - \theta_1 - \theta_2) - x^*_1 = \frac{\theta_2(1 - \theta_1 - \theta_2)}{\theta_1 + \theta_2} + \frac{\theta_1\theta_2}{\theta_1 + \theta_2}\log\left(\frac{\theta_1}{\theta_2}\right),
\end{align*}
as desired. \qed

\subsection{Proof of Theorem~\ref{theo:fs}.}

First, we shall determine the measure of the lower contour set for any allocation $(x, 1-x)$ for an agent with F-S preferences.

\begin{proposition}\label{prop:fs} Assume that Players 1 and 2 have Fehr-Schmidt preferences over  $X = \{ (z_1, z_2) \in \mathbb{R}^2_+ \mid z_1 + z_2 \le 1 \}$ with $\beta_1,\beta_2<1/2$. For any allocation $(x,1-x)$, the Lebesgue measure of its lower contour for Player 1 set equals:
\begin{equation*}
\nu(L_1(x,1-x)) = \begin{cases}
    \frac{1+2\alpha_1}{2\alpha_1} x^2, & \text{if } 0 \le x < \frac{\alpha_1}{1+2\alpha_1}, \\[10pt]
    \frac{((1+2\alpha_1)x - \alpha_1)^2}{2(1-\beta_1)} + 0.25 - \frac{1}{4}(1+2\alpha_1)(1-2x)^2, & \text{if } \frac{\alpha_1}{1+2\alpha_1} \le x \le 1/2, \\[10pt]
    \frac{1}{2} - \frac{1-2\beta_1}{2(1-\beta_1)}(1-x)^2, & \text{if } x > 1/2,
\end{cases}
\end{equation*}
where $\nu(L_1(1/2,1/2)) = \frac{3-2\beta_1}{8(1-\beta_1)}$ and the one of Player 2 is:

\begin{equation*} 
\nu(L_2(x,1-x)) = \begin{cases}  \frac{1}{2} - \frac{1-2\beta_2}{2(1-\beta_2)}x^2, & \text{if } x < 1/2, \\[10pt]  \frac{((1+2\alpha_2)x - (1+\alpha_2))^2}{2(1-\beta_2)} + 0.25 - \frac{1}{4}(1+2\alpha_2)(1-2x)^2, & \text{if } 1/2 \le x \le \frac{1+\alpha_2}{1+2\alpha_2}, \\[10pt]  \frac{1+2\alpha_2}{2\alpha_2} (1-x)^2, & \text{if } x > \frac{1+\alpha_2}{1+2\alpha_2}. \end{cases} 
\end{equation*}
and
$\nu(L_2(1/2,1/2)) = \frac{3-2\beta_2}{8(1-\beta_2)}$.
\end{proposition}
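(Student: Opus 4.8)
The proof is entirely a matter of planar geometry: the lower contour set of a Fehr--Schmidt agent is a polygon bounded by the piecewise-linear indifference curve and the edges of the triangle $X$, and its area is obtained by decomposition into triangles. I write $\alpha=\alpha_1$, $\beta=\beta_1$, use coordinates $(x,y)$, and recall from the discussion of indifference curves that $U(x,y)=(1+\alpha)x-\alpha y$ on the envy region $\{x<y\}$ and $U(x,y)=(1-\beta)x+\beta y$ on the guilt region $\{x\ge y\}$, with a kink on the diagonal where $U(t,t)=t$. The plan is to let $\bar u=U_1(x,1-x)$ be the utility of the reference allocation; evaluating the two branches gives $\bar u=(1+2\alpha)x-\alpha$ when $x\le 1/2$ and $\bar u=(1-2\beta)x+\beta$ when $x\ge 1/2$.

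The key step is to locate the level-$\bar u$ indifference curve inside $X$ and read off the shape of $L_1(x,1-x)=\{(z_1,z_2)\in X:U_1(z_1,z_2)\le\bar u\}$. First I would record where each linear branch meets $\partial X$: the envy branch meets the left edge $\{z_1=0\}$ at height $1-(1+2\alpha)x/\alpha$ and meets the diagonal at $(\bar u,\bar u)$; the guilt branch meets the bottom edge $\{z_2=0\}$ at $z_1=\bar u/(1-\beta)$ and the diagonal at $(\bar u,\bar u)$; and on the hypotenuse $\{z_1+z_2=1\}$ the level set passes exactly through the reference point $(x,1-x)$. The two thresholds $\alpha/(1+2\alpha)$ and $1/2$ are precisely where the qualitative picture changes: $\bar u<0$ iff $x<\alpha/(1+2\alpha)$, in which case the curve never reaches the diagonal and exits through the left edge, whereas $\bar u\ge 0$ forces it to cross the diagonal at $(\bar u,\bar u)$, and $x\ge 1/2$ moves the reference point itself into the guilt region.

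I would then treat the three cases by elementary base-times-height (or shoelace) computations. When $x<\alpha/(1+2\alpha)$, $L_1$ is the single triangle with vertices $(0,1)$, $(x,1-x)$, $(0,\,1-(1+2\alpha)x/\alpha)$, of area $\tfrac{1+2\alpha}{2\alpha}x^2$. When $\alpha/(1+2\alpha)\le x\le 1/2$, the diagonal splits $L_1$ into a guilt-region triangle with vertices $(0,0)$, $(\bar u/(1-\beta),0)$, $(\bar u,\bar u)$ of area $\bar u^2/\bigl(2(1-\beta)\bigr)$, plus the envy-region portion, which is the full envy half-triangle of area $1/4$ minus the small triangle $(x,1-x),(1/2,1/2),(\bar u,\bar u)$; substituting $\bar u=(1+2\alpha)x-\alpha$ shows this last triangle has area $\tfrac14(1+2\alpha)(1-2x)^2$, yielding the middle formula. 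When $x>1/2$ it is cleanest to compute the complement: the strict upper contour set is the triangle near $(1,0)$ with vertices $(1,0),(x,1-x),(\bar u/(1-\beta),0)$ of area $\tfrac{1-2\beta}{2(1-\beta)}(1-x)^2$, and subtracting from $\nu(X)=1/2$ gives the third formula. The value $\nu(L_1(1/2,1/2))=\tfrac{3-2\beta}{8(1-\beta)}$ follows by setting $x=1/2$ (so $\bar u=1/2$) in the middle expression, and Player $2$'s formulas follow from the identical argument after reflecting across the diagonal $z_1=z_2$ (equivalently, relabeling $x\mapsto 1-x$ and swapping the two players). The only genuine obstacle is the bookkeeping in Case~2: one must correctly identify that the indifference curve is two segments meeting at $(\bar u,\bar u)$ and verify, through the substitution $\bar u=(1+2\alpha)x-\alpha$, that the envy-side triangle collapses to exactly $\tfrac14(1+2\alpha)(1-2x)^2$; once the decomposition is fixed, every area is a routine triangle computation.
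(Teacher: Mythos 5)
Your proof is correct and follows essentially the same route as the paper's: the same case split at $\alpha/(1+2\alpha)$ and $1/2$, the same triangle decompositions (including computing the guilt-side lower set as the triangle at the origin and the envy-side portion as $1/4$ minus the triangle $(x,1-x)$, $(1/2,1/2)$, $(\bar u,\bar u)$, and the $x>1/2$ case via the complement of the upper contour triangle near $(1,0)$), and the same symmetry argument for Player 2. No gaps.
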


In an abuse of notation, in this proof we shall often write $L_i(x)$ for $\nu(L_i(x,1-x))$ and $U_i(x)$ for $\nu(X)\setminus \nu(L_i(x,1-x))$. 

\begin{proof}
Assume that $\beta<1/2$. This means that along the Pareto frontier, the marginal utility w.r.t.\ $x$ is positive for any value of $x$ (i.e., $(1+2\alpha)$ for $x<1/2$ and $(1-2\beta)$ otherwise).

We will derive it for Player 1 and obtain by symmetry for Player 2. To this end, we divide the outcome space $X$ in two regions: the one in which Player 1 gets more, where 
$z_1\geq z_2$, and hence he feels guilt with respect to Player 2 and those in which Player 1 gets less, where 
$z_1\leq z_2$, and hence he envies Player 2 as argued by F-S. We refer to the line $z_1+z_2=1$ as the budget line.

Allocation $(x,1-x)$ with $x < 0.5$: Envy

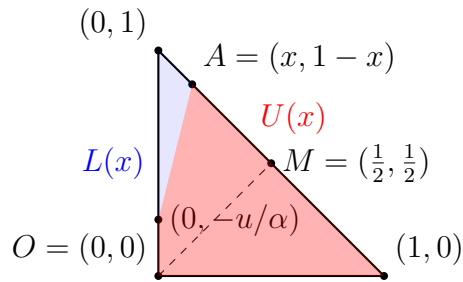
\begin{figure}[ht]
        \centering
        \begin{tikzpicture}[scale=3]
            \draw[thick] (0,0) -- (1,0) -- (0,1) -- cycle;
            \draw[dashed] (0,0) -- (0.5,0.5);
                        \coordinate (A) at (0.15, 0.85);
            \coordinate (D) at (0, 0.25);
            \coordinate (M) at (0.5, 0.5);
                        \coordinate (X) at (1, 0);
            \coordinate (Y) at (0, 1);
            \coordinate (O) at (0, 0);
                        \fill (A) circle (0.5pt) node[above right] {$A=(x,1-x)$};
            \fill (M) circle (0.5pt) node[ right] {$M=(\frac{1}{2},\frac{1}{2})$};   
            \fill (D) circle (0.5pt) node[right] {$(0,-u/\alpha)$};
                      \fill (O) circle (0.5pt) node[above left] {$O=(0,0)$};
                 \fill (X) circle (0.5pt) node[above right] {$(1,0)$};
            \fill (Y) circle (0.5pt) node[above left] {$(0,1)$};
                       \node[red, scale=1] at (0.6, 0.7) {\text{$U(x)$}};
            \node[blue, scale=1] at (-0.2, 0.5) {\text{$L(x)$}};
                                     \fill[blue, opacity=0.1] (O) -- (D) -- (A) -- (Y) -- cycle;
                          \fill[red, opacity=0.3] (A) -- (D) -- (O) -- (X) -- cycle;

        \end{tikzpicture}
         \caption{ Negative utility and an allocation inducing envy: $x < 0.5$ }
         \label{fig:envy0}
\end{figure}

\begin{figure}[hht]
        \centering
        \begin{tikzpicture}[scale=3]
            \draw[thick] (0,0) -- (1,0) -- (0,1) -- cycle;
            \draw[dashed] (0,0) -- (0.5,0.5);
                        \coordinate (A) at (0.35, 0.65);
            \coordinate (D) at (0.25, 0.25);
            \coordinate (M) at (0.5, 0.5);
            \coordinate (C) at (0.85, 0);
            \coordinate (X) at (1, 0);
            \coordinate (Y) at (0, 1);
            \coordinate (O) at (0, 0);
                        \fill (A) circle (0.5pt) node[above right] {$A=(x,1-x)$};
            \fill (M) circle (0.5pt) node[ right] {$M=(\frac{1}{2},\frac{1}{2})$};   
            \fill (D) circle (0.5pt) node[right] {$D=(u,u)$};
            \fill (C) circle (0.5pt) node[below ] {$C=(\frac{u}{1-\beta},0)$};
            \fill (O) circle (0.5pt) node[above left] {$O=(0,0)$};
                 \fill (X) circle (0.5pt) node[above right] {$(1,0)$};
            \fill (Y) circle (0.5pt) node[above left] {$(0,1)$};
                       \node[red, scale=1] at (0.6, 1) {\text{$U(x)$}};
            \node[blue, scale=1] at (-0.2, 0.5) {\text{$L(x)$}};
                          \fill[blue, opacity=0.1] (O) -- (D) -- (C) -- cycle;
             \fill[blue, opacity=0.1] (O) -- (D) -- (A) -- (Y) -- cycle;
             \fill[red, opacity=0.3] (D) -- (M) -- (X) -- (C) -- cycle;
             \fill[red, opacity=0.3] (A) -- (M) -- (D) -- cycle;

        \end{tikzpicture}
         \caption{Positive utility and an allocation inducing envy: $x < 0.5$}
         \label{fig:envy}
\end{figure}
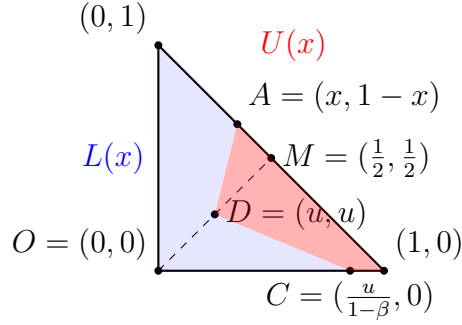

Remark that for any $(x,1-x)$ with $x<1/2$, we obtain that:
\begin{equation}\label{eq:uexplicit2}
    u = U(x,1-x) \Longleftrightarrow u = (1+2\alpha)x - \alpha.
\end{equation}

The allocation $(x,1-x)$ that gives a utility of 0 equals $(1+2\alpha)x - \alpha=0$ so $x=\frac{\alpha}{1+2\alpha}$.

When $x<\frac{\alpha}{1+2\alpha}$, the $LCS$ is simple as shown by Figure \ref{fig:envy0} as it coincides with the area of a triangle with vertices $(0,1)$, $(x,1-x)$ and $(0,-u/\alpha)$. Its area is then $\frac{1+2\alpha}{2\alpha} x^2$ as the base is $1-(-u/\alpha)$, the height is $x$, and $\alpha+u=(1+2\alpha)x$.

When $x\geq \frac{\alpha}{1+2\alpha}$, the $LCS$ spans both sides of the line $z_1=z_2$ as shown by Figure \ref{fig:envy}. The $LCS$ of $(x,1-x)$ in the Guilt region, in which $z_1 \ge z_2$, is the triangle $OCD$ defined by the origin $O$, $D=(u,u)$, and $C=(\frac{u}{1-\beta}, 0)$. Its Lebesgue measure equals:
    \[ \nu(\LCS(x,1-x)\cap\{(z_1,z_2)\in X:z_1\geq z_2\}) = \frac{u^2}{2(1-\beta)}= \frac{((1+2\alpha)x - \alpha)^2}{2(1-\beta)}.  \]
    
The $UCS$ of $(x,1-x)$ in the Envy region, in which $z_1 < z_2$, is the triangle  $AMD$ formed by $A=(x, 1-x)$, $M=(1/2, 1/2)$, and  $D=(u,u)$. Its base equals $\sqrt{2}(1+2\alpha)(0.5-x)$, as the distance from $M$ to $D$ equals $\sqrt{(0.5-u)^2 + (0.5-u)^2} = \sqrt{2}(0.5-u)$ and $0.5 - u = 0.5 - ((1+2\alpha)x - \alpha) = (1+2\alpha)(0.5-x)$. Its height is $\frac{1-2x}{\sqrt{2}}$, that is,  the distance from $A$ to $M$. Therefore, 
        \[ \nu(\UCS(x,1-x)\cap\{(z_1,z_2)\in X:z_1\leq z_2\}) = \frac{1}{2} \cdot \left[ \sqrt{2}(1+2\alpha)(0.5-x) \right] \cdot \left[ \frac{1-2x}{\sqrt{2}}. \right] \]
        Since $(0.5-x) = \frac{1}{2}(1-2x)$:
        \[ \nu(\UCS(x,1-x)\cap\{(z_1,z_2)\in X:z_1\leq z_2\}) = \frac{1}{4}(1+2\alpha)(1-2x)^2 .\]

    The  $LCS$ area in this region is $\frac{1}{4} - \nu(\UCS(x,1-x)\cap\{(z_1,z_2)\in X:z_1\leq z_2\})$.
    
    The Lebesgue measure of the $LCS$ of $(x,1-x)$ for $\frac{\alpha}{1+2\alpha}<x < \frac{1}{2}$ is:
    \[ L_1(x) = \frac{((1+2\alpha)x - \alpha)^2}{2(1-\beta)} + \frac{1}{4} - \frac{1}{4}(1+2\alpha)(1-2x)^2. \]

Allocation $(x,1-x)$ with $x \ge 0.5$: Guilt

Any indifference curve for Player 1 in the Guilt region is decreasing in $x$ as the equation is:
\[
    y = \frac{u}{\beta} - \frac{1-\beta}{\beta} x.
\]
The magnitude of the slope is $\frac{1-\beta}{\beta}$. Observe that:
\[
    \frac{1-\beta}{\beta} > 1 \iff 1-\beta > \beta \iff 1 > 2\beta \iff 1/2 > \beta.
\]
Thus, the indifference curve is steeper than the budget line (which has a slope of -1).

Moreover, along the Pareto frontier, the marginal utility with respect to $x$ is positive for any value of $x$. Therefore, the upper contour set at $(x, 1-x)$ lies to the right of the indifference curve.

Remark that for any $u \in [1/2, 1-\beta]$ and $x \ge 1/2$:
\begin{equation}\label{eq:uexplicit}
    u = U(x, 1-x) \iff u = (1-2\beta)x + \beta \iff u = (1-\beta)x + \beta(1-x).
\end{equation}

The upper contour set at $(x, 1-x)$ is the triangle bounded by the indifference line of level $u$, the $x$-axis, and the budget line. Specifically, $\UCS(x)$ coincides with the triangle $ABC$ shown in Figure \ref{fig:guilt0}, with vertices: $A=(x, 1-x)$, $B=(1,0)$, and $C=(z_C, 0)$. Solving $(1-\beta)z_C = u$ for the x-intercept gives $z_C = \frac{u}{1-\beta}$. Substituting $u$ from (\ref{eq:uexplicit}), we obtain:
\[
    z_C = \frac{(1-\beta)x + \beta(1-x)}{1-\beta}.
\]

\begin{figure}[hht]
    \centering
    \begin{tikzpicture}[scale=3]
        \draw[thick] (0,0) -- (1,0) -- (0,1) -- cycle;
        \draw[dashed] (0,0) -- (0.5,0.5);
        \coordinate (A) at (0.7, 0.3);
        \coordinate (D) at (1, 0);
        \coordinate (Y) at (0, 1);
        \coordinate (C) at (0.85, 0);
        \coordinate (O) at (0, 0);

        \fill (Y) circle (0.5pt) node[above left] {$(0,1)$};
        \fill (A) circle (0.5pt) node[above right] {$A=(x,1-x)$};
        \fill (D) circle (0.5pt) node[below right] {$B=(1,0)$};
        \fill (C) circle (0.5pt) node[below left] {$C=(\frac{u}{1-\beta},0)$};
        \fill (O) circle (0.5pt) node[below left] {$(0,0)$};

        \fill[red, opacity=0.3] (A) -- (1,0) -- (C) -- cycle;
        \node[red, scale=1] at (1.1, 0.15) {$U(x)$};
        \fill[blue, opacity=0.1] (0,0) -- (0,1) -- (A) -- (C) -- cycle;
        \node[blue, scale=1] at (0.3, 0.3) {$L(x)$};
    \end{tikzpicture}
    \caption{An allocation inducing guilt: $x \ge 0.5$. The UCS is the red triangle $ABC$.}
    \label{fig:guilt0}
\end{figure}
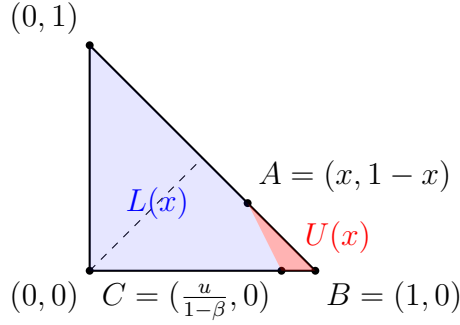

The base of the upper contour set triangle (segment $CB$) has length:
\[
    1 - z_C = 1 - \frac{(1-\beta)x + \beta(1-x)}{1-\beta} = (1-x)\frac{1-2\beta}{1-\beta}.
\]
The height corresponds to the $y$-coordinate of $A$, which is $h = 1-x$. The area is therefore:
\[
    \nu(\UCS) = \frac{1}{2} \cdot \text{Base} \cdot h = \frac{1}{2} \left[ \frac{1-2\beta}{1-\beta}(1-x) \right] (1-x) = \frac{1-2\beta}{2(1-\beta)}(1-x)^2.
\]
Consequently, the Lebesgue measure of the lower contour set in this region is the total area minus the area of the upper contour set:
\[
    L_1(x) = \nu(X) - \nu(\UCS) = \frac{1}{2} - \frac{1-2\beta}{2(1-\beta)}(1-x)^2.
\]
The result for Player 2 follows by symmetry.
\end{proof}

\begin{proposition}\label{prop:monL}
The measure of the lower contour set $L_1(x)$ is strictly monotonically increasing on $[0,1]$ assuming $\beta_1 < 1/2$. By symmetry, $L_2(x)$ is strictly monotonically decreasing.
\end{proposition}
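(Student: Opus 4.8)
The plan is to prove strict monotonicity of $L_1(x)$ by directly exploiting the structure of the preferences together with Lemma~\ref{lem:largeset}, which already tells us that $x' \succ_1 x$ implies $\nu(L_1(x')) > \nu(L_1(x))$. So the entire task reduces to showing that, along the budget line parametrized by $x \in [0,1]$, the utility $U_1$ of Player~1 is strictly increasing in $x$. If this ordinal fact holds, then for any $x < x'$ we have $(x',1-x') \succ_1 (x,1-x)$, and Lemma~\ref{lem:largeset} delivers $L_1(x) < L_1(x')$ immediately.

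**First I would** recall the piecewise-linear form of $U_1$ along the frontier, which was computed earlier in the excerpt. For $x < 1/2$ (the envy region), $U_1(x) = (1+2\alpha_1)x - \alpha_1$, whose slope $1 + 2\alpha_1$ is strictly positive since $\alpha_1 \geq \beta_1 > 0$. For $x > 1/2$ (the guilt region), $U_1(x) = (1-2\beta_1)x + \beta_1$, whose slope $1 - 2\beta_1$ is strictly positive precisely because of the standing hypothesis $\beta_1 < 1/2$. Thus $U_1$ is strictly increasing on each of the two open pieces, and being continuous at the kink $x = 1/2$, it is strictly increasing on all of $[0,1]$.

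**Then** the conclusion for $L_1$ follows: for any $x < x'$, strict monotonicity of $U_1$ gives $(x',1-x') \succ_1 (x,1-x)$, so Lemma~\ref{lem:largeset} yields $\nu(L_1(x,1-x)) < \nu(L_1(x',1-x'))$, i.e.\ $L_1$ is strictly increasing on $[0,1]$. For the symmetric claim about $L_2$, the analogous computation shows $U_2$ is strictly \emph{decreasing} in $x$ under $\beta_2 < 1/2$ (the roles of the two regions swap, since Player~2's share is $1-x$), whence $L_2(x)$ is strictly decreasing by the same appeal to Lemma~\ref{lem:largeset}.

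**The main obstacle**, if any, is merely bookkeeping at the kink $x = 1/2$: one must confirm that the two linear pieces agree there and that no flat or decreasing segment is introduced, which is where the hypothesis $\beta_1 < 1/2$ is essential (at $\beta_1 = 1/2$ the guilt segment would be flat, and for $\beta_1 > 1/2$ it would decrease, breaking monotonicity). Rather than differentiating the explicit closed-form expressions for $\nu(L_1(x))$ given in the preceding proposition---which would require checking the derivative matches across three regimes and is error-prone---the ordinal route through Lemma~\ref{lem:largeset} sidesteps the algebra entirely, so I would present that route and relegate the kink check to a single sentence.
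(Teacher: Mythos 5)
Your proof is correct and follows essentially the same route as the paper's: both arguments reduce the claim to the fact that $U_1$ is strictly increasing (and $U_2$ strictly decreasing) along the budget line when $\beta_i<1/2$, and then pass from the ordinal ranking to strict inequality of measures. The only difference is in that second step, where you invoke the already-established Lemma~\ref{lem:largeset} while the paper gives an inline argument via $\LCS(u)\subset\mathrm{int}(\LCS(u'))$ and full-dimensionality; your citation is, if anything, the cleaner justification.
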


\begin{proof}
Let $x, x' \in [0,1]$ with $x' > x$.
First, recall that when $\beta_1 < 1/2$, the marginal utility of Player 1 along the feasibility frontier is strictly positive. Thus, the utility levels satisfy $u' := U_1(x', 1-x') > U_1(x, 1-x) := u$.

Abusing notation, the lower contour set for a given utility level $v$ is $\LCS(v) = \{ z \in X \mid U_1(z) \le v \}$.
Since preferences are continuous and non-satiated, a strictly higher utility level implies a strictly larger lower contour set. Specifically, for $u' > u$:
\[
    \LCS(u) \subset \text{int}(\LCS(u')),
\]
which implies that the Lebesgue measures satisfy $\nu(L_1(x')) > \nu(L_1(x))$ as these are full dimensional.
Thus, $\nu(L_1(x))$ is strictly increasing in $x$. The result for Player 2 follows by symmetry.
\end{proof}

\paragraph*{Completion of the Proof of Theorem \ref{theo:fs}}

We have established that $L_1(x)$ is strictly increasing and $L_2(x)$ is strictly decreasing on $[0,1]$. As these are continuous functions, a unique intersection $x^*$ exists where $L_1(x^*) = L_2(x^*)$.

To determine the location of $x^*$ relative to the midpoint, we evaluate the measures at $x = 1/2$. Recall from  Proposition \ref{prop:fs} that:
\[
    L_i(1/2) = \frac{3-2\beta_i}{8(1-\beta_i)}.
\]
Let $f(\beta) = \frac{3-2\beta}{8(1-\beta)}$. We analyze the sensitivity of this value to the guilt parameter by taking the derivative with respect to $\beta$:
\[
    f'(\beta) = \frac{1}{8} \cdot \frac{-2(1-\beta) - (-1)(3-2\beta)}{(1-\beta)^2} = \frac{1}{8} \cdot \frac{-2 + 2\beta + 3 - 2\beta}{(1-\beta)^2} = \frac{1}{8(1-\beta)^2}.
\]
Since $\beta < 1$, $f'(\beta) > 0$. This implies that a player with higher guilt aversion has a strictly larger lower contour set at the equal split $x=1/2$.

We now consider the three cases from the theorem:
\begin{enumerate}
    \item \textbf{Case $\beta_1 = \beta_2$:} Then $L_1(1/2) = L_2(1/2)$. Since $L_1$ is increasing and $L_2$ is decreasing, the unique intersection must occur exactly at $x^* = 1/2$.
    \item \textbf{Case $\beta_1 > \beta_2$:} Since $f(\beta)$ is increasing, $L_1(1/2) > L_2(1/2)$. Given the monotonicity of the functions, the intersection $L_1(x^*) = L_2(x^*)$ must occur at some $x^* < 1/2$.
    \item \textbf{Case $\beta_1 < \beta_2$:} Similarly, $L_1(1/2) < L_2(1/2)$, implying the intersection must occur at some $x^* > 1/2$.
\end{enumerate}

\bibliographystyle{ecta}
\bibliography{consensus} 

\clearpage
\appendix
\section{Examples illustrating Theorem~\ref{thm:commonnu}}\label{sec:appexamples}

In this appendix, we develop some examples to further illustrate the role of the condition in Theorem~\ref{thm:commonnu}, and what the effect of a common cardinalization is on the utility possibility set. 

\subsection{One dimensional example}

Let the space of alternatives be $X = [0,1]$ and consider the following two utility functions for the agents. 
\begin{align*}
    u_1(x) = 1 - x^2 \: \text{ and } \:     u_2(x) = 1 - 4\left(x - \frac{1}{2}\right)^2.
\end{align*}
Both functions are surjective onto $[0,1]$. Because $u_1$ is strictly decreasing on $[0,1]$, we can express $x$ directly as a function of utils for 1, $v_1$: $x = \sqrt{1 - v_1}$.
Substituting this into $u_2(x)$ yields the utility possibility set:
\[
  \UPS = \{ (v_1, v_2) \in [0,1]^2 : v_2 = 4v_1 + 4\sqrt{1-v_1} - 4 \}.\]
So the UPS is a  curve that starts at $(1,0)$, has a maximum at $(0.75, 1)$, and ends at $(0,0)$. See the illustration on the left in Figure~\ref{fig:onedimex}.

\begin{figure}[t]
    \centering
    \begin{tikzpicture}
        \begin{axis}[
            width=0.45\textwidth,
            height=0.45\textwidth,
            title={\textbf{$\UPS$}},
            xlabel={$v_1$},
            ylabel={$v_2$},
            xmin=0, xmax=1.1,
            ymin=0, ymax=1.2,
            axis lines=left,
                       grid style={line width=.1pt, draw=gray!30},
            major grid style={line width=.2pt,draw=gray!50},
            xtick={0, 0.25, 0.5, 0.75, 1},
            ytick={0, 0.25, 0.5, 0.75, 1},
                                    extra x tick style={tick label style={font=\tiny, color=green!50!black, yshift=1ex}},
                                   extra y tick style={tick label style={font=\tiny, color=red}},
                   ]
        
                \fill[red, opacity=0.1] (0, 0) rectangle (1, 0.36);
        \fill[green!50!black, opacity=0.15] (0, 0) rectangle (0.19, 1);
        \fill[green!50!black, opacity=0.15] (0.99, 0) rectangle (1, 1);
        
        \addplot [
            domain=0:1, 
            samples=200, 
            color=blue,
            very thick,
        ]
        {4*x + 4*sqrt(1-x) - 4};
        
        \node[red, anchor=west, font = \tiny] at (axis cs: 0.3, 0.4) {$E=[0,0.36]$};
        \node[green!50!black, anchor=west, font = \tiny] at (axis cs: 0.1, 1.1) {$\UPS(E)=[0,0.19]\cup [0.99,1]$};
        
        \end{axis}
    \end{tikzpicture}
    \hfill
    \begin{tikzpicture}
        \begin{axis}[
            width=0.45\textwidth,
            height=0.45\textwidth,
            title={\textbf{$\hat{\UPS}$}},
            xlabel={$v_1$},
            ylabel={$v_2$},
            xmin=0, xmax=1.1,
            ymin=0, ymax=1.2,
            axis lines=left,
                       grid style={line width=.1pt, draw=gray!30},
            major grid style={line width=.2pt,draw=gray!50},
            xtick={0, 0.25, 0.5, 0.75, 1},
            ytick={0, 0.25, 0.5, 0.75, 1},
                                extra x tick style={tick label style={font=\tiny, color=green!50!black, yshift=1ex}},
                                  extra y tick style={tick label style={font=\tiny, color=red}},
                    ]
        
                \fill[red, opacity=0.1] (0, 0) rectangle (1, 0.36);
        \fill[green!50!black, opacity=0.15] (0, 0) rectangle (0.18, 1);
        \fill[green!50!black, opacity=0.15] (0.82, 0) rectangle (1, 1);
        
        \addplot [
            domain=0:1, 
            samples=200, 
            color=black,
            very thick,
        ]
        {1 - abs(2*x - 1)};

        \node[red, anchor=west, font = \tiny] at (axis cs: 0.3, 0.4) {$E= [0,0.36]$};
        \node[green!50!black, anchor=west, font = \tiny] at (axis cs: 0.1, 1.1) {$\hat{\UPS}(E)=[0,0.18]\cup [0.82,1]$};
                \addplot[mark=*, black] coordinates {(2/3,2/3)};
        \node[anchor=south, font=\tiny] at (axis cs: .8, .66) {$(2/3,2/3)$};
        \end{axis}

      \end{tikzpicture}
    \caption{Utility possibility sets before (on the left) and after common cardinalization (on the right). The canonical utility associated to the compromise $x^*=1/3$ is denoted $\hat{u}(x^*)=(2/3,2/3)$.}
\label{fig:onedimex}
\end{figure}

To test the condition in the theorem, let $E = [0, 0.36]$ be a set where Agent 2 obtains their worst outcomes. Its mass is $\lambda(E) = 0.36$. The values of $x \in X$ where $u_2(x) \le 0.36$ are  $x \in [0, 0.1] \cup [0.9, 1]$. We evaluate Agent 1's utility on these  intervals to find the projection $\UPS(E)$:
\begin{enumerate}
    \item For $x \in [0, 0.1]$, $u_1(x) \in [0.99, 1]$.
    \item For $x \in [0.9, 1]$, $u_1(x) \in [0, 0.19]$.
\end{enumerate}
So,  $\UPS(E) = [0, 0.19] \cup [0.99, 1]$. Note that :
\[ \lambda(\UPS(E)) = 0.19 + (1 - 0.99) = 0.20 < \la(E)=0.36. \]
The condition $\lambda(E)\geq \lambda(\UPS(E))$ required by the theorem fails. The utility representations $u_1$ and $u_2$ are not a common cardinalization, for any measure on $X$.

Calculating canonical utility representations is quite easy. For Agent 1, $L_1(x) = \{z \in [0,1] : 1 - z^2 \le 1 - x^2\} = \{z \in [0,1] : z \ge x\} = [x, 1]$, we obtain that :
\[ \hat{u}_1(x) = \lambda([x, 1]) = 1 - x. \]
For Agent 2, $L_2(x) = \{z \in [0,1] : 1 - 4(z - 0.5)^2 \le 1 - 4(x - 0.5)^2\} = \{z \in [0,1] : |z - 0.5| \ge |x - 0.5|\}$.
This set excludes an open interval centered at $0.5$ with a radius of $d = |x - 0.5|$. Since the total domain is $[0,1]$, the excluded interval has a length of exactly $2d = 2|x - 0.5|$. The canonical utility is then:
\[ \hat{u}_2(x) = 1 - 2|x - 0.5| = 1 - |2x - 1|. \]

We substitute $x = 1 - \hat{u}_1$ into $\hat{u}_2(x)$ to parameterize the new utility possibility set $\hat{\UPS}$:
\[ \hat{u}_2 = 1 - |2(1 - \hat{u}_1) - 1| = 1 - |1 - 2\hat{u}_1| = 1 - |2\hat{u}_1 - 1|. \]
The set $\hat{\UPS}$ is an inverted V-shaped curve connecting $(0,0)$ to $(0.5, 1)$ to $(1,0)$. See the illustration on the right in Figure~\ref{fig:onedimex}.

We may now re-test the condition with $E = [0, 0.36]$.  Agent 2 achieves $\hat{u}_2(x) \le 0.36$ when:
\[ 1 - |2\hat{u}_1 - 1| \le 0.36 \implies |2\hat{u}_1 - 1| \ge 0.64 \]
This inequality holds when $2\hat{u}_1 - 1 \ge 0.64 \implies \hat{u}_1 \ge 0.82$, or $2\hat{u}_1 - 1 \le -0.64 \implies \hat{u}_1 \le 0.18$.  Now $\hat{\UPS}(E) = [0, 0.18] \cup [0.82, 1]$ with measure  $0.18 + (1 - 0.82) = 0.36$. So the condition $\lambda(E)\geq \lambda(\UPS(E))$ is satisfied.

The compromise alternative $x^*$ corresponds to the maximin of the canonical utilities $\hat{u_1}$ and $\hat{u_2}$. It turns out that $x^*$ is unique and equals $1/3$. Both players achieve a utility of $2/3$ at $x^*$ as illustrated on the right in Figure~\ref{fig:onedimex}.

In this example, the Markov kernel that we discussed in Section~\ref{sec:properties} is easy to calculate. Agent 2's utility $v_2$ is a deterministic function of $v_1$: $v_2 = f(v_1) = 2v_1$ on $[0,1/2]$ and $v_2 = f(v_1) = 2- 2v_1$ on $[1/2,1]$. Agent 1's utility is a random function of $v_2$: $v_1$ takes the values $v_2/2$ and $1-v_2/2$ with equal probability. Observe that these Markov kernels preserve the measure of any set of utils for one agent when mapped into a set of utils for the other agent. In this sense, they generalize the notion of the slope $=-1$ that we discussed in the introduction.

\subsection{A two-dimensional example}

Let $X = [0,1]^2$ and utilities be:
\begin{align*}
    u_1(x,y) = 1 - x \: \text{ and }    u_2(x,y) = y x^2.
\end{align*}

Both utilities map $X$ onto $[0,1]$. Because $u_1 = 1 - x$, we have $x = 1 - u_1$, which we may substitute into Agent 2's utility,
$u_2 = y (1 - u_1)^2$, and obtain that:
\[ \UPS = \{ (u_1, u_2) \in [0,1]^2 : 0 \le u_2 \le (1 - u_1)^2 \}. \]
The upper right-most boundary connecting $(0,1)$ to $(1,0)$ represents the Pareto frontier. 

Consider $E = [0.25, 1]$ with $\lambda(E) = 0.75$, the top $75\%$ of utility levels for Agent 2. In the Pareto frontier $u_2 = (1 - u_1)^2$; Agent 2 achieves $u_2 \ge 0.25$ when:
\[ (1 - u_1)^2 \ge 0.25 \implies 1 - u_1 \ge 0.5 \implies u_1 \le 0.5. \] So $\UPS(E) = [0, 0.5]$ with $\la(\UPS(E))= 1/2<0.75=\la(E)$. The condition fails.

We calculate the canonical utilities when $X$ is endowed with the Lebesgue measure. For Agent 1, the lower contour set at $x$ is the rectangle $[x, 1] \times [0, 1]$. So the canonical utility is :
\[ \hat{u}_1(x,y) = (1 - x) \cdot 1 = 1 - x .\]

For Agent 2,  let $c=u_2(x,y)=yx^2$ denote the utility level of 2 at $(x,y)$. The lower contour set $\{(x',y'):(y')(x')^2\leq c\}$ is bounded by $y' = 1$ for $x' \in [0, \sqrt{c}]$ and by the curve $y' = c / (x')^2$ for $x' \in [\sqrt{c}, 1]$. Integrating the area:
\begin{align*}
    \hat{u}_2(x,y) &= \int_0^{\sqrt{c}} 1 \, dx' + \int_{\sqrt{c}}^1 \frac{c}{(x')^2} \, dx' \\
    &= \sqrt{c} - c + \sqrt{c} = 2\sqrt{c} - c
\end{align*}
Substituting $c = yx^2$ yields the canonical utility: $\hat{u}_2(x,y) = 2x\sqrt{y} - yx^2$.

The Pareto frontier with the canonical utilities is easy to calculate:
\[ u_2 = 2(1 - u_1) - (1 - u_1)^2 = 2 - 2u_1 - (1 - 2u_1 + u_1^2) = 1 - u_1^2, \]
and the utility possibility set is bounded above by a concave parabola:
\[ \UPS = \{ (u_1, u_2) \in [0,1]^2 : 0 \leq u_2 \leq 1 - u_1^2 \} \]

We may check the condition in the theorem for the same set as before, $E = [0.25, 1]$. Agent 2 achieves $u_2 \ge 0.25$ when
\[ 1 - u_1^2 \ge 0.25 \implies u_1^2 \le 0.75 \implies u_1 \le \frac{\sqrt{3}}{{2}} \approx 0.866. \]
Then  $\UPS(E) = [0, 0.866]$, with a measure that exceeds $3/4=\la(E)$.

In the figure, we illustrate the original Pareto frontier as a red gray curve that lies below the diagonal. The new frontier (in solid black) is a concave curve. In both cases, the $\UPS$ is the locus of points below the Pareto frontier.

\begin{figure}[t]
    \centering
    \begin{tikzpicture}[scale=.8]
        \begin{axis}[
            width=0.7\textwidth,
            height=0.7\textwidth,
            xlabel={$u_1$},
            ylabel={$u_2$},
            xmin=0, xmax=1.1,
            ymin=0, ymax=1.1,
            axis lines=left,
                       grid style={line width=.1pt, draw=gray!30},
            major grid style={line width=.2pt,draw=gray!50},
            xtick={0, 0.25, 0.5, 0.75, 1},
            ytick={0, 0.25, 0.5, 0.75, 1},
            enlargelimits=false
        ]
        
                \fill[red, opacity=0.1] (0, 0.25) rectangle (1, 1);
        
                \addplot [domain=0:1, samples=100, fill=green!50!black, opacity=0.1, draw=none] {1 - x^2} \closedcycle;
        
                \addplot [domain=0:1, samples=100, very thick, red, dashed] {(1-x)^2};
        
                \addplot [domain=0:1, samples=100, very thick, black] {1 - x^2};
        
                \addplot [domain=0:1, samples=2, thick, black, dotted] {1 - x};
        
                \draw [very thick, black] (axis cs: 0, 0) -- (axis cs: 0, 1);
        
                \draw [very thick, black] (axis cs: 0, 0) -- (axis cs: 1, 0);

                \node[red, anchor=west,font=\footnotesize] at (axis cs: 0.7, 0.6) {$E=[0.25,1]$};
        \node[red, anchor=south west,font=\footnotesize] at (axis cs: 0.1, 0.07) {$u_2=(1-u_1)^2$};
               \node[red, anchor=south west,font=\footnotesize] at (axis cs: 0.1, 0.15) {Pareto frontier with $u$};
        \node[black, anchor=south west,font=\footnotesize] at (axis cs: 0.5, 0.8) {Pareto Frontier with $\hat{u}$};
        \node[black, anchor=south west,font=\footnotesize] at (axis cs: 0.5, 0.72) {$u_2=1-u_1^2$};

        \end{axis}
    \end{tikzpicture}
    \caption{The utility possibility space for the original and canonical utilities. The Pareto frontier at $u$ is $u_2=(1-u_1)^2$.}
\label{fig:2dex}
\end{figure}

In the example, one may wonder if the area below the ``anti-diagonal,'' the dotted line that connects $(0,1)$ and $(1,0)$, is also the $\UPS$ for some canonical utility. The answer is yes, but not for full support, absolutely continuous measure. Suppose that the measure on $X$ is $\nu$. Note that, because the marginal on each dimension must be uniform, the expectation of each agent's utility under the push-forward measure obtained from $\nu$ must be uniform. So $\E_{x\in \nu} [\tilde v_i] = 1/2$ for each agent $i$. If the $\UPS$ is the area below the anti-diagonal, then with probability $1$ we must have $\tilde v_1+\tilde v_2 = 1$. This means that $\nu$ has to be supported on the set of Pareto optimal outcomes, which here has Lebesgue measure 0. It also means that $\nu$ is not full support. 

In sum, the example illustrates how Theorem~\ref{thm:commonnu} delivers an ``atomless'' measure (since the marginals must be uniform), but not necessarily a measure under the hypotheses of Theorem~\ref{th:implementation}. In any case, the additional hypotheses are needed for the techniques used in the proof of Theorem~\ref{th:implementation}. They are not essential to understanding the economic content of assuming a common cardinalization, which is why we are interested in Theorem~\ref{thm:commonnu}. 

\end{document}